\renewcommand\footnotetextcopyrightpermission[1]{}
\let\@authorsaddresses\@empty
\Crefname{algocf}{Algorithm}{Algorithms}
\lstdefinelanguage{zelus}
   {morekeywords={
	let,in,rec,where,end,if,then,else,do,done,run,%
	open,%
	fun,node,hybrid,%
	match,with,automaton,emit,%
	pre,when,whenot,fby,merge,on,clock,%
	or,and,not,as,mod,%
	unless,until,continue,reset,every,await,%
	init,der,type,last,%
  period,local,present,%
  sample, observe, factor, infer, proba,%
  val,%
  APF,%
  },
    sensitive=true,
    morecomment=[n]{(*}{*)},
    morestring=[b]",
    escapechar=\%,
    columns=fullflexible,
    keepspaces=true,
    basicstyle=\ttfamily,
    mathescape=true,
    }
\def\zl{\lstinline[basicstyle=\normalsize\ttfamily]}
\DeclareSymbolFont{largesymbolsstix}{LS2}{stixex}{m}{n}
\DeclareMathDelimiter{\lBrace}{\mathopen} {largesymbolsstix}{"E8}{largesymbolsstix}{"0E}
\DeclareMathDelimiter{\rBrace}{\mathclose}{largesymbolsstix}{"E9}{largesymbolsstix}{"0F}
\DeclareMathDelimiter{\rParen}{\mathclose}{largesymbolsstix}{"DF}{largesymbolsstix}{"03}
\DeclareMathDelimiter{\lParen}{\mathopen} {largesymbolsstix}{"DE}{largesymbolsstix}{"02}
\newtheorem{theorem}{Theorem}[section]
\newtheorem{lemma}{Lemma}[section]
\theoremstyle{definition}
\newtheorem*{example*}{Example}
\theoremstyle{remark}
\newtheorem*{remark*}{Remark}
\newenvironment{sketch}{\textsc{Proof Sketch.}}{\hfill$\square$}
\title{Density-Based Semantics for Reactive Probabilistic Programming}
\author{Guillaume Baudart}
\affiliation{
    \institution{ENS -- PSL University -- CNRS -- Inria}
    \country{France}}
\author{Louis Mandel}
\affiliation{
    \institution{IBM Research}
    \country{USA}}
\author{Christine Tasson}
\affiliation{
    \institution{ISAE Supaero}
    \country{France}}
\begin{abstract}
Synchronous languages are now a standard industry tool for critical embedded systems.
Designers write high-level specifications by composing streams of values using block diagrams.
These languages have been extended with Bayesian reasoning to program state-space models which compute a stream of distributions given a stream of observations.
However, the semantics of probabilistic models is only defined for scheduled equations -- a significant limitation compared to dataflow synchronous languages and block diagrams which do not require any ordering.

In this paper we propose two schedule agnostic semantics for a probabilistic synchronous language.
The key idea is to interpret probabilistic expressions as a stream of un-normalized density functions which maps random variable values to a result and positive score.
The co-iterative semantics interprets programs as state machines and equations are computed using a fixpoint operator.
The relational semantics directly manipulates streams and is thus a better fit to reason about program equivalence.
We use the relational semantics to prove the correctness of a program transformation required to run an optimized inference algorithm for state-space models with constant parameters.

 \end{abstract}
\begin{document}
\maketitle

\newenvironment{stack}{\begin{array}[t]{@{}l@{}}}{\end{array}}

\newcommand{\normal}[2]{\ensuremath{\mathcal{N}(#1, #2)}}
\newcommand{\typeof}[1]{\ensuremath{\mathit{typeOf}(#1)}}

\newcommand{\ttf}[1]{\ensuremath{{\texttt{#1}}}}
\newcommand{\mit}[1]{\ensuremath{{\ \mathit{#1} \ }}}
\newcommand{\is}{\ensuremath{\leftarrow}}
\newcommand{\dom}[1]{\ensuremath{\mathit{dom}(#1)}}
\newcommand{\im}[1]{\ensuremath{\mathit{im}(#1)}}
\newcommand{\kwf}[1]{\ensuremath{\mathtt{\ensuremath{\color{blue!50!black}{#1}}}}}
\newcommand{\op}{\ensuremath{\mathit{op}}}
\newcommand{\attr}[2]{\ensuremath{#1.\texttt{#2}}}
\newcommand{\None}{\ensuremath{\mathit{None}}}
\newcommand{\nil}{\ensuremath{\mathit{nil}}}
\newcommand{\Some}[1]{\ensuremath{\mathit{Some}\ #1}}
\newcommand{\RV}{\ensuremath{\mathrm{RV}}}
\newcommand{\CRV}{\ensuremath{\mathrm{CRV}}}

\newcommand{\fun}[1]{\ensuremath{\lambda #1. \;}}
\newcommand{\letin}[1]{\mathit{let} \; #1 \; \mathit{in} \;}
\newcommand{\fix}[1]{\ensuremath{\mathit{fix}(#1)}}
\newcommand{\norm}[1]{\ensuremath{\mathit{norm}(#1)}}

\newcommand{\ifthen}[3]{\ensuremath{
    \mathit{if} \; #1 \; 
    \begin{array}[t]{@{}l@{}l@{}}
        \mathit{then} \; &  \begin{array}[t]{@{}l} #2 \end{array}\\
        \mathit{else} \; &  \begin{array}[t]{@{}l} #3 \end{array}
    \end{array}}}

\newcommand{\optionmatch}[4]
{\ensuremath{
\begin{stack}
    \mathit{match} \ #1\  \mathit{with}\\ 
    \ \ \begin{array}[t]{@{}l@{\ }l@{\ }l@{}} 
        | \ \None &\to& #2\\
        | \ \Some{#3} &\to& #4
    \end{array}
\end{stack}}}

\newcommand{\sem}[1]{\left\llbracket #1 \right\rrbracket}
\newcommand{\psem}[1]{\left\lBrace #1 \right\rBrace}
\newcommand{\dsem}[1]{\left\lParen #1 \right\rParen}
\newcommand{\fsem}[1]{\left\lbag #1 \right\rbag}
\newcommand{\statetype}{\ensuremath{S}}
\newcommand{\valuetype}{\ensuremath{V}}

\newcommand{\zldef}[2]{\ensuremath{\kwf{let}\ #1\ \ttf{=}\ #2}}
\newcommand{\zlnode}[3]{\ensuremath{\kwf{node}\ #1\ #2\ \ttf{=}\ #3}}
\newcommand{\zlproba}[3]{\ensuremath{\kwf{proba}\ #1\ #2\ \ttf{=}\ #3}}
\newcommand{\zlprob}{\ensuremath{\ttf{prob}}}
\newcommand{\zlpair}[2]{\ensuremath{\ttf{(}#1\ttf{,}#2\ttf{)}}}
\newcommand{\zlop}[1]{\ensuremath{\op\ttf{(}#1\ttf{)}}}
\newcommand{\zlunit}[1]{\ensuremath{\ttf{()}}}
\newcommand{\zlapp}[2]{\ensuremath{#1\ttf{(}#2\ttf{)}}}
\newcommand{\zllast}[1]{\ensuremath{\kwf{last}\ #1}}
\newcommand{\zlwhere}[2]{\ensuremath{#1\ \kwf{where\ rec}\ #2}}
\newcommand{\zlletin}[3]{\ensuremath{\kwf{let} \ #1 \ \ttf{=} \ #2 \ \kwf{in} \ #3}}
\newcommand{\zlpresent}[3]{\ensuremath{\kwf{present}\ #1\ \to \ #2\ \kwf{else}\ #3}}
\newcommand{\zlreset}[2]{\ensuremath{\kwf{reset}\ #1\ \kwf{every}\ #2}}
\newcommand{\zlsample}[2]{\ensuremath{\kwf{sample}\ttf{(}#2\ttf{)}}}
\newcommand{\zlobserve}[2]{\ensuremath{\kwf{observe}\ttf{(}#1\ttf{,}#2\ttf{)}}}
\newcommand{\zlfactor}[1]{\ensuremath{\kwf{factor}\ttf{(}#1\ttf{)}}}
\newcommand{\zlinfer}[1]{\ensuremath{\kwf{infer}\ttf{(}#1\ttf{)}}}
\newcommand{\zleq}[2]{\ensuremath{#1\ \ttf{=}\ #2}}
\newcommand{\zlinit}[2]{\ensuremath{\kwf{init}\ #1\ \ttf{=}\ #2}}
\newcommand{\zland}[2]{\ensuremath{#1\ \kwf{and}\ #2}}
\newcommand{\zlisample}[3]{\zlinit{#1}{\zlsample{#2}{#3}}}

\newcommand{\zlapfinfer}[5]{\ensuremath{\kwf{APF.infer}\ttf{(}#3^{#1}\ttf{,} \ #4^{#2} \ttf{,}\ #5 \ttf{)}}}
\newcommand{\zlfmodel}[1]{\ensuremath{\attr{#1}{model}}}
\newcommand{\zlfprior}[1]{\ensuremath{\attr{#1}{prior}}}

\newcommand{\Stream}[1]{\ensuremath{#1^\omega}}

\newcommand{\const}[1]{\ensuremath{\mathit{const} \ #1}}
\newcommand{\hd}[1]{\ensuremath{\mathit{hd} \ #1}}
\newcommand{\tlOp}{\ensuremath{\mathit{tl}}}
\newcommand{\tl}[1]{\ensuremath{\tlOp \ #1}}
\newcommand{\mapOp}{\ensuremath{\mathit{map}}}
\newcommand{\map}[2]{\ensuremath{\mapOp \ #1 \ #2}}
\newcommand{\zip}[2]{\ensuremath{\mathit{zip} \ #1 \ #2}}
\newcommand{\mergeOp}{\ensuremath{\mathit{merge}}}
\newcommand{\merge}[3]{\ensuremath{\mergeOp \ #1 \ #2 \ #3}}
\newcommand{\whenOp}{\ensuremath{\mathit{when}}}
\newcommand{\when}[2]{\ensuremath{#1\ \whenOp \ #2}}
\newcommand{\slicerOp}{\ensuremath{\mathit{slicer}}}
\newcommand{\slicer}[2]{\ensuremath{\slicerOp\ #1 \ #2}}
\newcommand{\cumprod}[2]{\ensuremath{\mathit{cumprod} \ #1 \ #2}}
\newcommand{\pdf}[2]{\ensuremath{\mathit{pdf}_{#1}(#2)}}
\newcommand{\sample}[2]{\ensuremath{\mathit{icdf}_{#1}(#2)}}
\newcommand{\integ}[3]{\ensuremath{\mathit{integ}_{#1} \ #2 \ #3 }}

\newcommand{\dist}[1]{#1~\textrm{dist}}

\newcommandx{\cosem}[6][6= ]
{\ensuremath{\begin{array}{lll}
    \semi{#1}_{#2} &=& \begin{stack} #3 \end{stack}\\
    \sems{#1}_{#2}(#4) &=& #6 \begin{stack} #5 \end{stack}
\end{array}}}

\newcommand{\pcosem}[5]
{\ensuremath{\begin{array}{lll}
    \psemi{#1}_{#2} &=& \begin{stack} #3 \end{stack}\\
    \psems{#1}_{#2}(#4) &=& \begin{stack} #5 \end{stack}
\end{array}}}

\newcommand{\semi}[1]{\sem{#1}^{\textrm{init}}}
\newcommand{\sems}[1]{\sem{#1}^{\textrm{step}}}
\newcommand{\semdi}[1]{\sem{#1}^{\textsc{d}\,\textrm{init}}}
\newcommand{\semds}[1]{\sem{#1}^{\textsc{d}\,\textrm{step}}}

\newcommand{\psemi}[1]{\psem{#1}^{\textrm{init}}}
\newcommand{\psems}[1]{\psem{#1}^{\textrm{step}}}

\newcommand{\dsemi}[1]{\dsem{#1}^{\textrm{init}}}
\newcommand{\dsems}[1]{\dsem{#1}^{\textrm{step}}}
\newcommand{\dssems}[2]{\dsem{#1}_{#2}^{\textsc{s}\,\textrm{step}}}

\newcommand{\fsems}[1]{\fsem{#1}^{\textrm{step}}}

\newcommand{\rsemexp}[3]{#1 \vdash #2 \downarrow #3}
\newcommand{\rsemeq}[2]{#1 \vdash #2}
\newcommand{\prsemexp}[3]{#1 \vdash #2 \Downarrow #3}
\newcommand{\prsemeq}[3]{#1 \vdash #2 : #3}

\newcommand{\vdashc}[3]{\ensuremath{{#1} \vdash {#2} : {#3}}}
\newcommand{\vdashceq}[4]{\ensuremath{{#1}, {#2} \vdash {#3} : {#4}}}
\newcommand{\vdashl}[2]{\ensuremath{{#1} \vdash^c {#2}}}
\newcommand{\vdashleq}[3]{\ensuremath{{#1} \vdash^c {#2} : {#3}}}
\newcommand{\notvdashl}[2]{\ensuremath{{#1} \not\vdash^c {#2}}}

\newcommand{\apfenv}{\ensuremath{\Phi}}
\newcommand{\apftype}{\ensuremath{\phi}}

\newcommand{\compile}[2]{\mathcal{C}_{#1}(#2)}
\newcommand{\flatten}[1]{\mathcal{F}(#1)}

\section{Introduction}

Synchronous programming languages~\cite{synchronous-twelve-years-later} were introduced for the design of critical embedded systems.
In dataflow languages such as Lustre~\cite{lustre:ieee91}, system designers write high-level specifications by composing infinite streams of values, called \emph{flows}.
Flows progress \emph{synchronously}, paced on a global logical clock.
The expressiveness of synchronous languages is deliberately restricted.
Specialized compilers can thus generate efficient and correct-by-construction embedded code with strong guarantees on execution time and memory consumption.
This approach was inspired by block diagrams, a popular notation to describe control systems.
Built on these ideas, Scade~\cite{lucy:tase17} is now a standard tool in automotive and avionic industries.

Probabilistic languages~\cite{BinghamCJOPKSSH19,goodman_stuhlmuller_2014,pmlr-v84-ge18b,gen} extend general purpose programming languages with probabilistic constructs for Bayesian inference.
Following a Bayesian approach, a program describes a probability distribution, the \emph{posterior} distribution, using initial beliefs on random variables, the \emph{prior} distributions, that are conditioned on observations.

At the intersection between these two lines of research, ProbZelus~\cite{rppl-short} is a probabilistic extension of the synchronous dataflow language Zelus~\cite{lucy:hscc13}.
ProbZelus combines, in a single source program, deterministic controllers and probabilistic models that can interact with each other to perform \emph{inference-in-the-loop}.
A classic example is the \emph{Simultaneous Localization and Mapping problem}~(SLAM)~\cite{Montemerlo02-fastslam} where an autonomous agent tries to infer both its position and a map of its environment to adapt its trajectory.

The probabilistic model of the SLAM involves two kinds of parameters.
The position is a \emph{state parameter} represented by a stream of random variables. 
At each instant, a new position must be estimated from the previous position and the observations.
The map is a \emph{constant parameter} represented by a random variable whose value is progressively refined from the \emph{prior} distribution with each new observation.
This type of problem mixing constant parameters and state parameters are instances of \emph{State-Space Models}~(SSM)~\cite{chopin2020SMC}.
Any ProbZelus program can be expressed as a SSM.

\paragraph{Probabilistic semantics and scheduling}
ProbZelus semantics~\cite{rppl-short} is defined in a co-iterative framework where expressions are interpreted as state machines.
Following~\cite{kozen81,staton17}, a probabilistic expression computes a stream of \emph{measures}.
The semantics of an expression with a set of local declarations integrates the semantics of the main expression over all possible values of the local variables.
Unfortunately, this semantics yields nested integrals that are only well defined if the declarations are \emph{scheduled}, i.e., ordered according to data dependencies.

This is a significant limitation compared to synchronous dataflow languages where sets of mutually recursive equations can be written in any order.
Besides, the compiler implements a series of source-to-source transformations which often introduces new variables in arbitrary order.
Scheduling local declarations is one of the very last compilation passes~\cite{lucy:hscc13}. 
The semantics of ProbZelus is thus far from what is exposed to the programmer, and more importantly, prevents reasoning about most program transformations and compilation passes.

In this paper, we show how to extend the schedule agnostic semantics of dataflow synchronous languages~\cite{pouzet-cmcs98, BourkeBDLPR17} for probabilistic programming.
The key idea is to interpret a probabilistic expression as a stream of un-normalized density functions which map random variable values to a result and a positive score.

\paragraph{Contributions}
In this paper, we present the following contributions:
\begin{itemize}
\item We introduce in \Cref{sec:coit} a new density-based co-iterative semantics and show that sets of mutually recursive equations in arbitrary order can be interpreted using a fixpoint operator.
We prove that this semantics is equivalent to the original ProbZelus semantics.
\item We introduce in \Cref{sec:rel} an alternative relational semantics which abstracts away the state machines and directly manipulates streams which simplifies reasoning about program equivalence.
We prove that this semantics is equivalent to the co-iterative semantics.
\item We define in \Cref{sec:apf} a program transformation required to run an optimized inference algorithm for state-space models with constant parameters.
We use the relational semantics to prove the correctness of the transformation. 
\end{itemize}

\section{Example}
\label{sec:example}

To motivate our approach, consider the ProbZelus model of \Cref{fig:tracker} adapted from~\cite{chopin2020SMC}[Section~2.4.1].
The goal is to estimate at each instant the position of a moving boat given noisy observations from a marine radar.
A rotating antenna sweeps a beam of microwaves and detects the boat when the beam is reflected back to the antenna.
The radar then estimates the position from noisy measurements of its angle and the echo delay.

\begin{figure}
\begin{subfigure}[b]{0.55\textwidth}
\begin{lstlisting}[basicstyle=\small\ttfamily,xleftmargin=0pt, numbers=left,numberblanklines=false]
proba tracker(y_obs) = x where            $\label{ex:line:trackerdef}$
  rec init x = x_init                                   $\label{ex:line:init}$
  and x = sample(gaussian(f(last x), sx))               $\label{ex:line:f}$
  and y = g(x)                                          $\label{ex:line:g}$
  and () = observe(gaussian(y, sy), y_obs)              $\label{ex:line:obs}$

node main(y_obs) = msg where
  rec x_dist = infer (tracker (y_obs))  $\label{ex:line:infer}$
  and msg = controller(x_dist)            $\label{ex:line:controller}$
\end{lstlisting}
\end{subfigure}
\begin{subfigure}[b]{0.37\textwidth}
\includegraphics[scale=0.35]{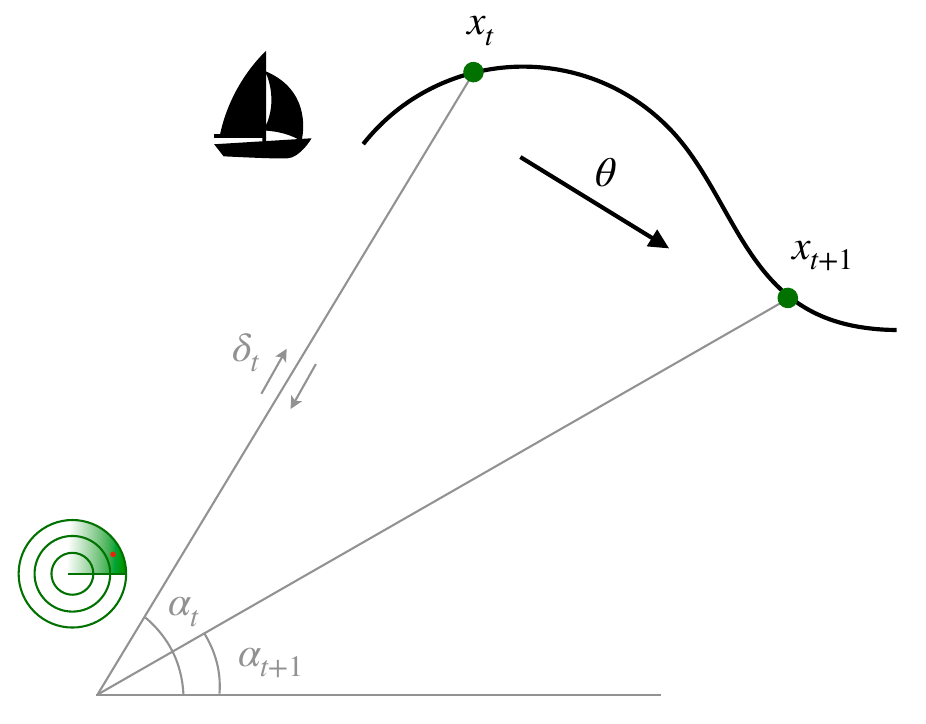}
\end{subfigure}
\caption{Tracking a moving boat with a marine radar in ProbZelus.
$x_t$ is the position of the boat in Cartesian coordinates.
We assume a linear motion model $f(x_{t-1}) = x_{t-1} + \theta$, and $g(x_t) = (\alpha_t, \delta_t)$ returns the radar angle $\alpha_t = \mathrm{atan}(x_t[1]/x_t[0])$, and the echo delay $\delta_t = 2 * \lVert x_t \rVert / c$ where $c$ is speed of light.}
\label{fig:tracker}
\end{figure}

The keyword \zl{proba} indicates the definition of a probabilistic stream function.
Line~\ref{ex:line:trackerdef}, the model \zl{tracker} takes as input a stream of observations \zl{y_obs} and returns a stream of positions \zl{x}.
The model uses a transition function \zl{f} to estimate the current position (e.g., using a linear motion model), and a projection function \zl{g} to compute the observable quantities from the state (e.g., angle and echo delay).
Line~\ref{ex:line:f} uses the \zl{sample} operator to specify that \zl{x} is Gaussian distributed around \zl{f(last x)}. \zl{last x} refers to the previous position of the boat initialized Line~\ref{ex:line:init} with the \zl{init} keyword.
Line~\ref{ex:line:obs} uses the \zl{observe} operators to condition the model assuming that the observations \zl{y_obs} are Gaussian distributed around~\zl{y = g(x)}.
The initial position \zl{x_init} and the noise parameters \zl{sx} and \zl{sy} are global constants.

\subsection{Kernel-based co-iterative semantics}
ProbZelus original semantics~\cite{rppl-short} is a co-iterative semantics where expressions are interpreted as state machines characterized by an initial state and a transition function.
Given the current state, the transition function of deterministic expression returns the next state and a value.
Following~\cite{staton17}, the transition function of a probabilistic expression returns a \emph{measure} over all possible pairs (next state, value).

For instance, if \zl{f} and \zl{g} are deterministic and stateless, the transition function of \zl{tracker} is the following (omitting empty states for stateless expressions) where~$\mathcal{N}$ is the normal distribution, $\delta$~the Dirac delta distribution, and $\mathit{pdf}_{d}$ the probability density function of~$d$.
 \begin{small}
\begin{align}
\psems{\ttf{tracker(y\_obs)}}_\gamma(p_x) 
&= \int \normal{f(p_x)}{s_x}(dx) 
   \int \delta_{g(x)}(dy) \ 
   \pdf{\normal{y}{s_x}}{y_{\rm obs}}
   * \delta_{x, x} \nonumber\\
&= \int
    \normal{f(p_x)}{s_x}(dx)\ 
    \pdf{\normal{g(x)}{s_x}}{y_{\rm obs}}
    * \delta_{x, x} \label{eq:tracker:sch}
\end{align}
\end{small}

\noindent
Given the current state $p_x$ (the previous value of \zl{x}), 
the transition function integrates over all possible values for \zl{x}, then all possible values for \zl{y}, weights each execution by the likelihood of the observation (i.e., the value of the density function on \zl{y_obs}), and returns a measure over the new state $x$ (that will be used as the previous position in the next step), and the results $x$.

\paragraph{Inference}
In ProbZelus, the explicit \zl{infer} operator computes the stream of distributions described by a model. 
The state of \zl{infer} is a measure over current states.
The transition function integrates the semantics of the model over all possible states, normalizes the resulting measure to obtain a distribution, and splits it into a distribution of next states and a distribution of values.
The runtime iterates this process from a distribution of initial states to compute a stream of distributions.

In ProbZelus, inference can be composed with deterministic control to perform \emph{inference-in-the-loop}.
For instance, Line~\ref{ex:line:controller}, the distribution \zl{x_dist} is exploited by a controller, e.g., to send a message to guide the boat.

\paragraph{Scheduling}
Local declarations such as \zl{x} and \zl{y} yield nested integrals in \Cref{eq:tracker:sch} that are only well-defined if these definitions are ordered according to data dependencies.
The ProbZelus semantics defined in~\cite{rppl-short} thus focuses on a  kernel language where local declarations are all \emph{scheduled}.
But imposing a valid schedule is a significant limitation compared to synchronous dataflow languages which manipulate mutually recursive equations in arbitrary order.

\subsection{Density-based co-iterative semantics}
In this paper we first propose a new density-based co-iterative semantics for ProbZelus inspired by the semantics of the popular probabilistic language Stan~\cite{GorinovaGS19} where a program defines an un-normalized density function over the random variables.
Instead of manipulating measures via integration, probabilistic expressions are now similar to deterministic expressions, but the transition function now takes one additional argument --- a random seed for all random variables ---  and returns one additional output --- a positive score, or weight, which measures the quality of the output w.r.t. the model.

\medskip
On the example of \Cref{fig:tracker} we have:
\begin{small}
\begin{align}
\dsems{\ttf{tracker(y\_obs)}}_\gamma(p_x, r) \  = \ 
\begin{stack}
\letin{\mu_x = \normal{f(p_x)}{s_x}}\\
\letin{x = \sample{\mu_x}{r}}\\
\letin{\mu_y = \normal{g(x)}{s_y}}\\
x, x, \pdf{\mu_y}{y_{\rm{obs}}}
\end{stack} \label{eq:tracker:dens}
\end{align}
\end{small}

\noindent
The additional argument $r$ corresponds to the random seed for the \zl{sample} operator, i.e., an element of $[0, 1]$ that is mapped to a sample of a distribution $d$ using inverse transform sampling~\cite{devroye2006nonuniform}.
The model computes the sample $x$ associated to the random seed ($\mathit{icdf}_d$ is the inverse of the cumulative distribution function of $d$), and returns the new state $x$, the result~$x$, and a weight which captures the likelihood of the observation (the density of the distribution $\normal{g(x)}{s_y}$ at $y_{\rm obs}$).

\paragraph{Inference}
At each step, the \zl{infer} operator first computes the un-normalized measure which associates each pair (state, result) to its weight, i.e., for a model~$e$, if $\dsem{e}_\gamma(m, r) = m', v, w$, $\zlinfer{e}$ computes the measure $\int_{[0, 1]^p} w * \delta_{m', v}\ dr$ where $p$ is the number of random variables in $e$.
On the example of \Cref{fig:tracker}, we can check that this measure corresponds to the original semantics of \Cref{eq:tracker:sch}.
\begin{small}
\begin{align*}
\int_{[0, 1]} \letin{x, x, w = \dsems{\ttf{tracker(y\_obs)}}_\gamma(p_x, r)} w * \delta_{x, x} \ dr \qquad \qquad\\
\begin{aligned}
& = \int_{[0, 1]} \letin{x = \sample{\normal{f(p_x)}{s_x}}{r}}
    \ \pdf{\normal{g(x)}{s_y}}{y_{\rm obs}}
    * \delta_{x, x}\ dr \\
& = \int \normal{f(p_x)}{s_x}(dx) \ \pdf{\normal{g(x)}{s_y}}{y_{\rm obs}}
    * \delta_{x, x}
\end{aligned}
\end{align*}
\end{small}

\noindent
The semantics of \zl{infer} is then similar to its interpretation in the original kernel-based semantics, i.e., 1)~integrate over all possible state, 2)~normalize the measure, 3)~split the result into a distribution of next states and a distribution of values.
We prove in \Cref{sec:coit:inf} that this semantics is equivalent to the kernel-based semantics, i.e., the \zl{infer} operator yields the same stream of distributions.

\paragraph{Mutually recursive equations}
The original co-iterative semantics for dataflow synchronous languages~\cite{pouzet-cmcs98} interprets mutually recursive equations in arbitrary order with a fixpoint operator in a flat complete partial order~(CPO) where variables are either undefined or set to a value.
In the density-based semantics, the transition functions of probabilistic equations are similar to their deterministic counterparts with additional inputs/outputs.
Compared to the kernel-based semantics, there are no longer nested integrals and a fixpoint operator can be defined to interpret sets of probabilistic equations.

Consider a variant of the example of \Cref{fig:tracker} where we swap Line~\ref{ex:line:f} and Line~\ref{ex:line:g}.
For a state $p_x$ and a random seed $r$, the semantics of the local declarations in \zl{tracker} is the fixpoint of the following function $F$ starting from the least element $[x \is \bot, y \is \bot]$.

\bigskip
\begin{small}
\centering
\begin{tabular}{p{0.49\textwidth}p{0.49\textwidth}}
\(
F(\rho) \  = \ 
\begin{stack}
\letin{y = g (\rho(\ttf{x}))}\\
\letin{\mu_x = \normal{f(p_x)}{s_x}}\\
\letin{x = \sample{\mu_x}{r}}\\
{[\ttf{x} \is x, \ttf{y} \is y]}
\end{stack}
\)
&
\(
\begin{array}[t]{lcl}
\rho_0 &=& [x \is \bot, y \is \bot]\\
\rho_1 &=& [x \is \sample{\mu_x}{r}, y \is \bot]\\
\rho_2 &=& [x \is \sample{\mu_x}{r}, y \is g(\sample{\mu_x}{r})]\\
\rho_3 &=& [x \is \sample{\mu_x}{r}, y \is g(\sample{\mu_x}{r})]
\end{array}
\)
\end{tabular}
\end{small}

\bigskip

\noindent
The fixpoint converges after~$3$ iterations.
Using the resulting environment, the semantics of \zl{tracker} then computes the next state, the result, and the weight which, after simplification, yields the same results as \Cref{eq:tracker:dens}.

\begin{small}
\begin{align}
\dsems{\ttf{tracker(y\_obs)}}_\gamma(p_x, r) \  = \ 
\begin{stack}
\letin{\rho = [x \is \sample{\mu_x}{r}, y \is g(\sample{\mu_x}{r})]}\\
\letin{x = \rho(\ttf{x})}\\
\letin{\mu_y = \normal{\rho(\ttf{y})}{s_y}}\\
\rho(\ttf{x}), \rho(\ttf{x}), \pdf{\mu_y}{y_{\rm{obs}}}
\end{stack}
\end{align}
\end{small}

\paragraph{Program equivalence}
Since deterministic expressions are interpreted as state machines, to prove program equivalence one must exhibit a bisimulation~\cite{Park1981}, i.e., a relation between the states of the two state machines.
Two deterministic expressions are equivalent if there exists a relation such that 1) the initial states are in relation, and 2) given two states in relation the transition function produces new states in relation and the same output.
The proof is done by unfolding the definition of the transition function.

Two probabilistic expressions are equivalent if they describe the same stream of measures.
Unfortunately, compared to deterministic expressions, the type of probabilistic expressions is asymmetric: given a state, the transition function describes a measure over pairs (next state, value).
The corresponding stream of measures is obtained by integrating at each step the transition function over all possible states computed at the previous step.
The bisimulation must thus relate measures of states through successive integrations, which is significantly harder to define and to check than in the deterministic case.

\subsection{Density-based relational semantics}
An alternative is to directly manipulate streams of values.
This is the approach used in the Vélus project\footnote{\url{https://velus.inria.fr}} to prove an end-to-end compiler for the dataflow synchronous language Lustre~\cite{BourkeBDLPR17, BourkeBP20, BourkeJPP21}.
The semantics of a stream function is defined as a relation between input streams and output streams.
In Vélus, most of the compilation passes are proven correct using this \emph{relational semantics}.
The translation to state machines is one of the very last passes and focuses on a normalized, scheduled subset of the language.

In this paper, we extend this relational semantics to probabilistic streams.
The key idea is to lift the density-based semantics to streams.
Given an environment $H$ mapping variable names to streams of values, and an array $R$ of random streams, the semantics of an expression returns a stream of pairs (value, weight): $H, R \vdash e \Downarrow (v, w)$.

In the example of \Cref{fig:tracker}, $R$ is a single stream of independent random elements $R_0 \cdot R_1 \cdot R_2 \cdot ...$ in $[0, 1]$, we can interpret \zl{tracker} in an environment that contains the observations \zl{y_obs}:

\begin{small}
\[
\begin{stack}
\begin{array}[t]{@{}rc@{~}c@{~}c@{~}c@{~}c@{~}c@{~}c@{~}c@{~}c@{~}c@{~~}c}
{[\ttf{y\_obs} \is y_{\rm obs}]}, R\ \vdash\ \ttf{tracker(y\_obs)}
\!\!\!\!\!\!\!\!&&\Downarrow& (x_0, w_0) &\cdot&
    (x_1, w_1) &\cdot&
    (x_2, w_2) &\cdot& ...\\[0.7em]
\mit{where} &\mu_x &=& 
        \normal{f(x_{\rm init})}{s_x} &\cdot& 
        \normal{f(x_0)}{s_x} &\cdot& 
        \normal{f(x_1)}{s_x} &\cdot&...\\
&x &=& 
        \sample{{\mu_x}_0}{R_0} &\cdot& 
        \sample{{\mu_x}_1}{R_1} &\cdot& 
        \sample{{\mu_x}_2}{R_2} &\cdot&...\\
&y &=&
        g(x_0) &\cdot& 
        g(x_1) &\cdot& 
        g(x_2) &\cdot& ...\\
&\mu_y &= &
        \normal{y_0}{s_x} &\cdot& 
        \normal{y_1}{s_x} &\cdot&
        \normal{y_2}{s_x} &\cdot& ...\\
&w &=&
    \pdf{{\mu_y}_0}{{y_{\rm{obs}}}_0} &\cdot&
    \pdf{{\mu_y}_1}{{y_{\rm{obs}}}_1} &\cdot&
    \pdf{{\mu_y}_2}{{y_{\rm{obs}}}_2} &\cdot& ... \\
\end{array}
\end{stack}
\]
\end{small}

\noindent
The semantics now directly manipulates streams.
At each step, the result is similar to the expression in \Cref{eq:tracker:dens}, but states are abstracted away. 
The result is a stream of pairs (value, weight).

\paragraph{Inference}
The semantics of \zl{infer} now operates on a stream of pairs (value, weight): $(v_0, w_0) \cdot (v_1, w_1) \cdot (v_2, w_2) \cdot ...$.
The \zl{infer} operator 1)~associates to each value $v_k$ the total weight of its prefix using a cumulative product $\overline{w_k} = \Pi_{i =0}^k w_i$, 2)~computes the un-normalized measure which associates each pair (state, result) to its total weight, and 3)~normalizes it to obtain a distribution of values.
The key difference with the density-based co-iterative semantics is that the integral is now over the infinite domain of streams.
We prove in \Cref{sec:sem:rel:inf} that this semantics is equivalent to the co-iterative density-based semantics, i.e., the \zl{infer} operator yields the same stream of distributions.

\paragraph{Mutually recursive equations}
Given the random streams $R$, the semantics of a set of probabilistic equations $H, R \vdash E : W$ checks that an environment mapping variable names to stream of values $H$ is compatible with all the equations in $E$, and that the combined weight of all sub-expressions is the stream $W$.
Since variables in an environment are not ordered, there is nothing special to do to interpret mutually recursive equations.
By construction the order of equations does not matter which greatly simplifies reasoning about compilation passes that introduce new equations in arbitrary order.
Of course, compared to the co-iterative semantics, the relational semantics is not executable since equations are only checked a posteriori for a given environment.

\paragraph{Program equivalence}
In the relational semantics, states are abstracted away and a probabilistic expression computes a stream of pairs (value, weight) where each element only depends on the random streams.
Two probabilistic expressions are equivalent if they describe the same stream of measures obtained by integrating at each step the result of the relational semantics over all possible random streams.
Since, random streams are uniformly distributed, if we can map the random streams of one expression to the random streams of the other, program equivalence can be reduced to the comparison of the streams of pairs (value, weight) computed by each expression.

\section{Background}
\label{background}

In this section we briefly summarize the key elements of the co-iterative semantics of ProbZelus.
Importantly, this semantics is only defined if all equations are ordered according to data dependencies.
We then recall the original co-iterative semantics of synchronous dataflow languages where sets of mutually recursive equations in arbitrary order are interpreted using a fixpoint operator.

\subsection{Syntax}

\begin{figure}
\[
\begin{array}[t]{@{}lrl@{}}
d &::=& \zldef{x}{e}\mid \zlnode{f}{x}{e}\mid \zlproba{f}{x}{e}\mid d\ d
\\[.5em]
e &::=& c \mid  x\mid \zlpair{e}{e}\mid \zlop{e} \mid \zllast{x}
	\mid \zlapp{f}{e}\mid \zlwhere{e}{E} \\
	&\mid& \zlpresent{e}{e}{e}\mid \zlreset{e}{e}\\
    &\mid& \zlsample{}{e}\mid \zlfactor{e}\mid \zlinfer{e}
\\[.5em]
E &::=& \zleq{x}{e} \mid \zlinit{x}{e} \mid \zland{E}{E}
\end{array}
\]
\caption{ProbZelus Syntax.}
\label{fig:syntax}
\end{figure}

The syntax of ProbZelus is presented in \Cref{fig:syntax}.
A program is a series of declarations~$d$.
A declaration can be a global variable \zl{let}, a deterministic stream function \zl{node}, or a probabilistic model \zl{proba}.
Each declaration has a unique name.
An expression can be a constant~$c$, a variable~$x$, a pair, an operator application $\zlop{e}$, the previous value of a variable $\zllast{x}$, a function call $\zlapp{f}{e}$, a local declaration $\zlwhere{e}{E}$ where $E$ is a set of mutually recursive equations, a lazy conditional $\zlpresent{c}{e_1}{e_2}$, or a reset construct $\zlreset{e_1}{e_2}$.
An equation is either a simple definition $\zleq{x}{e}$, an initialization $\zlinit{x}{e}$ (the delay operator $\zllast{x}$ can only be used on initialized variables), or a set of equations $\zland{E_1}{E_2}$.
In a set of equations, every initialized variable must be defined by another equation.

We add the classic probabilistic constructs to the set of expressions: $\zlsample{}{d}$ creates a random variable with distribution $d$, $\zlfactor{s}$ increments the log-density of the model, and $\zlinfer{m}$ computes the posterior distribution of a model $m$.
If $d$ is a distribution with a density function, we use the syntactic shortcut $\zlobserve{d}{x}$ for $\zlfactor{\pdf{d}{x}}$ which conditions the model on the assumption that $x$ was sampled from $d$.  

\subsection{Co-iterative semantics}
\label{sec:back:co-it}

The semantics of ProbZelus presented in \cite{rppl-short} extends the co-iterative semantics of dataflow synchronous languages~\cite{pouzet-cmcs98, emsoft23b}.
A type system statically identifies deterministic and probabilistic expressions~\cite[Section~3.2]{rppl-short} which have different interpretations.

In an environment $\gamma$ mapping variable names to values, a deterministic expression $e$ is interpreted as a state machine characterized by an initial state $\semi{e}_\gamma$ of type $\statetype$ and a transition function $\sems{e}_\gamma$ of type $\statetype \to \valuetype \times \statetype$ which given the current state returns a value and the next state.
A stream of values is then obtained by iteratively applying the transition function from the initial state.

\newcommand{\mask}[1]{\textcolor{black!40!white}{#1}}

\begin{small}
\[
\mask{(\semi{e}_{\gamma_0} =}\  m_0 \mask{)}\ 
\mask{\xrightarrow{\sems{e}_{\gamma_1}}} 
    \begin{array}[t]{c} m_1\\v_1 \end{array}
\mask{\xrightarrow{\sems{e}_{\gamma_2}}}
    \begin{array}[t]{c} m_2\\v_2 \end{array}
\mask{\xrightarrow{\sems{e}_{\gamma_3}}}
    \begin{array}[t]{c} m_3\\v_3 \end{array}
\mask{\to} \ ...
\]
\end{small}

\begin{figure}
\begin{small}
\[
\begin{array}{@{}lll@{}}

\psemi{e}_{\gamma} &=& \semi{e}_\gamma\\
\psems{e}_{\gamma}(m) &=& \letin{m', v = \sems{e}_\gamma(m)} \delta_{m',v}
\quad \text{if $e$ is deterministic}
\\\\
\psemi{\zlsample{}{e}}_{\gamma} &=& \semi{e}_\gamma\\
\psems{\zlsample{}{e}}_{\gamma}(m) &=&
    \letin{m', \mu = \sems{e}_\gamma(m)} \displaystyle \int \mu(dv)\ \delta_{m', v}
\\\\
\psemi{\zlfactor{e}}_{\gamma} &=& \semi{e}_\gamma\\
\psems{\zlfactor{e}}_{\gamma}(m) &=&
    \letin{m', v = \sems{e}_\gamma(m)} v * \delta_{m', ()}
\\\\
\psemi{
    \begin{array}[c]{@{}l@{}}
        e\ \begin{array}[t]{@{}l@{}} \kwf{where}\ 
        \begin{array}[t]{@{}l@{}} 
        \kwf{rec}\ \zlinit{x}{c}\\
        \kwf{and}\ \zleq{x}{e_x}\\
        \kwf{and}\ \zleq{y}{e_y}
        \end{array}
        \end{array}
    \end{array}
}_{\gamma} &=&
    c, \left(\psemi{e}_\gamma, \psemi{e_x}_\gamma, \psemi{e_y}_\gamma\right)
\\[2em]
\psems{
    \begin{array}[c]{@{}l@{}}
        e\ \begin{array}[t]{@{}l@{}} \kwf{where}\ 
        \begin{array}[t]{@{}l@{}} 
        \kwf{rec}\ \zlinit{x}{c}\\
        \kwf{and}\ \zleq{x}{e_x}\\
        \kwf{and}\ \zleq{y}{e_y}
        \end{array}
        \end{array}
    \end{array}
}_{\gamma}(p_x, (m, m_x, m_y)) &=&
    \begin{stack}
        \displaystyle \int \psems{e_x}_{\gamma + [\attr{x}{last} \is p_x]}(m_x)(d m_x', d v_x)\\[0.45em]
        \quad \displaystyle \int \psems{e_y}_{\gamma + [\attr{x}{last} \is p_x, x \is v_x]}(m_y)(d m_y', d v_y)\\[0.45em]
        \qquad \displaystyle \int \psems{e}_{\gamma + [\attr{x}{last} \is p_x, x \is v_x, y \is v_y]}(m)(d m', d_v)\\
        \qquad \quad \delta_{(v_x, (m', m_x', m_y')), v}
    \end{stack} 
\\\\
\semi{\zlinfer{e}}_{\gamma} &=& \semi{e}_\gamma\\
\sems{\zlinfer{e}}_{\gamma}(\sigma) &=&
    \begin{stack}
        \letin{\nu = \displaystyle \int \sigma(dm) \psem{e}_{\gamma}(m)}\\
        \letin{\overline{\nu} = \nu / \nu(\top)}\\
        (\pi_{1*}(\overline{\nu}), \pi_{2*}(\overline{\nu}))
    \end{stack}
\end{array}
\]
\end{small}
\caption{A simplified excerpt of the original ProbZelus co-iterative probabilistic semantics~\cite{rppl-short}.}
\label{fig:sem:coit:kernel}
\end{figure}

Following~\cite{staton17}, the semantics of a probabilistic expression is a state machine which computes a stream of \emph{kernels}.
Given the current state, the transition function $\psems{e}_\gamma$ of type $\statetype \to \Sigma_{\statetype \times \valuetype} \to [0, \infty)$ returns a measure over pairs (next state, value),\footnote{$\Sigma_{A}$ denotes the Borel $\sigma$-algebra over values of type $A$.} i.e., a function mapping measurable sets of pairs (next state, value) to a positive score.

\Cref{fig:sem:coit:kernel} shows a simplified excerpt of the semantics of probabilistic expressions from~\cite{rppl-short}.
In a probabilistic context, a deterministic expression is interpreted as the Dirac delta measure on the pair (state, value) returned by the deterministic semantics.\footnote{$\delta_x(U) = 1$  if $x \in U$ and $0$ otherwise.}
\zl{sample} evaluates its argument into a new state $m'$ and a distribution of values $\mu$, and returns a measure over pairs (new state, value).
\zl{factor} evaluates its argument into a new state $m'$ and a real value $v$, and returns a Dirac delta measure on the pair ($m'$, ()) weighted by $v$. 
To simplify the semantics, the type system ensures that the arguments of the probabilistic operators are always deterministic expressions.
To illustrate local declarations, \Cref{fig:sem:coit:kernel} shows the semantics of a simple expression with two local variables \zl{x} and \zl{y}. 
The state captures the previous value of the initialized variable $x$, and the state of all sub-expressions.
The transition function starts in a context where the previous value of $x$ is bound to a special variable $\attr{x}{last}$, and integrates over all possible executions of the sub-expressions to compute the main expression.

\paragraph{Inference}
So far, probabilistic expressions describe a stream of un-normalized measures over pairs (state, value).
At each step, the \zl{infer} operator normalizes the measure to obtain a distribution ($\top$ denotes the entire space), that is then split into a distribution of next states, and a distribution of values.
The corresponding stream of distributions is obtained by iteratively integrating the transition function along the distribution of states.

\begin{small}
\[
\mask{(\psemi{e}_{\gamma_0} =}\  \sigma_0 \mask{)}\ 
\mask{\xrightarrow{\int \sigma_0(dm) \psems{e}_{\gamma_1}(m)}} 
    \begin{array}[t]{c} \sigma_1\\\mu_1 \end{array}
\mask{\xrightarrow{\int \sigma_1(dm) \psems{e}_{\gamma_2}(m)}}
    \begin{array}[t]{c} \sigma_2\\\mu_2 \end{array}
\mask{\xrightarrow{\int \sigma_2(dm) \psems{e}_{\gamma_3}(m)}}
    \begin{array}[t]{c} \sigma_3\\\mu_3 \end{array}
\mask{\to} \ ...
\]
\end{small}

If the model is ill-defined, the normalization constant can be $0$ or $\infty$, which triggers an exception and stops the execution.
It is the programmer's responsibility to avoid such error cases when defining the model.

\subsection{Equations and fixpoints}
\label{sec:back:fixpoint}
In the interpretation of local declarations in \Cref{fig:sem:coit:kernel}, the nested integrals are only well defined if equations are ordered according to data dependencies.
The original semantics in~\cite{rppl-short} thus focuses on a  kernel language where local declarations are all \emph{scheduled}: initializations are grouped at the beginning and an equation $y = e_y$ must appear after $x = e_x$ if $x$ appears in $e_y$ outside a \zl{last}.
In the compiler, a specialized type system, the \emph{causality analysis} statically checks that a program is \emph{causal}, i.e., that all local declarations can be scheduled~\cite{CuoqP01}.
The kernel-based semantics is commutative, i.e., yields the same results for any valid schedule~\cite{staton17}, but imposing a scheduled order on equations is a significant limitation compared to block diagrams or synchronous dataflow languages which manipulate set of equations in arbitrary order.

The original co-iterative semantics~\cite{pouzet-cmcs98} and more recent works~\cite{emsoft23b} interpret mutually recursive equations using a fixpoint operator.
Values $v \in V$ are interpreted in a flat domain $V^{\bot} = V + \{\bot\}$ with minimal element $\bot$ and the flat order $\leq$: $\forall v \in V.\ \bot \leq v$.
$(V^{\bot}, \bot, \leq)$ is a complete partial order~(CPO).
This flat CPO is lifted to environments defining the same set of variables: $\forall \rho_1, \rho_2$ such that $\dom{\rho_1} = \dom{\rho_2} = X$, $\rho_1 \leq \rho_2$ iff $\forall x \in X,\  \rho_1(x) \leq \rho_2(x)$ and the least element is $\bot = [x \is \bot]_{x \in X}$.

\Cref{fig:sem:coit:det:eq} shows the semantics rules for deterministic equations adapted from~\cite{emsoft23b}.
The initial state of an equation is the initial state of its defining expression.
Given a state, the transition function returns a new state and an environment.
To interpret an expression with a set of local declarations $\zlwhere{e}{E}$, the transition function first computes the environment defined by $E$ with a fixpoint operator.
Given a state $M$, the function $F(\rho) = \letin{ M', \rho' = \sems{e}_{\gamma + \rho}(M)} \rho'$ is continuous and has a minimal fixpoint $\rho = \fix{F} = \lim_{n \to \infty}(F^n(\bot))$.
After convergence, the transition function evaluates $\sems{E}_{\gamma + \rho}(M)$ once more to compute the next state $M'$ (leaving $\rho$ unchanged by definition of the fixpoint) and finally evaluates the main expression $e$ in the environment $\gamma + \rho$.

If the program is \emph{causal} a valid schedule exists for $E$, and
by monotony, each fixpoint iteration computes the value of at least one variable and the fixpoint is reached after a finite number of iterations~\cite{emsoft23b}.

\begin{figure}
\begin{small}
\[
\begin{array}{lcl}
\sems{\zleq{x}{e}}_\gamma(m) &=& \letin{m', v = \sems{e}_\gamma(m)} m', [x \is v]
\\\\
\sems{\zlinit{x}{c}}_\gamma(p_x) &=& \gamma(x), [\attr{x}{last} \is p_x]
\\\\
\sems{\zland{E_1}{E_2}}_\gamma(M_1, M_2) &=&
\begin{stack}
    \letin{M_1', \gamma_1 = \sems{E_1}_\gamma(M_1)}\\
    \letin{M_2', \gamma_2 = \sems{E_2}_\gamma(M_2)}\\
    (M_1', M_2'), \gamma_1 + \gamma_2
\end{stack}
\\\\
\sems{\zlwhere{e}{E}}_\gamma(m, M) &=&
\begin{stack}
	\letin{F(\rho) = \left(\letin{M', \rho' = \sems{E}_{\gamma + \rho}(M)} \rho'\right)} \\
    \letin{\rho = \fix{F}}\\
	\letin{M', \rho = \sems{E}_{\gamma + \rho}(M)}\\
    \letin{m', v = \sems{e}_{\gamma + \rho}(m)}\\
    (m', M'), v
\end{stack}
\end{array}
\]
\end{small}
\caption{Co-iterative semantics of deterministic equations  with a fixpoint operator (based on~\cite{PouzetSynchron21}).}
\label{fig:sem:coit:det:eq}
\end{figure}

\section{Density-based co-iterative semantics}
\label{sec:coit}

In this section we detail the new density-based co-iterative semantics for probabilistic expressions.
We show that, in this semantics we can now interpret sets of mutually recursive equations with a fixpoint operator as in the original co-iterative semantics.
We then prove that the density-based semantics is equivalent to the kernel-based semantics, i.e., describes the same stream of distributions.

\subsection{Probabilistic co-iterative semantics with fixpoint}
\label{sec:sem:coit:dens}
The key idea of the density-based semantics is to externalize all sources of randomness.
Compared to the deterministic case, the transition function of a probabilistic expression takes one additional argument: an array of random seeds containing one random element for each random variable introduced by \zl{sample}.
To capture the effect of the \zl{factor} operator, the transition function also returns a weight which measures the quality of the result w.r.t. the model.

\paragraph{Expressions}
More formally, the initialization function of a probabilistic expression $e$, $\dsemi{e}_\gamma : S \times \mathbb{N}$ returns the initial state and the number of random variables in $e$.
Loops and recursive calls are not allowed in the language of \Cref{fig:syntax}.
The number of calls to \zl{sample} can thus be statically computed.
Given the current state and a value for all random seeds (an array of $p$ values in $[0, 1]$ where $p$ is the number of random variables computed by initialization function) the transition function $\dsems{e}_\gamma : \statetype \times [0, 1]^{p} \to \statetype \times \valuetype \times [0, \infty)$ returns a triple (next state, result, weight).

An excerpt of the density-based co-iterative semantics is presented in \Cref{sem:coit:expr}.
If $e$ is deterministic, there is no random variable and no conditioning.
The transition function takes an empty array of random seeds, evaluates the expression, and returns the next state, the value, and a weight of~$1$.
\zl{sample} defines one random variable.
The transition function takes an array containing one random seed, evaluates the argument into a distribution, converts the random seed into a sample of the distribution, and returns the next state, the sample, and a weight of~$1$.
\zl{factor} updates the weight.
The transition function evaluates it arguments into a real value $v$, and returns the next state, an empty value $()$, and the score $v$.
The initialization of a function call $\zlapp{f}{e}$ evaluates the initialization functions of $f$ and $e$, combines the initial states and sums the numbers of random variables.
The transition function takes an array containing the random seeds for $e$ and $f$,\footnote{We note $[r_1 : r_2]$ the concatenation of two arrays.} evaluates the argument $e$ into a value $v_e$ and a weight $w_e$, uses the value to evaluate the transition function of $f$ which returns a result $v$ and a weight $w_f$, and returns the combined next states, the result, and the total weight $w_e * w_f$.

\begin{figure}
\begin{small}
\[
\begin{array}{lll}
\dsemi{e}_\gamma &=& \semi{e}_\gamma, 0\\
\dsems{e}_\gamma(m, []) &=& 
    \letin{m', v = \sems{e}_\gamma(m)} 
    m', v, 1
    \qquad \text{if $e$ is deterministic}
\\\\
\dsemi{\zlsample{}{e}}_{\gamma} &=& \letin{m = \semi{e}_\gamma} m, 1\\
\dsems{\zlsample{}{e}}_{\gamma}(m, [r]) &=&
    \letin{m', \mu  = \sems{e}_\gamma(m)} 
    m', \sample{\mu}{r}, 1
\\\\
\dsemi{\zlfactor{e}}_{\gamma} &=&  \letin{m = \semi{e}_\gamma} m, 0\\
\dsems{\zlfactor{e}}_{\gamma}(m, []) &=& 
    \letin{m', v = \sems{e}_\gamma(m)} 
    m', (), v
\\\\
\dsemi{f(e)}_{\gamma} &=& 
    \begin{stack}
        \letin{m_f, p_f = \gamma(\attr{f}{init})}\\
        \letin{m_e, p_e = \dsemi{e}_\gamma}\\
        (m_f, m_e), p_f + p_e 
    \end{stack}\\
\dsems{f(e)}_{\gamma}((m_f, m_e), [r_f:r_e]) &=&
    \begin{stack}
        \letin{m_e', v_e, w_e = \dsems{e}_\gamma(m_e, r_e)}\\
        \letin{m_f', v, w_f = \gamma(\attr{f}{step})(v_e, m_f, r_f)}\\
        (m_f', m_e'), v, w_e * w_f
    \end{stack}
\end{array}
\]
\end{small}
\caption{Density-based co-iterative semantics for ProbZelus expressions (full version in \Cref{sem:coit:expr:full} of the appendix).}
\label{sem:coit:expr}
\end{figure}

\paragraph{Mutually recursive equations}
The semantics of probabilistic equations is presented in \Cref{sem:coit:eq}.
As for probabilistic expressions, the initialization function returns the initial states, and the number of random variables.
Given a state and an array of random seeds, the transition function returns a tuple (next state, environment, weight).
The equation $\zleq{x}{e}$ defines a single variable.
The transition function evaluates the defining expression $e$ into a tuple (next state, value, weight), and returns the next state, an environment where $x$ is bound to the value $v$, and the weight.
The $\zlinit{x}{e}$ equation manages the special variable $\zllast{x}$ which refers to the value of $x$ at the previous time step.
Compared to the original ProbZelus semantics, we do not require initial values to be constants. 
The state contains the previous value of $x$ initialized with an undefined value of the correct type $\nil$, and the initial state $m_0$ of the expression $e$.
There are two cases for the transition function.
At the first time step, or after a \zl{reset}, the state contains $\nil$ and
the transition function evaluates $e$ using $m_0$ to computes the initial value $i$ and the corresponding weight, and returns a new state containing the current value of $x$, an environment where $\attr{x}{last}$ is bound to $i$, and the weight.
In any other case, the previous value $v$ of $x$ stored in the state is defined. The transition function returns a new state containing the current value of $x$, an environment where $\attr{x}{last}$ is bound to $v$, and a weight of $1$.

Compared to the original kernel-based semantics described in \Cref{sec:back:co-it} which combines measures via integration, the density-based semantics only manipulates deterministic values for which the flat CPO on environments described in \Cref{sec:back:fixpoint} is well defined.
The initialization function of an expression with a set of local declarations $\zlwhere{e}{E}$ combines the initial states of $e$ and $E$ and returns the total number of random variables.
The transition function takes an array containing the random seeds for $e$ and $E$, computes the environment $\rho$ defined by $E$ with a fixpoint operator, evaluates $\dsem{E}_{\gamma+\rho}(M, r)$ once more to compute the next state $M'$ and the weight $W$, evaluates the main expression $e$ in the environment $\gamma + \rho$ which returns the result $v$ and a weight $w$, and returns the combined next states, the result, and the total weight $w * W$.
The only difference with the deterministic case is that the transition functions of $e$ and $E$ now take the random seeds as arguments and return the weights. 

\begin{figure}
\begin{small}
\[
\begin{array}{lll}
\dsemi{\zleq{x}{e}}_{\gamma} &=& \dsemi{e}_\gamma\\
\dsems{\zleq{x}{e}}_{\gamma}(m, r) &=&
    \letin{m',v, w = \dsems{e}_\gamma(m, r)}
     m',  [x \is v], w
\\\\
\dsemi{\zlinit{x}{e}}_{\gamma} &=& 
    \letin{m_0, p = \dsemi{e}_\gamma} 
    (\nil, m_0), p\\
\dsems{\zlinit{x}{e}}_{\gamma}((\nil, m_0), r) &=&
    \letin{m', i, w = \dsems{e}_\gamma(m_0, r)} 
    (\gamma(x), m_0), [\attr{x}{last} \is i], w\\
\dsems{\zlinit{x}{e}}_{\gamma}((v, m_0), r) &=&
    (\gamma(x), m_0), [\attr{x}{last} \is v], 1
\\\\
\dsemi{\zland{E_1}{E_2}}_{\gamma} &=& 
    \begin{stack}
        \letin{M_1, p_1 = \dsemi{E_1}_\gamma}\\
        \letin{M_2, p_2 = \dsemi{E_2}_\gamma}\\
        (M_1, M_2), p_1 + p_2
    \end{stack}\\
\dsems{\zland{E_1}{E_2}}_{\gamma}((M_1, M_2), [r_1: r_2]) &=&
    \begin{stack}
        \letin{M_1', \rho_1, w_1 = \dsems{E_1}_\gamma(M_1, r_1)}\\
        \letin{M_2', \rho_2, w_2 = \dsems{E_2}_{\gamma}(M_2, r_2)}\\
        (M_1', M_2'), \rho_1 + \rho_2, w_1 * w_2
    \end{stack}
\\\\
\dsemi{\zlwhere{e}{E}}_{\gamma} &=& 
    \begin{stack}
        \letin{m, p_e = \dsemi{e}_\gamma}\\
        \letin{M, p_E = \dsemi{E}_\gamma}\\ 
        (m, M), p_e + p_E
    \end{stack}\\
\dsems{\zlwhere{e}{E}}_{\gamma}((m, M), [r_e : r_E]) &=& 
    \begin{stack}
        \letin{F(\rho) = \left(
            \letin{M', \rho, w = \dsem{E}_{\gamma + \rho}(M, r_E)} \rho \right)}\\   
        \letin{\rho = \fix{F}}\\
        \letin{M', \rho, W = \dsems{E}_{\gamma + \rho}(M, r_E)}\\
        \letin{m', v, w = \dsems{e}_{\gamma + \rho}(m, r_e)}\\
        (m', M'), v, w * W
    \end{stack}
\end{array}
\]
\end{small}
\caption{Density-based co-iterative semantics for ProbZelus equations (full version in \Cref{sem:coit:eq:full} of the appendix).}
\label{sem:coit:eq}
\end{figure}
\paragraph{Scheduling}
To compare the density-based semantics with the kernel-based semantics, it is useful to define an alternative semantics for a scheduled language without a fixpoint operator.
This alternative semantics $\dssems{e}{}$ exactly matches the density-based semantics except for the two following rules:

\[
\begin{small}
\begin{array}{lll}
    \dssems{\zland{E_1}{E_2}}\gamma((M_1, M_2), [r_1:r_2]) &=&
        \begin{stack}
            \letin{M_1', \rho_1, w_1 = \dssems{E_1}\gamma(M_1, r_1)}\\
            \letin{M_2', \rho_2, w_2 = \dssems{E_2}{\gamma + \rho_1}(M_2, r_2)}\\
            (M_1', M_2'), \rho_1 + \rho_2, w_1 * w_2
        \end{stack}
    \\\\
    \dssems{\zlwhere{e}{E}}\gamma((m, M), 
    [r_e : r_E]) &=&
        \begin{stack}
            \letin{M', \rho, W = \dssems{E}\gamma(M, r_E)}\\
            \letin{m', v, w = \dssems{e}{\gamma + \rho}(m, r)}\\
            (m', M_1', M_2'), v, w * W
        \end{stack}
\end{array}
\end{small}
\]

\medskip
\noindent
Since all equations are scheduled, the environment produced by a set of equations can be computed incrementally and there is no need for a fixpoint operator to interpret local declarations.

\begin{proposition}
    \label{prop:coit:sched}
    For an expression where all equations are scheduled, the scheduled density-based semantics is equal to the density-based semantics with a fixpoint.
\end{proposition}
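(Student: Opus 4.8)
The plan is to proceed by structural induction on the expression $e$, mutually with the set of equations $E$, proving that the two interpretations agree on both the initialization and the transition function. The initialization functions are literally identical in the two semantics, so only the step functions require attention; moreover, the two semantics are defined by the very same clauses everywhere except at $\zland{E_1}{E_2}$ and $\zlwhere{e}{E}$. Hence for every other syntactic form the conclusion is immediate from the induction hypotheses applied to the immediate sub-expressions, and the whole argument reduces to these two cases.

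The heart of the proof is a sub-lemma stating that, for a \emph{scheduled} set of equations $E$, the environment built incrementally by $\dssems{E}{\gamma}$ coincides with the fixpoint used in the non-scheduled rule. Concretely, writing $(M', \rho, W) = \dssems{E}{\gamma}(M, r)$ and letting $F(\rho')$ denote the environment component of $\dsems{E}_{\gamma + \rho'}(M, r)$, I would prove that $\rho = \fix{F}$ and that $\dsems{E}_{\gamma + \rho}(M, r) = (M', \rho, W)$. I would establish this by induction on the scheduled structure of $E$. For a single equation $\zleq{x}{e}$ or $\zlinit{x}{e}$, the schedule guarantees that the defining expression does not read $x$ outside a $\zllast{x}$, so $F$ is constant and its fixpoint is reached in one step, matching the incremental value; the claim then follows from the induction hypothesis on $e$. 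For $\zland{E_1}{E_2}$ with $E_2$ scheduled after $E_1$, the incremental semantics computes $\rho_1$ from $E_1$ in $\gamma$ and $\rho_2$ from $E_2$ in $\gamma + \rho_1$, and I would apply the sub-lemma inductively to $E_1$ in context $\gamma$ and to $E_2$ in context $\gamma + \rho_1$ to obtain the corresponding local fixpoints.

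To splice these together I would show that $\rho_1 + \rho_2$ is itself a fixpoint of the global $F$. The non-scheduled clause for $\zland{E_1}{E_2}$ evaluates $E_1$ and $E_2$ in the \emph{same} environment $\gamma + \rho'$, so I need a locality property of the semantics: the interpretation of an expression depends only on the values of the variables it actually reads outside $\zllast{}$. Since the schedule forbids $E_1$ from reading any variable defined by $E_2$, locality gives that the environment component of $\dsems{E_1}_{\gamma + \rho_1 + \rho_2}$ equals that of $\dsems{E_1}_{\gamma + \rho_1}$, namely $\rho_1$; combined with the inductive fixpoint property for $E_2$ this yields $F(\rho_1 + \rho_2) = \rho_1 + \rho_2$. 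Finally I would upgrade ``a fixpoint'' to ``the least fixpoint'': since $\fix{F}$ is the least fixpoint we have $\fix{F} \leq \rho_1 + \rho_2$, and since the scheduled program is causal the fixpoint is total, i.e.\ it assigns a defined value to every variable (\Cref{sec:back:fixpoint}); because in the flat CPO a defined value admits only itself as an upper bound, $\fix{F} = \rho_1 + \rho_2$. The matching of next states and weights follows from the same locality argument. With the sub-lemma in hand the $\zlwhere{e}{E}$ case is direct: the sub-lemma identifies the fixpoint with $\rho$, both semantics then evaluate $e$ in $\gamma + \rho$, and the induction hypothesis on $e$ closes the case.

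The step I expect to be the main obstacle is showing that the incrementally built $\rho_1 + \rho_2$ is a fixpoint of the global $F$, because this is exactly where the structural asymmetry between the two semantics bites: the scheduled clause threads the growing environment through the conjuncts, whereas the non-scheduled clause re-evaluates every conjunct in one common environment. Making this rigorous requires first stating and proving the locality lemma and pinning down precisely what ``scheduled'' means as a constraint on the free variables of each conjunct, and then combining locality with the finite-convergence and causality property recalled in \Cref{sec:back:fixpoint} to pass from an arbitrary total fixpoint to the least one.
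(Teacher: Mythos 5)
Your proof follows essentially the same route as the paper's: reduce the proposition to a lemma identifying the least fixpoint with the environment built incrementally by the scheduled semantics, induct on the equation set, and in the $\zland{E_1}{E_2}$ case use the scheduling hypothesis to argue that $E_1$ is insensitive to the variables defined by $E_2$. The only divergence is in the final step of that case: the paper invokes a Beki\'c-style decomposition of the simultaneous least fixpoint into nested least fixpoints, whereas you show the incrementally built environment is \emph{some} fixpoint and then identify it with the least one via totality of the fixpoint for causal programs --- both justifications are sound.
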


\begin{proof}
    This result is a consequence of the following lemma:

    \begin{lemma}
    For all scheduled equations set $E$, the scheduled semantics yields the same environment as the fixpoint operator, i.e., for an environment $\gamma$, a state $M$ and an array of random seeds $r$:
    \[
        \fix{\lambda \rho. \ \letin{M', \rho', w = \dsems{E}_{\gamma+\rho}(M, r)} \rho'}
        \ \ = \ \  \letin{M', \rho, w = \dssems{E}{\gamma}(M, r)} \rho
    \]
    \end{lemma}

    \noindent
    The proof is by induction on $E$.
    It is sufficient to focus on the case $\zland{E_1}{E_2}$.
    Since equations are scheduled, $E_1$ does not depend on variables defined in $E_2$ and we have for an environment $\gamma$, a state $(M_1, M_2)$ and an array of random seeds $[r_1:r_2]$:

    \[
    \begin{small}
    \begin{array}{ll}
        &\mit{fix}(\lambda (\rho_1 + \rho_2). \ 
        \begin{stack}
            \letin{M_1', \rho_1', w_1 = \dsems{E_1}_{\gamma + \rho_1 + \rho_2}(M_1, r_1)}\\
            \letin{M_2', \rho_2', w_2 = \dsems{E_2}_{\gamma + \rho_1 + \rho_2}(M_2, r_2)}
            \rho_1' + \rho_2')
        \end{stack}\\
        = &
        \mit{fix}(\lambda (\rho_1 + \rho_2). \ 
        \begin{stack}
            \letin{M_1', \rho_1', w_1 = \dsems{E_1}_{\gamma + \rho_1}(M_1, r_1)}\\
            \letin{M_2', \rho_2', w_2 = \dsems{E_2}_{\gamma + \rho_1 + \rho_2}(M_2, r_2)}
            \rho_1' + \rho_2')
        \end{stack}\\
        = &
        \begin{stack}
            \letin{\rho_1'' = \fix{\lambda \rho_1. \ \letin{M_1', \rho_1', w_1 = \dsems{E_1}_{\gamma + \rho_1}(M_1, r_1)} \rho_1'}}\\
            \letin{\rho_2'' = \fix{\lambda \rho_2. \ \letin{M_2', \rho_2', w_2 = \dsems{E_2}_{\gamma + \rho_1'' + \rho_2}(M_2, r_2)} \rho_2'}}
            \rho_1'' + \rho_2''
        \end{stack}\\
        = &
        \begin{stack}
            \letin{M_1', \rho_1', w_1 = \dssems{E_1}{\gamma}(M_1, r_1)}\\
            \letin{M_2', \rho_2', w_2 = \dssems{E_2}{\gamma + \rho_1'}(M_2, r_2)}
            \rho_1' + \rho_2'
        \end{stack}
    \end{array}
    \end{small}
    \]
\end{proof}

\subsection{Inference}
\label{sec:coit:inf}

The \zl{infer} operator first turns the result of the density-based semantics into an un-normalized measure, and then performs the same operation as in the kernel-based semantics: 1) integrate over all possible states, 2) normalize the measure, 3) split the result into a distribution of next states and a distribution of values.

\begin{equation}\label{eq:infer:coit}
\begin{small}
\begin{array}{lll}
\semdi{\zlinfer{e}}_{\gamma} &=& \letin{m, p = \dsemi{e}_\gamma} \delta_m, p\\[1em]
\semds{\zlinfer{e}}_{\gamma}(\sigma, p) &=&
    \begin{stack}
    \letin{\psi(m) = \displaystyle \int_{[0, 1]^p} 
        \letin{m', v, w = \dsems{e}_\gamma(m, r)} w * \delta_{(m', v)}\ dr}\\
    \letin{\nu = \displaystyle \int \sigma(dm) \ \psi(m)}\\
    \letin{\overline{\nu} = \nu / \nu(\top)}\\
    (\pi_{1*}(\overline{\nu}), p), \pi_{2*}(\overline{\nu})
    \end{stack}
\end{array}
\end{small}
\end{equation}

\noindent
The state of the \zl{infer} operator contains the number of random variables in the model $p$ (which remains constant) and a distribution of possible states.
The initial distribution of states is a Dirac delta measure over the initial state of the model.
The transition function first computes a function $\psi$ mapping a state to the un-normalized measure which associates each pair (next state, value) to its weight.
The \zl{infer} operator then integrates this function along all possible values of the state, normalizes it, and splits it into a pair of distributions.

\paragraph{Correctness}
The previous definition is very similar to its kernel-based semantics counterpart where the function $\psi(m)$ in~\Cref{eq:infer:coit} plays  the role of the semantics of the model.
We now show that these two notions coincide.

\begin{proposition}
\label{thm:density-kernel}
For all probabilistic expression $e$ with $p$ random variables where all equations are scheduled, the density-based semantics is the density of the measure computed by the kernel semantics, i.e., for all environment $\gamma$ and state $m$:
\[
\int_{[0, 1]^{p}} \letin{m', v, w = \dsems{e}_\gamma(m)} w*\delta_{m', v} \ dr
=
\psems{e}_\gamma(m)   
\]
\end{proposition}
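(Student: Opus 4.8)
The plan is to proceed by structural induction on $e$ (and, inside the equation cases, on equation sets $E$), after first invoking \Cref{prop:coit:sched} to replace the fixpoint-based density semantics $\dsems{e}$ by the scheduled variant $\dssems{e}{\gamma}$, which is legitimate because the hypothesis assumes all equations are scheduled. Both sides of the identity denote measures over pairs (next state, value), so the induction should maintain the invariant that the density-based state and the kernel state coincide at every node (the random-variable counts $p$ being auxiliary bookkeeping), leaving me to track only how the weight $w$, the value $v$, and the next state $m'$ are produced as functions of the seed array $r$.

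The base cases are where the essential ideas live. For a deterministic expression there are no random variables, so $[0,1]^0$ is a single point, the weight is $1$, and the integral collapses to $\delta_{m', v}$, which is exactly the kernel rule; the $\zlfactor{e}$ case is identical except the weight is the evaluated real $v$, matching the factor $v$ in the kernel Dirac $v * \delta_{m', ()}$. The crucial base case is $\zlsample{}{e}$, where $p = 1$: the density step returns $\sample{\mu}{r}$ with weight $1$, so the left-hand side is $\int_{[0,1]} \delta_{m', \sample{\mu}{r}}\, dr$. The identity then follows from the defining property of inverse transform sampling — pushing the uniform measure on $[0,1]$ through $\mathit{icdf}_\mu$ yields $\mu$ — so that $\int_{[0,1]} \delta_{m', \sample{\mu}{r}}\, dr = \int \mu(dv)\, \delta_{m', v}$, which is precisely the kernel rule for \zl{sample}.

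For the inductive cases the recurring tools are Fubini's theorem and the substitution property of Dirac measures. In the application case $\zlapp{f}{e}$ the seed array splits as $[r_f : r_e]$, and Fubini factorizes the integral over $[0,1]^{p_f + p_e}$ into an integral over $r_e$ (handled by the induction hypothesis on $e$) and an integral over $r_f$ (handled by the correspondence for the stored step function of $f$), with the product weight $w_e * w_f$ separating accordingly. For equation sets I work with the scheduled semantics and treat $\zland{E_1}{E_2}$ as the representative case: the seeds split $[r_1 : r_2]$, Fubini separates the two integrals, the induction hypothesis for $E_1$ turns the integral over $r_1$ into the kernel measure whose value is the environment $\rho_1$, and integrating the resulting Dirac $\delta$ on $\rho_1$ against the $E_2$ integrand substitutes $\rho_1$ into the environment $\gamma + \rho_1$ in which $E_2$ is evaluated, thereby reproducing the nested-integral structure of the kernel rule. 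The $\zlwhere{e}{E}$ case then combines the equation-set result with the main-expression case in the same manner.

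The main obstacle I anticipate is precisely this equation-set case: reconciling the sequential, deterministic threading of environments in the scheduled density semantics (where $\rho_1$ is an explicit function of $r_1$ that is fed into $E_2$) with the integral-based threading of the kernel semantics (where the environment produced by $E_1$ is bound by an outer integral and substituted, via a Dirac delta, into the inner integrand for $E_2$). Making this rigorous requires care with the order of integration, with the measurable dependence of the $E_2$ integrand on $\rho_1$, and with collapsing the $\delta$ on the environment correctly. Two further points need attention but are not deep: checking that the density and kernel states remain identified through every rule, so that the $\delta_{m',v}$ factors on the two sides genuinely agree; and handling the application case under the standing assumption that the function environment $\gamma$ binds step functions that already satisfy the correspondence — in other words, the statement is really about environments whose function bindings are themselves density/kernel-consistent, a hypothesis that must be threaded through $\gamma$.
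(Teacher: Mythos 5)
Your proposal is correct and follows essentially the same route as the paper: structural induction on the scheduled language, with the \zl{sample} case handled by the pushforward of the uniform measure along the inverse CDF, the composite cases (application, \zl{and}, \zl{where}) handled by Fubini's theorem together with Dirac substitution to reconcile the sequential environment threading of the density semantics with the nested integrals of the kernel semantics, and \Cref{prop:coit:sched} bridging the scheduled and fixpoint variants. The obstacles you flag (the equation-set case and the consistency assumption on function bindings in $\gamma$) are exactly where the paper spends its effort, so no gap.
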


\begin{proof}
The kernel-based semantics is only defined for a scheduled language.
We first prove by induction on the structure of $e$ that the scheduled density-based semantics coincide with the kernel-based semantics.
We can then conclude with \Cref{prop:coit:sched}.

The case $\zlsample{}{\mu}$ is a simple variable substitution $x=\sample{\mu}{r}$ where $\mathit{icdf}_{\mu}$ is the inverse of the cumulative function of $\mu$. Indeed, any real continuous distribution $\mu$  is the pushforward by $\mathit{icdf}_{\mu}$ of  the uniform distribution over $[0, 1]$ denoted $\lambda$:
\begin{small}
\begin{align*}
\int_{[0, 1]} \delta_{\sample{\mu}{r}}\ dr 
= \int \delta_{\sample{\mu}{r}}\ \lambda(dr) 
= \int \delta_{x} \ \mathit{icdf}_{\mu*}(\lambda)(dx) 
= \int \delta_{x}\ \mu(dx)
= \mu
\end{align*}
\end{small}
This property generalizes to discrete distributions, multivariate distributions, and any distributions over Polish spaces. 
By analogy, we use the notation $\mathit{icdf}_\mu$ in all cases.

The case $\zland{E_1}{E_2}$ is a consequence of Fubini's theorem.

\begin{small}
\[
\begin{array}[t]{l}
\int \psems{E_1}_\gamma(M_1)(dM_1', d \rho_1) \int \psems{E_2}_{\gamma + \rho_1}(M_2)(dM_2', d \rho_2)\ \delta_{(M_1', M_2'), \rho_1 + \rho_2}\\
\quad \begin{array}[t]{l} 
= \int \left( \int_{[0, 1]^{p_1}} \letin{M_1', \rho_1, w_1 = \dsems{E_1}_\gamma(M_1)} w_1 * \delta_{M_1, \rho_1}\ d r_1 \right)(dM_1', d \rho_1)\\
\quad \int \left( \int_{[0, 1]^{p_2}} \letin{M_1', \rho_2, w_1 = \dsems{E_2}_{\gamma + \rho_1}(M_2)} w_2 * \delta_{M_2, \rho_2}\ d r_2 \right)(dM_2', d \rho_2)\\
\qquad \delta_{(M_1', M_2'), \rho_1 + \rho_2}\\
= \int_{[0, 1]^{p_1}} \int_{[0, 1]^{p_2}} \int \left(  \letin{M_1', \rho_1, w_1 = \dsems{E_1}_\gamma(M_1)} w_1 * \delta_{M_1, \rho_1} \right)(dM_1', d \rho_1)\\
\qquad \qquad \qquad \ \  \int \left(  \letin{M_1', \rho_2, w_1 = \dsems{E_2}_{\gamma + \rho_1}(M_2)} w_2 * \delta_{M_2, \rho_2} \right)(dM_2', d \rho_2)\ d r_1 d r_2\\
\qquad \qquad \qquad \qquad \delta_{(M_1', M_2'), \rho_1 + \rho_2}\\[0.5em]
= \int_{[0, 1]^{p_1 + p_2}} \begin{array}[t]{l}
     \letin{M_1', \rho_1, w_1 = \dsems{E_1}_\gamma(M_1)}\\
    \letin{M_2', \rho_2, w_2 = \dsems{E_2}_{\gamma + \rho_1}(M_2)}\\ 
    w_1 * w_2 * \delta_{(M_1', M_2'), \rho_1 + \rho_2}\ d r_1 d r_2
\end{array}
\end{array}
\end{array}
\]
\end{small}

Other cases are similar.
\end{proof}

We can now state the main correctness theorem, i.e., the \zl{infer} operator yields the same stream of distributions in the density-based semantics and in the kernel based semantics.

\begin{theorem}[Co-iterative semantics correctness]
For all probabilistic model $e$ where all equations sets are scheduled, for all environment $\gamma$, and for all distribution of state $\sigma$:

\[
\begin{array}{lll}
    \semdi{\zlinfer{e}}_\gamma &=& \sems{\zlinfer{e}}_\gamma, p\\
    \semds{\zlinfer{e}}_\gamma(\sigma, p) &=&\sems{\zlinfer{e}}_\gamma(\sigma) 
\end{array}
\]
\end{theorem}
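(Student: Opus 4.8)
The plan is to unfold both the density-based and kernel-based definitions of $\zlinfer{e}$ and reduce the theorem to \Cref{thm:density-kernel}, which already establishes the equivalence at the level of a single step of the underlying model $e$. First I would handle the initialization functions. The kernel-based $\psemi{\zlinfer{e}}_\gamma$ is $\delta_m$ where $m = \semi{e}_\gamma$, whereas the density-based $\semdi{\zlinfer{e}}_\gamma$ is $(\delta_m, p)$ with $(m, p) = \dsemi{e}_\gamma$. Since $\dsemi{e}_\gamma = (\semi{e}_\gamma, p)$ by definition of the density-based semantics, the initial distributions of states agree; the only difference is the extra component $p$ carried by the density-based state, which records the (constant) number of random variables. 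So the initial states match once we account for the bookkeeping of $p$.

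Next I would compare the two transition functions. Both definitions follow the same three-stage recipe: build an un-normalized measure $\nu$ over pairs (next state, value) by integrating along the distribution of states $\sigma$, normalize it as $\overline{\nu} = \nu / \nu(\top)$, and split it via the pushforwards $\pi_{1*}$ and $\pi_{2*}$. The entire difference is confined to how the per-state measure is produced. In the kernel semantics this is $\psems{e}_\gamma(m)$, while in the density semantics it is
\[
\psi(m) = \int_{[0,1]^{p}} \letin{m', v, w = \dsems{e}_\gamma(m, r)} w * \delta_{(m', v)}\ dr.
\]
\Cref{thm:density-kernel} states exactly that $\psi(m) = \psems{e}_\gamma(m)$ for all $m$, under the hypothesis that all equation sets in $e$ are scheduled. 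Substituting this identity pointwise in $m$ shows $\nu = \int \sigma(dm)\, \psi(m) = \int \sigma(dm)\, \psem{e}_\gamma(m)$ coincides with the integrand of the kernel-based $\sems{\zlinfer{e}}_\gamma(\sigma)$, hence the normalization and the two projections yield identical distributions. The only residual discrepancy is again the constant $p$ threaded through the state in the density-based version, which does not affect the emitted distribution of values and is preserved unchanged across the step.

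I do not expect a genuine obstacle here, since the heavy lifting is already done by \Cref{thm:density-kernel}; the main care needed is bookkeeping. Concretely, one must check that the extra natural-number component $p$ in the density-based \zl{infer} state is both threaded consistently (it appears in the output state as $(\pi_{1*}(\overline{\nu}), p)$ and is read back identically on the next step) and semantically inert, so that erasing it recovers exactly the kernel-based state machine. This is a straightforward correspondence $(\sigma, p) \leftrightarrow \sigma$ between the two state spaces, and once it is set up the equality of transition functions follows immediately from \Cref{thm:density-kernel}. I would state this as a simple bisimulation between the two \zl{infer} state machines and conclude that they emit the same stream of distributions when iterated from their (corresponding) initial states.
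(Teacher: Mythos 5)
Your proposal is correct and follows essentially the same route as the paper: match the initial states (a Dirac delta on the model's initial state versus that state itself), invoke \Cref{thm:density-kernel} to identify the per-state un-normalized measure $\psi(m)$ with $\psems{e}_\gamma(m)$, and observe that the remaining integration, normalization, and projection steps of \zl{infer} are literally identical in both semantics. Your additional remark that the constant $p$ threaded through the density-based state is semantically inert is a harmless bookkeeping point the paper leaves implicit.
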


\begin{proof}
By construction, in the density-based semantics, the first element of the initial state of \zl{infer} is a Dirac delta measure on the initial state of the model which corresponds to the initial state of \zl{infer} in the kernel-based semantics.

By \Cref{thm:density-kernel} the un-normalized measure defined by the density-based semantics matches the measure computed by the kernel-based semantics.
Given this measure, the rest of the transition function of \zl{infer} is the same in both cases.
\end{proof}

\section{Density-Based Relational Semantics}
\label{sec:rel}

An alternative to the operational view of the co-iterative semantics where expressions are interpreted as state machines is to define a relational semantics where expressions directly return streams of values~\cite{ColacoP03}.
This formalism has been used in the Vélus project to prove an end-to-end dataflow synchronous compiler within the Coq proof assistant~~\cite{BourkeBDLPR17, BourkeBP20, BourkeJPP21}.

In this section, we first present a relational semantics for the deterministic expressions of our language.
We then define a relational density-based semantics for probabilistic expressions and prove that this semantics is equivalent to the co-iterative density-based semantics, i.e., the \zl{infer} operator yields the same stream of distributions.

\paragraph{Notations}
In the following, $\Stream{V}$ is the type of infinite streams of values of type $V$.
The infix operator $(\cdot) : V \to \Stream{V} \to \Stream{V}$ is the stream constructor (e.g., $1 \cdot 2 \cdot 3 \cdot ...$). 
Constants are lifted to constant streams (e.g., $1 = 1 \cdot 1 \cdot 1 \cdot ...$) and when the context is clear we write $f(s) = f(s_0) \cdot f(s_1) \cdot ...$ for $\map{f}{s}$, and $(s, t) = (s_1, t_1) \cdot (s_2, t_2) \cdot ...$ to cast a pair of streams into a stream of pairs.

\subsection{Deterministic relational semantics}
\label{sec:sem:rel:dens}

In the relational semantics, deterministic expressions compute streams of values.
In a context $H$ which maps variables names to stream of values, the semantics of a deterministic expression $e$ returns a stream $s$: $\rsemexp{G, H}{e}{s}$.
The additional context $G$ stores global declarations (global constants and function definitions).
The semantics of a set of equations $E$ checks that the context $H$ is compatible with all the equations: $\rsemeq{G, H}{E}$.
The semantics of a set of equations thus defines a \emph{relation} between the streams stored in the context.
Compared to the co-iterative semantics, the relational semantics is not executable since the context must be guessed a priori and validated against the equations.

\begin{figure}
\begin{small}
\begin{mathpar}
\inferrule
    {}
    {G, H \vdash c \downarrow c}
\and
\inferrule
    {}
    {G, H \vdash x \downarrow H(x)}
\and
\inferrule
    {G, H \vdash e_1 \downarrow s_1\\ 
     G, H \vdash e_2 \downarrow s_2}
    {G, H \vdash \zlpair{e_1}{e_2} \downarrow (s_1, s_2)}
\and
\inferrule
    {G, H \vdash e \downarrow s}
    {G, H \vdash \zlop{e} \downarrow \op(s)}
\and
\inferrule
    {H(\attr{x}{last}) = s \\ }
    {G, H \vdash \zllast{x} \downarrow s}
\and
\inferrule
    {G, H \vdash e \downarrow s_e\\
     G(f) = \zlnode{f}{x}{e_f} \\
     G, [x \is s_e] \vdash e_f \downarrow s}
    {G, H \vdash f(e) \downarrow s}
\and
\inferrule
    {G, H + H_E \vdash E \\ 
        G, H + H_E \vdash e \downarrow s}
    {G, H \vdash \zlwhere{e}{E} \downarrow s}

\and
\inferrule
    {G, H \vdash e \downarrow H(x)}
    {G, H \vdash \zleq{x}{e}}
\and
\inferrule
    {G, H  \vdash e \downarrow i \cdot s \\ 
        H(\attr{x}{last}) = i \cdot H(x)}
    {G, H  \vdash \zlinit{x}{e}}
\and
\inferrule
    {G, H \vdash E_1 \\
     G, H \vdash E_2}
    {G, H \vdash \zland{E_1}{E_2}}
\end{mathpar}
\end{small}
\caption{Deterministic relational semantics (full version in \Cref{fig:sem:rel:det:full} of the appendix).}
\label{fig:sem:rel:det}
\end{figure}

\Cref{fig:sem:rel:det} presents the relational semantics for deterministic expressions and equations.
A constant is interpreted as a constant stream, and a variable returns the corresponding stream in the context.
The semantics of a pair evaluates each component independently and packs the results into a stream of pairs.
The application of an operator evaluates its argument into a stream of values and maps the operator on the result.
$\zllast{x}$ fetches a special variable $\attr{x}{last}$ in the context.
A function call first evaluates its argument, and then evaluates the body of the function in a context where the parameter is bound to the argument value.

To interpret an expression with a set of local declarations $\zlwhere{e}{E}$, equations $E$ are evaluated in a new context $H_E$ that is also used to evaluate the main expression $e$.
The semantics of a simple equation checks that a variable is associated to the stream computed by its defining expression.
The initialization operator $\zlinit{x}{e}$ prepends an initial value $i$ to the stream associated to $x$ and checks that the special variable $\attr{x}{last}$ is bound to this delayed version of $x$.
In the relational semantics, context are un-ordered maps and scheduling equations does not change the semantics.

\subsection{Probabilistic relational semantics}

The key idea of the probabilistic relational semantics is similar to the density-based co-iterative semantics: instead of manipulating streams of measures, probabilistic expressions compute streams of pairs (value, score) using external streams of random seeds, and integration is deferred to the \zl{infer} operator.

\Cref{fig:sem:rel:prob} presents the density-based relational semantics for probabilistic expressions and equations.
In a context $H$ which maps variables names to values, the semantics of a probabilistic expression $e$ takes an array of random streams $R$ and returns a stream of pairs (value, weight): $\prsemexp{G, H, R}{e}{(s, w)}$.
The semantics of a set of equations $E$ takes an array containing the random streams of all sub-expressions, checks that the contexts $H$ is compatible with all the equations, and returns the total weight $W$ of all sub-expressions: $\prsemeq{G, H, R}{E}{W}$.

The semantics of deterministic expressions (e.g., constant or variable) returns the expected stream of values associated to a constant weight of $1$.
The semantics of \zl{sample} takes an array containing one random stream $R$, evaluates its argument into a stream of distributions $s_\mu$, and uses the random stream to compute a stream of samples associated to the constant weight~$1$: $(\sample{{s_\mu}_0}{R_0}, 1) \cdot (\sample{{s_\mu}_1}{R_1}, 1) \cdot ...$.
The semantics of \zl{factor} evaluates its arguments into a stream of values $w$ which is used as the weight associated to a stream of empty values: $((), w_0) \cdot ((), w_1) \cdot ...$.
The semantics of a function call is similar to the deterministic case, but the random streams are split between the argument and the function body, and the total weight captures the weight of the argument and the weight of the function body.
Similarly, for an expression with a set of local definitions the random streams are split between sub-expressions and the weight is the total weight of all sub-expressions.

By construction, for any probabilistic expression $e$, the size of the array of random streams is the number of random variables defined in $e$, i.e., the number of \zl{sample}.
This information can be computed statically (as in the initialization functions of the co-iterative semantics in \Cref{sec:sem:coit:dens}), and in the following $\RV(e)$ returns the number of random variables in $e$. 

\begin{figure}
\begin{small}
\begin{mathpar}
\inferrule
    {G, H \vdash e \downarrow s}
    {G, H, [] \vdash e \Downarrow (s, 1)}
\and

\inferrule
    {G, H \vdash e \downarrow s_\mu}
    {G, H, [R] \vdash \zlsample{\alpha}{e} \Downarrow (\sample{s_\mu}{R}, 1)}
\and
\inferrule
    {G, H \vdash e \downarrow w}
    {G, H, [] \vdash \zlfactor{e} \Downarrow ((), w)}
\and
\inferrule
    {G, H, R_e \vdash e \downarrow (s_e, w_e)\\
     G(f) = \zlproba{f}{x}{e_f} \\ 
     G, [x \is s_e], R_f \vdash e_f \Downarrow (s, w)}
    {G, H, [R_e:R_f] \vdash \zlapp{f}{e} \Downarrow (s, w * w_e)}
\and
\inferrule
    {G, H + H_E, R_E \vdash E : w_E \\ 
        G, H + H_E, R_e \vdash e \Downarrow (s, w)}
    {G, H, R_E \vdash \zlwhere{e}{E} \Downarrow (s, w * w_E)}

\and
\inferrule
    {G, H, R \vdash e \Downarrow (H(x), w)}
    {G, H, R \vdash \zleq{x}{e} : w}
\and
\inferrule
    {G, H, R \vdash e \Downarrow (i \cdot s, w_i \cdot w) \\ 
        H(\attr{x}{last}) = i \cdot H(x)}
    {G, H, R \vdash \zlinit{x}{e} : w_i \cdot 1}
\and
\inferrule
    {G, H, R_1 \vdash E_1 : w_1 \\
     G, H, R_2 \vdash E_2 : w_2}
    {G, H, [R_1:R_2] \vdash \zland{E_1}{E_2} : w_1 * w_2}
\and
\inferrule
    { p = \RV(e)\\
      \left[G, H, R \vdash e \Downarrow (s, w) \qquad \overline{w} = \Pi\ w \right]_{R \in ([0,1]^\omega)^p}}
    {G, H \vdash \zlinfer{e} \downarrow \integ{p}{\overline{w}}{s}}
\end{mathpar}
\end{small}
\caption{Probabilistic relational semantics (full version in \Cref{fig:sem:rel:prob:full} of the appendix).}
\label{fig:sem:rel:prob}
\end{figure}

\subsection{Inference}
\label{sec:sem:rel:inf}

As in the density-based co-iterative semantics, the \zl{infer} operator is defined by integrating at each step an un-normalized density function over all possible values of the streams of random seeds.
The semantics of a probabilistic model returns a pair of stream functions (value, weight) which both depend on the random streams.
Given the random streams, at each time step, the semantics of \zl{infer} first computes the total weight of the prefix to capture all the conditioning since the beginning of the execution: $ \overline{w}(R) = \Pi \ w(R) = w_0(R) \cdot (w_0(R) * w_1(R)) \cdot (w_0(R) * w_1(R) * w_2(R)) \cdot ...$.
Then the function $\mit{integ}$ 1) turns $v_n$ and $\overline{w_n}$ into an un-normalized measures by integrating over all possible values of the random streams, and 2) normalizes the result to obtain a stream of distributions of values.
If $p = \RV(e)$ is the number of random variables in the model and $\lambda^p_\omega$ is the uniform measure over the cube of random streams $([0, 1]^\omega)^{p}$:
\begin{equation}\label{eq:rel:infer}
\begin{array}{lcl}

\integ{p}{(\overline{w} \cdot \overline{ws})}{(v \cdot vs)} &=& 
\left(
\letin{\mu = \int \overline{w}(R) \delta_{v(R)} \ \lambda^p_\omega(dR)} \mu/\mu(\top) \right) 
\cdot 
(\integ{}{ws}{vs})
\end{array}
\end{equation}

\paragraph{Cube of random streams}
The uniform measure over the cube of random streams is defined as follows.
Let $\Stream{[0,1]}$ be the countable product of the measurable spaces on the interval $[0, 1]$ endowed with the Lebesgue $\sigma$-algebra, i.e., the coarsest $\sigma$-algebra such that projections are measurable.
We define $\lambda_\omega$ as the uniform distribution on the \emph{continuous cube}
defined by a Kolmogorov extension such that for any $k \in \mathbb{N}$, the pushforward measure of $\lambda_\omega$ along the projection $\pi_{\leq k}: [0,1]^\omega\to [0,1]^k$ on the first $k$ coordinates is the Lebesgue measure on $[0, 1]^k$: $\lambda_{\leq k} = {\pi_{\leq k}}_{*}(\lambda_\omega)$.
For any measurable function $g: [0, 1]^k \to V$ we have the following change of variable formula:
\[
    \int g(\pi_{\leq k}(R))\ \lambda_\omega(dR) =\int g(R_{\leq k})\ \lambda_{\leq k}(dR_{\leq k})
\]
Integrating a function which only depends on the $k$ first coordinates of $R$ can thus be reduced to integrating over these coordinates.
We can then define the uniform measure on the cube of random streams $\lambda^p_\omega$ as the $p$-ary product measure of $\lambda_\omega$, and lift the change of variable formula.

\paragraph{Correctness}
For a probabilistic expression $e$, we first relate the relational semantics of \Cref{sec:sem:coit:dens} and the co-iterative semantics of \Cref{sec:sem:rel:dens}.
If $H$ is an environment mapping variables names to streams of values, $H_k$ is the environment where streams are projected on their $k$-th coordinate and $H_{\leq k}$ is the environment where streams are truncated at $k$.
We define similarly $R_{\leq k}$, and $R_k$ for an array of random streams~$R$.

\begin{proposition}
    For a causal probabilistic model $e$,
    if ${G, H, R \vdash e \Downarrow (s,w)}$ there is a co-iterative execution trace $m_0=\dsemi{e}_G$  and 
      ${\forall k>0,\ (m_{k+1}, v_{k+1}, w'_{k+1})=\dsems{e}_{H_{k+1}}(m_k, R_{k+1})}$ such that $\forall k >0,$
    $m_k$, $v_k$, $w'_k$ only depend on $H_{\leq k}$ and $R_{\leq k}$, and 
    $s_{k}(H)=v_{k}(H_{\leq k},R_{\leq k})$ and $w_{k}(H)=w'_{k}(H_{\leq k},R_{\leq k})$. 
\end{proposition}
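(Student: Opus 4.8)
The plan is to prove the statement by structural induction on the probabilistic expression $e$, carried out simultaneously with a companion induction on equation sets $E$, since $\zlwhere{e}{E}$ and $\zland{E_1}{E_2}$ are mutually defined. For the expression part the induction hypothesis is the statement as written; for the equation part I would state the analogous claim relating the relational judgement $G, H, R \vdash E : W$ to the co-iterative equation semantics $\dsems{E}$, namely that the stream environment $H$ restricted to the variables defined by $E$ has its $k$-th slice equal to the environment $\rho$ returned by the $k$-th co-iterative step, that $W_k$ equals the corresponding co-iterative weight, and that both depend only on $H_{\leq k}$ and $R_{\leq k}$. The temporal causality claim, that $m_k$, $v_k$, and $w'_k$ depend only on $H_{\leq k}$ and $R_{\leq k}$, I would carry inside the hypothesis and establish by an inner induction on $k$: the recurrence $(m_{k+1}, v_{k+1}, w'_{k+1}) = \dsems{e}_{H_{k+1}}(m_k, R_{k+1})$ reads its information from $H_{k+1}$, $R_{k+1}$, and the previous state $m_k$, and the latter depends only on $H_{\leq k}$ and $R_{\leq k}$ by the inductive hypothesis on $k$, so nothing beyond coordinate $k+1$ is ever consulted.

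First I would dispatch the base and deterministic cases. For a constant, a variable, or $\zllast{x}$, the two semantics agree at every slice by the deterministic relational semantics of \Cref{fig:sem:rel:det} and the co-iterative rule for deterministic expressions, with constant weight $1$ and immediate causality; any deterministic subexpression (including $\zlinfer{}$, which is deterministic) is handled by the same rule. The compositional deterministic cases, pairs and operator application, follow directly from the induction hypotheses on the subexpressions. The probabilistic leaves are equally direct: for $\zlsample{}{e}$ the co-iterative step returns $\sample{\mu}{r}$ with $\mu = \sems{e}_\gamma(m)$ and weight $1$, which matches the relational value slice by slice because the argument is deterministic and the same $\mathit{icdf}$ is applied to the same distribution and seed; for $\zlfactor{e}$ the weight produced at slice $k$ is the $k$-th value of the argument in both semantics. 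The function-application case splits the random-stream array as $[R_e : R_f]$ exactly as in both semantics and concludes by applying the two hypotheses to the argument and to the body and multiplying the weights.

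The crux, and the step I expect to be the main obstacle, is $\zlwhere{e}{E}$. Here the relational semantics merely asserts the existence of a stream environment $H_E$ satisfying the equations, whereas the co-iterative semantics recomputes a local environment at every instant as $\rho = \fix{F}$. I must therefore show that the fixpoint reached at slice $k$ coincides with the slice $(H_E)_k$, and this is where causality is essential. My plan is to verify that $(H_E)_k$ is itself a fixpoint of $F$: running $\dsems{E}_{\gamma + (H_E)_k}(M, R_E)$ reproduces $(H_E)_k$ by the equation-level hypothesis applied to each structurally smaller defining equation $\zleq{x}{e_x}$. Since $\fix{F}$ is the least fixpoint we have $\fix{F} \leq (H_E)_k$, and because $e$ is causal the argument of \Cref{sec:back:fixpoint} guarantees that $\fix{F}$ is total; as total environments are maximal in the flat CPO, this forces $\fix{F} = (H_E)_k$. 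The remaining care goes to $\zlinit{x}{e}$, whose co-iterative state threads the previous value of $x$: I would check by the induction on $k$ that the state carried from slice $k$ to slice $k+1$ is exactly $H(x)_k$, so that $\attr{x}{last}$ is bound to the delayed stream recorded by the relational rule $H(\attr{x}{last}) = i \cdot H(x)$, with the initial value $i$ serving as the base case at the first instant. Once the fixpoint environments are shown to coincide slice by slice, the computation of the main expression $e$ and of the aggregate weight $w * W$ follows from the expression- and equation-level hypotheses, the conditional and reset constructs being treated analogously.
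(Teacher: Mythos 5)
Your plan is correct in outline but takes a genuinely different route from the paper, which only gives a proof \emph{sketch}: the paper delegates the main direction to a V\'elus-style argument (from a relational derivation one constructs a state machine; that machine's semantics is the co-iterative semantics of the normalized, scheduled program; and normalization/scheduling are separately known to preserve the co-iterative semantics), and handles the prefix-dependence claim by a separate structural induction. You instead relate the two semantics directly, slice by slice, via a structural induction mutual between expressions and equation sets, interleaved with an induction on time to thread the state. The real content of your plan is the $\zlwhere{e}{E}$ case: observing that the $k$-th slice of any relational solution $H_E$ is a fixpoint of the step functional $F$, hence dominates $\fix{F}$, and that causality makes $\fix{F}$ total and therefore maximal in the flat CPO, forcing equality. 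This is a clean, self-contained argument the paper never makes explicit, and it yields as a byproduct the uniqueness of $H_E$ for causal programs; what the paper's route buys instead is reuse of existing compiler-correctness machinery and no direct reasoning about the fixpoint. Two spots in your plan deserve more than ``analogously'': (i) $\zlreset{e_1}{e_2}$, whose relational rule quantifies over a whole family of derivations in the shifted environments $\tlOp^{n}\,H$ and slices them, so you need a restart lemma identifying the co-iterative machine relaunched from its saved initial state at instant $n$ with the relational derivation for $\tlOp^{n}\,H$; and (ii) function application, where the body $e_f$ is not a syntactic subexpression, so the structural induction must be strengthened to a well-founded induction through the acyclic call graph (as the paper does explicitly in its APF correctness proof). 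Neither is an obstacle, but both need to be stated for the induction to go through.
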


\begin{sketch}
    This proposition states that if a program is causal, i.e., if all equations can be scheduled, the co-iterative semantics and the relational semantics coincide.

    First, we can show that if a relational semantics exists, there exists a state machine whose execution matches the relational semantics. 
    A similar proof is at the heart of the Vélus compiler~\cite{BourkeBDLPR17}.
    While adapting this proof to the ProbZelus language with explicit scores is far from easy, it does not offer any new insights.
    Second, the semantics of the compiled state-machines is deterministic and corresponds to the co-iterative semantics of the normalized scheduled program which does not require any fixpoint computation.
    Finally, the co-iterative semantics can be used to prove that source-to-source transformations preserve the semantics, in particular the normalization and scheduling passes~\cite{emsoft23b}.
    The semantics of the state machine thus corresponds to the co-iterative semantics of the original program.
        
    The property also states that at each instant, the output of a causal model only depends on past inputs and states which can be proved by induction on the structure of the program.
\end{sketch}

\medskip

As in \Cref{sec:coit:inf}, we can now state the main correctness theorem, i.e., the \zl{infer} operator yields the same stream of distributions in the co-iterative semantics and in the relational semantics.

\begin{theorem}[Relational semantics correctness]\label{thm:rel:bisim}
For a causal probabilistic model $e$, and for all environments $G, H$, if $\ {G, H \vdash \zlinfer{e} \downarrow \mu}$ then the co-iterative execution trace yields the same stream of distributions, i.e., $\sigma_0, p =\semi{\zlinfer{e}}_G$ and $\forall k > 0, (\sigma_k, p), \mu_k = \sems{\zlinfer{e}}_{H_{k+1}}(\sigma_k, p)$.
\end{theorem}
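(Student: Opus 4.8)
The plan is to reduce the theorem to a single induction on the time index, using the preceding proposition as the bridge between the two semantics, and then cancelling the per-step normalizers that distinguish them. First I would unfold the relational rule for $\zlinfer{e}$ in \Cref{fig:sem:rel:prob}: from $G, H \vdash \zlinfer{e} \downarrow \mu$ we read off $p = \RV(e)$, a family of derivations $G, H, R \vdash e \Downarrow (s, w)$ indexed by $R \in ([0,1]^\omega)^p$, the cumulative weight $\overline{w} = \Pi\, w$, and $\mu = \integ{p}{\overline{w}}{s}$. By the definition of $\mathit{integ}$ in \Cref{eq:rel:infer}, the $k$-th element of $\mu$ is the normalization of $\int \overline{w_k}(R)\, \delta_{s_k(R)}\, \lambda^p_\omega(dR)$. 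I would then invoke the preceding proposition to obtain a co-iterative density execution trace $m_0 = \dsemi{e}_G$ and $(m_k, v_k, w'_k) = \dsems{e}_{H_k}(m_{k-1}, R_k)$ in which each $m_k$, $v_k$, $w'_k$ depends only on $H_{\leq k}$ and $R_{\leq k}$, and which satisfies $s_k = v_k$ and $w_k = w'_k$. This replaces the opaque relational streams by the very quantities that the co-iterative $\zlinfer{e}$ of \Cref{eq:infer:coit} consumes.

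The core is then an induction on $k$ with the following invariant on the co-iterative $\zlinfer{e}$ of \Cref{eq:infer:coit}. Writing $\sigma_k$ for the normalized distribution of states carried after $k$ steps, I claim $\sigma_k = \Sigma_k / \Sigma_k(\top)$, where $\Sigma_k = \int \overline{w_k}(R)\, \delta_{m_k(R)}\, \lambda^p_{\leq k}(dR_{\leq k})$ is the cumulative-weight-tilted pushforward to the state $m_k$ of the finite cube measure (with $\lambda^p_{\leq k}$ the $p$-ary product of $\lambda_{\leq k}$), and that the $k$-th output of $\zlinfer{e}$ is $\pi_{2*}$ of the normalized joint measure $\Xi_k = \int \overline{w_k}(R)\, \delta_{(m_k(R), v_k(R))}\, \lambda^p_{\leq k}(dR_{\leq k})$, whose state marginal is $\Sigma_k$. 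The base case $k = 0$ is immediate since $\sigma_0 = \delta_{m_0}$, $\overline{w_0} = 1$, and $\lambda^p_{\leq 0}$ is trivial.

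For the inductive step I would compute $\nu_k = \int \sigma_{k-1}(dm)\, \psi_k(m)$ with $\psi_k(m) = \int_{[0,1]^p} w * \delta_{(m',v)}\, dr$ and $(m', v, w) = \dsems{e}_{H_k}(m, r)$, substitute the hypothesis for $\sigma_{k-1}$, and apply Fubini to merge the finite cube integral with the fresh per-step integral over $[0,1]^p$. Renaming the fresh seed as the new coordinate $R_k$ and using $\overline{w_k} = \overline{w_{k-1}} \cdot w'_k$, this gives $\nu_k = \Xi_k / \Sigma_{k-1}(\top)$, so the scalar $1/\Sigma_{k-1}(\top)$ cancels under the step normalization $\overline{\nu}_k = \nu_k / \nu_k(\top) = \Xi_k / \Xi_k(\top)$. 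Hence $\sigma_k = \pi_{1*}(\Xi_k)/\Xi_k(\top) = \Sigma_k/\Sigma_k(\top)$, closing the induction, and the $k$-th output equals $\pi_{2*}(\Xi_k)/\Xi_k(\top)$. Finally, since $\overline{w_k}$ and $v_k = s_k$ depend only on $R_{\leq k}$, the change-of-variable formula for the cube of random streams rewrites $\int(\cdots)\,\lambda^p_{\leq k}$ as $\int(\cdots)\,\lambda^p_\omega$, identifying $\pi_{2*}(\Xi_k)/\Xi_k(\top)$ with the $k$-th element of the relational $\mu$; the two streams of distributions therefore coincide.

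The main obstacle I expect is reconciling the two normalization regimes: the co-iterative $\zlinfer{e}$ re-normalizes and integrates fresh seeds at every step, whereas the relational $\mathit{integ}$ integrates once against the whole cube and normalizes a single cumulative weight. The crux is precisely the cancellation of the intermediate normalizers $\Sigma_{k-1}(\top)$ --- that is, that sequential Bayesian updating agrees with a single global conditioning --- together with the measure-theoretic bookkeeping (Fubini plus the Kolmogorov-extension change of variable) needed to align the fresh per-step seeds of the co-iterative semantics with successive coordinates of the random streams. A secondary point, which the paper delegates to the programmer, is to ensure that each normalizer $\Xi_k(\top)$ is finite and nonzero, so that every step of $\zlinfer{e}$ is well defined.
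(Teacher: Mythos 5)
Your proposal is correct and follows essentially the same route as the paper's proof: both invoke the preceding proposition to identify the relational streams with the co-iterative trace, run an induction on the time index with the invariant that $\sigma_k$ is the normalized cumulative-weight-tilted pushforward onto states over the finite cube $\lambda^p_{\leq k}$, use Fubini to absorb the fresh per-step seed integral into the next coordinate, and finish with the Kolmogorov change-of-variable formula to pass from $\lambda^p_{\leq k}$ to $\lambda^p_\omega$. The only difference is presentational: the paper hides the cancellation of the intermediate normalizers behind $\propto$, whereas you make the scalar $\Sigma_{k-1}(\top)$ explicit, which is a faithful elaboration rather than a different argument.
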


\begin{proof}    
    If $p$ is the number of random variables in the model, we show $\forall k > 0$: 
    
    \begin{small}
    \begin{align*}
        \sigma_{k+1}(H_{k+1}, \sigma_k) \propto \int_{([0, 1]^k)^p} \overline{w_k}(H_{\leq k}, R_{\leq k}) * \delta_{m_k(H_{\leq k}, R_{\leq k})} \lambda^p_{\leq k}(d R_{\leq k})\\
        \mu_{k+1}(H_{k+1}, \sigma_k) \propto \int_{([0, 1]^k)^p}  \overline{w_k}(H_{\leq k}, R_{\leq k}) * \delta_{v_k(H_{\leq k}, R_{\leq k})} \lambda^p_{\leq k}(d R_{\leq k})
    \end{align*}
    \end{small}

    \noindent
    By the induction hypothesis, $\forall k$, $H$, $R$, $s(H,R)_k=v_k(H_{\leq k}, R_{\leq k})$ and $w(H,R)_k= w'_k(H_{\leq k}, R_{\leq k})$.
    From the definition of $\psi$ in \Cref{eq:infer:coit} and Fubini's theorem we have:

    \begin{small}
    \begin{align*}
        \nu_{k+1} &= \displaystyle \int \sigma_k(dm) \ \psi(m)\\
        &\propto \int_{([0,1]^{k})^p}
        \overline{w_k}(H_{\leq k},R_{\leq k}) * \psi(m_k(H_{\leq k},R_{\leq k}))\ \lambda^n_{\leq k}(d R_{\leq k})\\
       &\propto \int_{([0,1]^{k})^p} \int_{[0, 1]^p}
       \overline{w_{k}}(H_{\leq k},R_{\leq k}) * w_{k+1}(H_{k+1},R_{k+1})\\[-0.75em]
       & \qquad \qquad \qquad \qquad \quad * \delta_{m_{k+1}(H_{\leq k+1},R_{\leq k+1}), v_{k+1}(H_{\leq k+1},R_{\leq k+1})}\ \lambda^p(d R_{k+1})\lambda^p_{\leq k}(d R_{\leq k})\\
       &\propto \int_{([0,1]^{k+1})^p}
       \overline{w_{k+1}}(H_{\leq k+1},R_{\leq k+1}) * \delta_{m_{k+1}(H_{\leq k+1},R_{\leq k+1}), v_{k+1}(H_{\leq k+1},R_{\leq k+1})}\ \lambda^p_{\leq k+1}(d R_{\leq k+1})
    \end{align*}
    \end{small}

    \noindent
    The normalization and marginalization by $\pi_{1*}$ and $\pi_{2*}$ concludes the inductive case.
    Then using the change of variable formula on the cube of random streams we get:
    \[
        \begin{small}
        \mu_k \propto \int \overline{w_k}(R)\delta_{s_k(R)}\lambda^p_\omega(dR)
        \end{small}
    \]
    which corresponds to \Cref{eq:rel:infer} and concludes the proof.
\end{proof}

\subsection{Program equivalence}
Compared to the co-iterative semantics where proving the equivalence between two state machines requires a bisimulation, in the relational semantics, to prove the equivalence between two programs one need only check that they define the same streams.

\begin{definition}
Deterministic expressions $e_1$ and $e_2$ are equivalent if for all contexts $G, H$: \[G, H \vdash e_1 \downarrow s \quad \text{and} \quad G, H \vdash e_2 \downarrow s\]
\end{definition}

Two probabilistic expressions are equivalent if they describe the same stream of measures obtained by integrating at each step the streams of pairs (value, weight) computed by the density-based relational semantics. 

\begin{definition}
    Probabilistic expressions $e_1$ and $e_2$ with $\RV(e_1) = p_1$ and $\RV(e_2) = p_2$ are equivalent if for any context $G, H$:
    \[
    \forall k > 0. \ 
    \int {\overline{w_1}}_k(R_1) * \delta_{{s_1}_k(R_1)}\ d \lambda^{p_1}_\omega(R_1) 
    = 
    \int {\overline{w_2}}_k(R_2) * \delta_{{s_2}_k(R_2)}\ d \lambda^{p_2}_\omega(R_2)
    \]
    where for all random streams $R_1, R_2$: $G, H, R_1 \vdash e_1 \Downarrow (s_1, w_1)$ and $G, H, R_2 \vdash e_2 \Downarrow (s_2, w_2)$.
\end{definition}

In the relational semantics, for a given context, each pair (value, weight) is a function of the random streams.
Since, random streams are uniformly distributed, if we can map the random streams of expression $e_1$ to the random streams of expression $e_2$ while preserving uniform distributions, program equivalence can be reduced to the comparison of the streams of pairs (value, weight) computed by each expression.

\begin{proposition}[Probabilistic equivalence]
    \label{prop:equiv}
    Probabilistic expressions $e_1$ and $e_2$ with $\RV(e_1) = p_1$ and $\RV(e_2) = p_2$ are equivalent if there is a measurable function $f: ([0, 1]^\omega)^{p_1} \to ([0, 1]^\omega)^{p_2}$ such that, for all contexts $G, H$:
    \begin{itemize}
    \item if $f(R_1) = R_2$ and $G, H, R_i \vdash e_i \Downarrow (s_i, w_i)$, then $s_1(R_1) = s_2(R_2)$ and $w_1(R_1) = w_2(R_2)$, and
    \item $\lambda^{p_2}_\omega$ is the pushforward of $\lambda^{p_1}_\omega$ along $f$, i.e., $\lambda^{p_2}_\omega = f_*(\lambda^{p_1}_\omega)$.
    \end{itemize}
\end{proposition}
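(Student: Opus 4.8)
The plan is to derive the required equality of measures, for each $k > 0$, by a single change of variables along $f$: I would rewrite the right-hand integral over $R_2$ as an integral over $R_1$ using the pushforward identity $\lambda^{p_2}_\omega = f_*(\lambda^{p_1}_\omega)$, and then collapse the two integrands into one using the pointwise agreement of values and weights. So the whole argument reduces to combining the two hypotheses through the standard change-of-variables formula for pushforward measures.

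Concretely, I would first unfold the definition of probabilistic equivalence: for each $k$ I must show that the measure $\int {\overline{w_1}}_k(R_1) * \delta_{{s_1}_k(R_1)}\ d\lambda^{p_1}_\omega(R_1)$ equals $\int {\overline{w_2}}_k(R_2) * \delta_{{s_2}_k(R_2)}\ d\lambda^{p_2}_\omega(R_2)$. Since both sides are (finite) measures on the value space, it suffices to check they assign the same mass to every measurable set $U$. Testing against $U$ turns each measure-valued integral into the scalar integral $\int {\overline{w_i}}_k(R_i) * [\,{s_i}_k(R_i) \in U\,]\ d\lambda^{p_i}_\omega(R_i)$, where $[\cdot]$ denotes the indicator of the event. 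To the right-hand scalar integral I would then apply the change-of-variables formula for pushforward measures: for any measurable $g$, $\int g(R_2)\, d\lambda^{p_2}_\omega(R_2) = \int g(f(R_1))\, d\lambda^{p_1}_\omega(R_1)$, which rewrites it as $\int {\overline{w_2}}_k(f(R_1)) * [\,{s_2}_k(f(R_1)) \in U\,]\ d\lambda^{p_1}_\omega(R_1)$, now over the same domain and measure as the left-hand side.

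Next I would invoke the first hypothesis. With $R_2 = f(R_1)$ it gives $s_1(R_1) = s_2(R_2)$ and $w_1(R_1) = w_2(R_2)$ as equalities of streams; projecting on the $k$-th coordinate yields ${s_1}_k(R_1) = {s_2}_k(f(R_1))$. Since the cumulative product ${\overline{w}}_k = \Pi_{i=0}^{k} w_i$ is a fixed deterministic function of the first $k+1$ coordinates of the weight stream, coordinatewise equality of $w_1(R_1)$ and $w_2(f(R_1))$ lifts to ${\overline{w_1}}_k(R_1) = {\overline{w_2}}_k(f(R_1))$. Substituting both equalities makes the transformed right-hand integrand identical to ${\overline{w_1}}_k(R_1) * [\,{s_1}_k(R_1) \in U\,]$, so the two integrals agree for every $U$; hence the two measures coincide for every $k$, which is exactly the definition of equivalence.

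I expect the algebra above to be routine and the genuine work to lie in the measure-theoretic bookkeeping. The main obstacle is justifying the reduction from a measure-valued integral to scalar integrals tested on sets, and ensuring the abstract change-of-variables formula applies: I must know that $R_i \mapsto {s_i}_k(R_i)$ and $R_i \mapsto {\overline{w_i}}_k(R_i)$ are measurable, and that swapping the mass evaluation $(\cdot)(U)$ with the integral is legitimate for the nonnegative integrand (by Tonelli, or a monotone-class argument). Measurability of $f$ is assumed, and measurability of the semantic value and weight streams should follow from the structure of the relational semantics of \Cref{fig:sem:rel:prob}; I would state this explicitly as the one ingredient the proof relies on beyond the two stated hypotheses.
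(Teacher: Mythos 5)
Your proposal is correct and follows essentially the same route as the paper: a direct application of the change-of-variables formula for the pushforward $\lambda^{p_2}_\omega = f_*(\lambda^{p_1}_\omega)$, followed by substitution of the pointwise equalities of streams and weights (the paper's one-line computation is exactly your argument, minus the explicit testing against measurable sets and the measurability remarks, which are sound additional care rather than a different method).
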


\begin{proof}
The proof is a direct application of the change of variable formula.
For all contexts $G, H$, and $k >0$:
\begin{small}
\begin{align*}
    \int {\overline{w_2}}_k(R_2) * \delta_{{s_2}_k(R_2)}\ \lambda^{p_2}_\omega(dR_2)
    &= \int {\overline{w_2}}_k(f(R_1)) * \delta_{{s_2}_k(f(R_2))}\ f_*(\lambda^{p_1}_\omega)(dR_2)\\
    &=\int {\overline{w_1}}_k(R_1) * \delta_{{s_1}_k(R_1)}\ \lambda^{p_1}_\omega(dR_1) 
    \end{align*}
\end{small}
\end{proof}

Finding such a mapping is in general difficult.
A useful simple case is when the two programs involve the same random variables in different orders, e.g., a program and its compiled version after a source-to-source transformation.
In this case, the measurable function is a permutation of the random streams, and two expressions are equivalent if they compute the same stream of pairs (value, weight).

The relational semantics of an expression is described by a derivation tree where each relation is a consequence of smaller relations on all the sub-expressions, up to atomic expressions.
Two expressions compute the same streams if from the derivation tree of the first, one can build a derivation tree for the second and vice-versa.

\begin{example*} If $x$ and $y$ are not free variables in expressions $e_1$ and $e_2$:
\[ \zlsample{}{e_1}+\zlsample{}{e_2} \sim \zlwhere{x+y}{\zland{\zleq{x}{\zlsample{}{e_2}}}{\zleq{y}{\zlsample{}{e_1}}}}\]

\noindent
Let $R_i$ be the random streams associated to the expressions $\zlsample{}{e_i}$.
For all context $G, H$, if $G, H \vdash e_i \downarrow \mu_i$ we define $s_i = \sample{\mu_i}{R_i}$.
Then, the derivation tree for the lhs expression is:

\begin{footnotesize}
\[
\begin{prooftree}
\hypo{G, H, R_1 \vdash \zlsample{}{e_1} \Downarrow (s_1, 1)}
\hypo{G, H, R_2 \vdash \zlsample{}{e_2} \Downarrow (s_2, 1)}
\infer2{
    G, H, [R_1: R_2] 
    \vdash \zlsample{}{e_1}+\zlsample{}{e_2}
    \Downarrow (s_1 + s_2, 1)
}
\end{prooftree}
\]
\end{footnotesize}
\medskip

\noindent
With $H_E = [x \is s_2, y \is s_1]$, the derivation tree for the rhs expression is:

\begin{footnotesize}
\[
\begin{prooftree}
\hypo{
    G, H + H_E, [] \
    \vdash x + y \Downarrow (s_2 + s_1,1) 
}
\hypo{
    G, H + H_E, R_2
    \vdash \zlsample{}{e_2}
    \Downarrow (s_2, 1)
}
\infer1{
    G, H + H_E, R_2
    \vdash \zleq{x}{\zlsample{}{e_2}}
    : w_2 
}
\hypo{
    G, H + H_E, R_1
    \vdash \zlsample{}{e_1}
    \Downarrow (s_1, 1)
}
\infer1{
    G, H+ H_E, R_1 
    \vdash \zleq{y}{\zlsample{}{e_1}}
    : w_1 
}
\infer2{
    G, H + H_E, [R_2:R_1]
    \vdash 
        \zland
            {\zleq{x}{\zlsample{}{e_2}}}
            {\zleq{y}{\zlsample{}{e_1}}}
    : w_1 * w_2}
\infer2{
    G, H, [R_2: R_1] 
    \vdash 
        \zlwhere
            {x+y}
            {\zland
                {\zleq{x}{\zlsample{}{e_2}}}
                {\zleq{y}{\zlsample{}{e_1}}}}
    \Downarrow (s_2 + s_1, 1)
}
\end{prooftree}
\]
\end{footnotesize}
\medskip

\noindent
Since both programs compute the same stream of pairs (value, weight), and the permutation $f([R_1:R_2]) = [R_2:R_1]$ preserves the uniform distribution, the two programs are equivalent.

\end{example*}

\section{Application: Assumed Parameters Filtering}
\label{sec:apf}

ProbZelus probabilistic models are state-space models that can involve two kinds of random variables.
\emph{State parameters} are represented by a stream of random variables which evolve over time depending on the previous values and the observations.
\emph{Constant parameters} are represented by a random variable whose value is progressively refined from the \emph{prior} distribution with each new observation.

\begin{example*}
Consider the example of \Cref{sec:example} where the boat is drifting at a constant speed~$\theta$.
We want to estimate both the moving position (state parameter), and the drift speed (constant parameter).
The motion model \zl{f} is now defined as follows (where the noise parameter \zl{st} is a global constant):

\begin{lstlisting}
let proba f(pre_x) = pre_x + theta where
  rec init theta = sample(gaussian(zeros, st))
  and theta = last theta
\end{lstlisting}
\end{example*}

\paragraph{Filtering}
To estimate state parameters, Sequential Monte Carlo~(SMC) inference algorithms rely on filtering techniques~\cite{doucet-smc-2006,chopin2020SMC}.
Filtering is an approximate method which consists of deliberately losing information on the current approximation to refocus future estimations on the most significant information.
These methods are particularly well suited to the reactive context where a system in interaction with its environment never stops and must execute with bounded resources.
All ProbZelus inference methods are SMC algorithms~\cite{rppl-short,ss-oopsla22,zlax_lctes22}.
Unfortunately, this loss of information is harmful for the estimation of constant parameters which do not change over time.

The most basic SMC algorithm, the \emph{particle filter}, approximates the posterior distribution by launching a set of independent executions, called \emph{particles}.
At each step, each particle returns a value associated to a score which measures the quality of the value w.r.t. the model.
To recenter the inference on the most significant particles, the inference runtime periodically resamples the set of particles according to their weights.
The most significant particles are then duplicated while the least interesting ones are dropped.

\begin{figure}
  \centering
  \begin{tabular}{@{}ccc@{}}
  \begin{subfigure}[b]{.32\textwidth}
  \begin{tikzpicture}
  \node[rotate=90] at (-2.3, 0) {\textsc{pf}}; 
  \node at (0, 0) {\includegraphics[width=\textwidth]{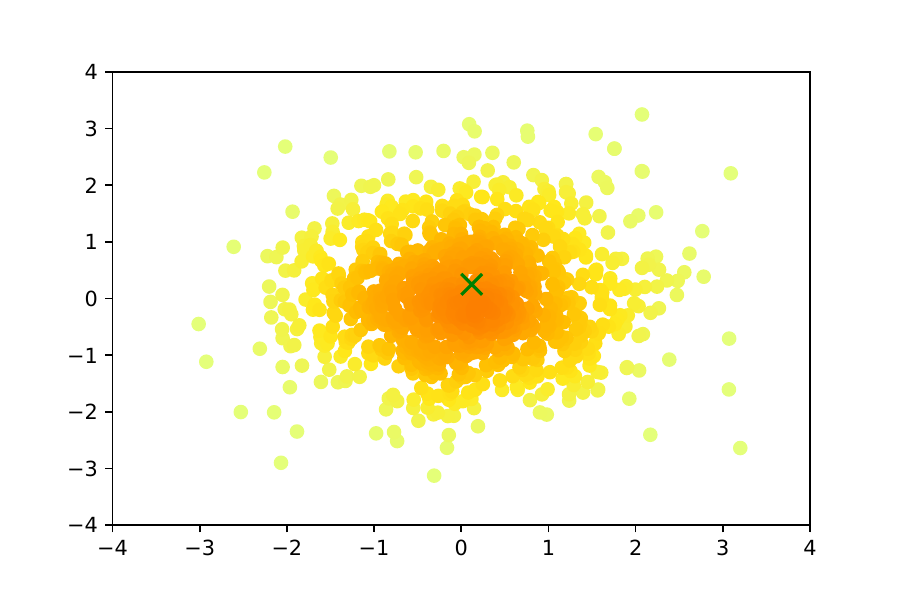}};
  \end{tikzpicture}
  \end{subfigure}
  &
  \begin{subfigure}[b]{.32\textwidth}

  \begin{tikzpicture}
  \node at (0, 0) {\includegraphics[width=\textwidth]{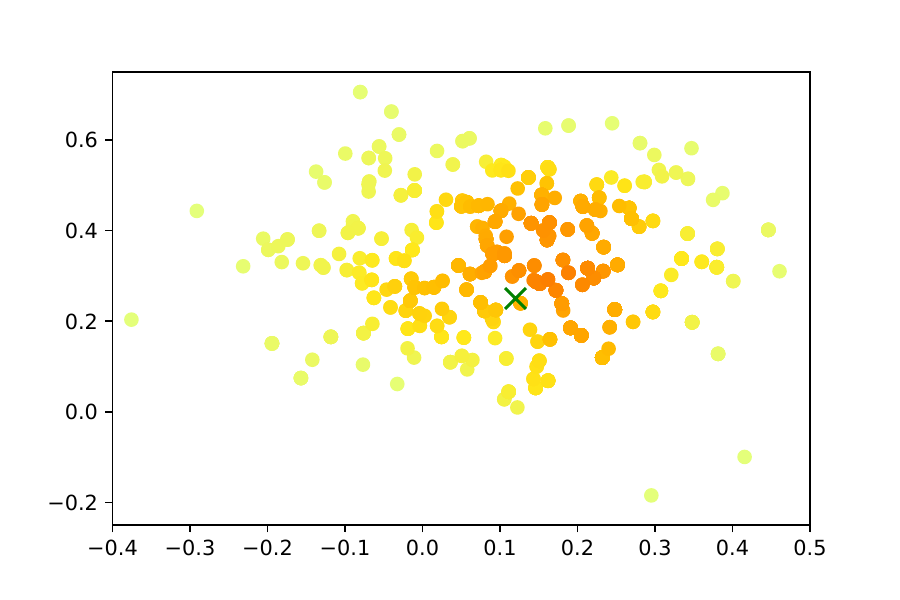}};
  \end{tikzpicture}
  \end{subfigure}
  &
  \begin{subfigure}[b]{.32\textwidth}
  \begin{tikzpicture}
  \node at (0, 0) {\includegraphics[width=\textwidth]{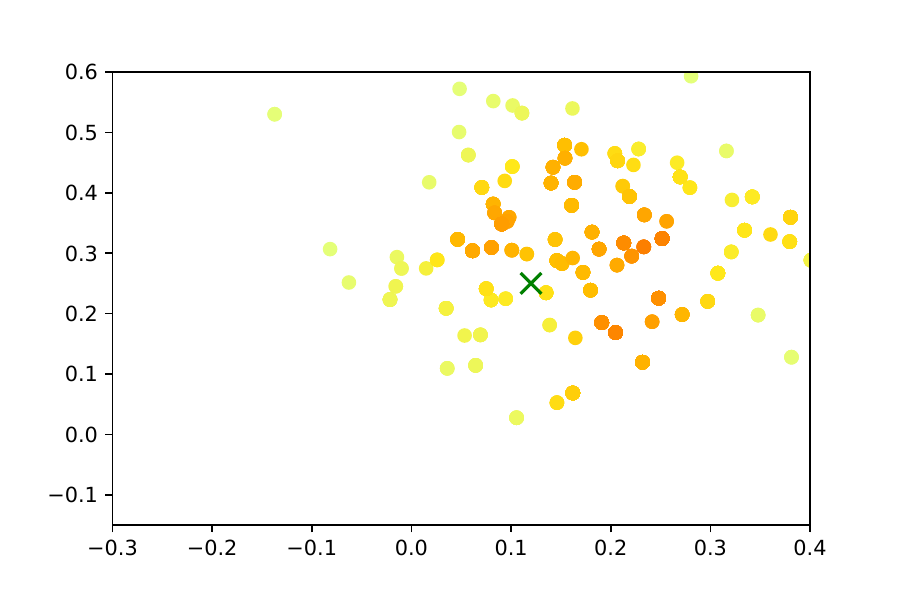}};
  \end{tikzpicture}
  \end{subfigure}
  \\[-1.5em]
  \begin{subfigure}[b]{.32\textwidth}
  \begin{tikzpicture}
  \node[rotate=90] at (-2.3, 0) {\textsc{apf}}; 
  \node at (0, 0) {\includegraphics[width=\textwidth]{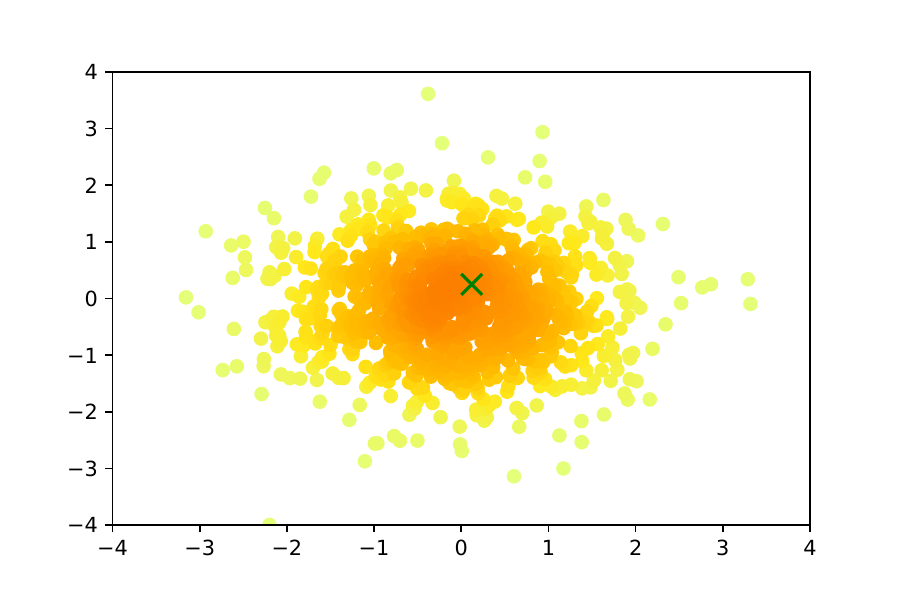}};
  \end{tikzpicture}
  \caption{$t=0$}
  \end{subfigure}
  &
  \begin{subfigure}[b]{.32\textwidth}
  \begin{tikzpicture}
  \node at (0, 0) {\includegraphics[width=\textwidth]{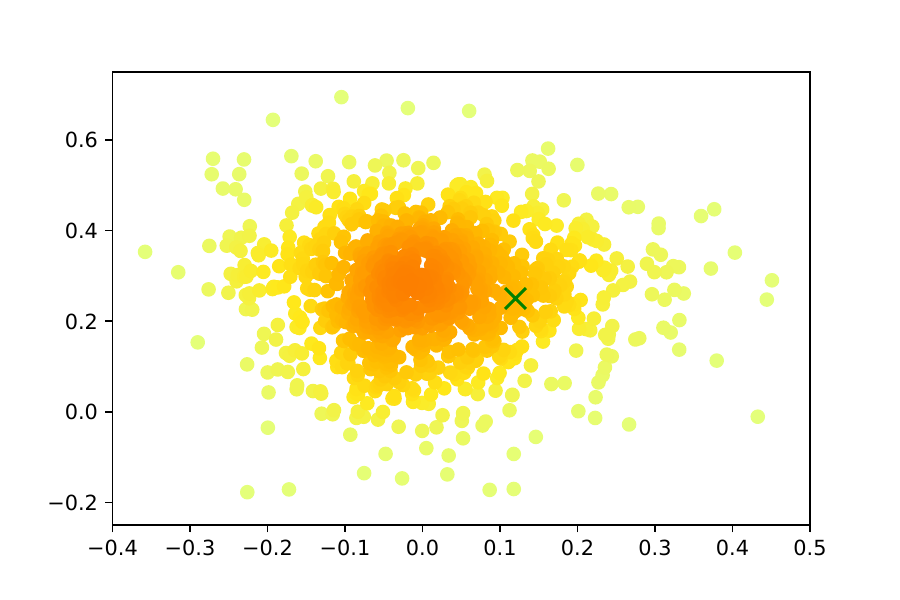}};
  \end{tikzpicture}
  \caption{$t=30$}

  \end{subfigure}
  &
  \begin{subfigure}[b]{.32\textwidth}
  \begin{tikzpicture}
  \node at (0, 0) {\includegraphics[width=\textwidth]{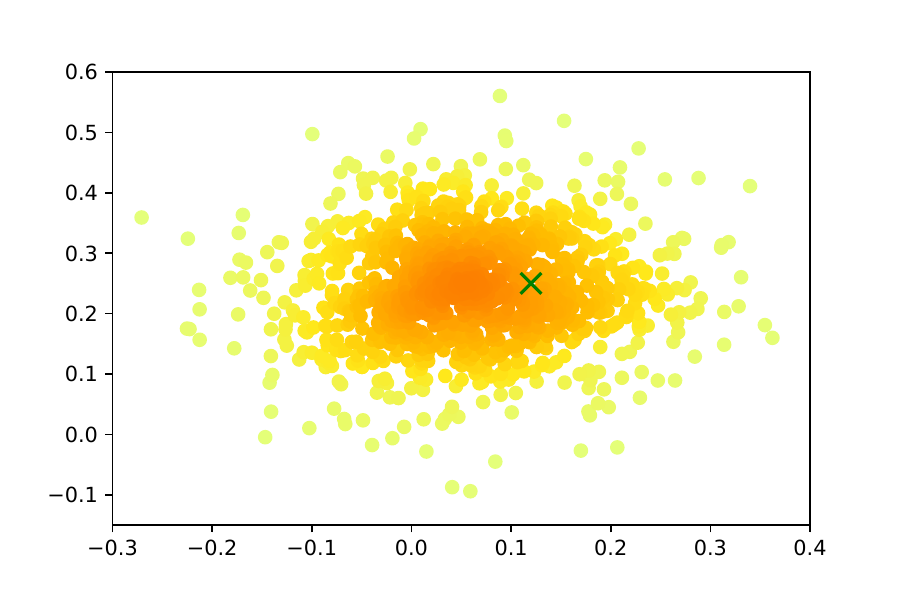}};
  \end{tikzpicture}
  \caption{$t=50$}
  \end{subfigure}
  \end{tabular}
  \caption{Estimates of the \texttt{theta} parameter of the radar example over time with a particle filter (\textsc{PF} at the top) and assumed parameter filter (\textsc{APF} at the bottom) .
   The true drift speed is indicated by a green cross.
   The color gradient represents the dot density.
   The scale shrinks over time.
   Results may differ in between runs.}
  \label{fig:impoverishment}
\end{figure}

Unfortunately, constant parameters are only sampled at the beginning of the execution of each particle.
After each resampling step, the duplicated particles share the same value for \zl{theta}.
The quantity of information used to estimate constant parameters thus strictly decreases over time until eventually, there is only one possible value left. 
The upper part of the \Cref{fig:impoverishment} graphically illustrates this phenomenon.

\paragraph{Assumed Parameter Filter}
To mitigate this issue, the \emph{Assumed Parameter Filter}~(APF)~\cite{ErolWLR17} split the inference into two steps: 1)~estimate the state parameters, and 2)~update the constant parameters.
The lower part of \Cref{fig:impoverishment} illustrates the results of APF on the estimation of the drift speed for our radar example. 

The APF algorithm assumes that constant parameters are well identified and that the model is written in a form which makes it possible to update the constant parameters given the state parameters.
The constant parameters must be an input of the model, and their prior distributions an input of a new inference operator \zl{APF.infer} (the APF algorithm is described in \Cref{app:apf:algo}).

\medskip

In this section, we present a program transformation which generates models that are exploitable by the APF algorithm.
First, a static analysis identifies the constant parameters and their prior distributions.
Then a compilation pass transforms these parameters into additional input of the model.
We use the relational semantics of \Cref{sec:rel} to prove the correctness of this transformation, i.e., the transformation preserves the ideal semantics of the program.

\begin{example*}
The compiled radar model for APF is the following:

\begin{lstlisting}
let f_prior = gaussian(zeros, id)
proba f_model(theta, pre_x) = pre_x + theta

let tracker_prior = f_prior
proba tracker_model(theta, y_obs) = x where
  rec init x = x_init
  and x = sample(gaussian(f(theta, last x), sx))
  and y = g(x)
  and () = observe(gaussian(y, sy), y_obs)

node main(y_obs) = msg where
    rec x_dist = APF.infer (tracker_model, tracker_prior, y_obs)
    and msg = controller(x_dist)
\end{lstlisting}
\end{example*}

\subsection{Static Analysis}
\label{sec:apf:typing}

\begin{figure}[t]
\begin{small}
\begin{mathpar}
\inferrule
    {\vdashl{C}{e}}
    {\vdashc{\apfenv, C}{\zldef{x}{e}}{\apfenv, C + \{x\}}}

\inferrule
    {\vdashc{C}{e}{\apftype}}
    {\vdashc{\apfenv, C}{\zlproba{f}{x}{e}}{\apfenv + \{f \is \apftype\}, C}}

\\

\inferrule
	{\vdashc{C}{e_1}{\apftype}}
	{\vdashc{C}{\zlpresent{e_1}{e_2}{e_3}}{\apftype}}

\inferrule
	{\vdashc{C}{e_2}{\apftype}}
	{\vdashc{C}{\zlreset{e_1}{e_2}}{\apftype}}

\inferrule
	{ }
	{\vdashc{C}{\zlapp{f_\theta}{e}}{\{\theta \is \zlfprior{f}\}}}

\inferrule
    {\vdashc{C}{e}{\apftype_e}\\
     \vdashleq{C}{E}{D}\\
     \vdashceq{C}{D}{E}{\apftype_E}}
    {\vdashc{C}{\zlwhere{e}{E}}{\apftype_e + \apftype_E}}

\inferrule
    {x \in D\\
     \vdashl{C}{e}}
    {\vdashceq{C}{D}{\zlisample{x}{}{e}}{\{x \is e\}}}

\inferrule
    {\vdashc{C}{e}{\apftype}}
    {\vdashceq{C}{D}{\zleq{x}{e}}{\apftype}}

\\
\inferrule
    { }
    {\vdashleq{C}{\zleq{x}{\zllast{x}}}{\{x\}}}

\inferrule
    {\vdashl{C}{e}}
    {\vdashleq{C}{\zleq{x}{e}}{\{x\}}}

\inferrule
    {x \in C}
    {\vdashl{C}{x}}

\inferrule
    {\vdashl{C + \dom{E}}{e}\\
     \vdashleq{C}{E}{\dom{E}}}
    {\vdashl{C}{\zlwhere{e}{E}}}
\end{mathpar}
\end{small}
\caption{Extract constant parameters and associated prior distributions (full type system in~\Cref{app:apf:typing}).}
\label{fig:apf_sa}
\end{figure}

The goal of the static analysis is to identify the constant parameters of each probabilistic node, i.e., initialized random variables~($\zlinit{x}{\zlsample{}{e}}$) that are also constant~($\zleq{x}{\zllast{x}}$).
For a program $\mathit{prog}$ the judgement $\vdashc{\emptyset, \emptyset}{\mathit{prog}}{\apfenv, C}$ builds the environment~$\apfenv$ which associates to each probabilistic node a type~$\apftype$ which maps constant parameters to their prior distributions. 
The environment~$C$ contains the global constant variables that can be used to define the prior distributions. 
The main type system is given in \Cref{fig:apf_sa}.

\paragraph{Constants}
The auxiliary judgement $\vdashleq{C}{E}{C'}$ identifies constant streams~$C'$ in the set of equations~$E$ given the constant variables~$C$.
A stream~$x$ is constant if it is always equal to its previous value~($\zleq{x}{\zllast{x}}$) or if it is defined by a constant expression.
The auxiliary judgement $\vdashl{C}{e}$ checks that an expression defines a constant stream.
An expression with a set of local declarations is constant if all the equations define constant streams.

\paragraph{Declarations}
A global declaration $\zldef{x}{e}$ typed in the environment $\apfenv, C$ adds the name~$x$ to the global constant set~$C$ if the expression~$e$ is constant.

A probabilistic node $\zlproba{f}{x}{e}$ is associated to the map~$\apftype$ computed by the judgement~$\vdashc{C}{e}{\apftype}$.

\paragraph{Expressions}
Typing an expression collects the constant parameters of the sub-expressions.
To simplify the analysis, we associate a unique instance name~$\theta$ to each function call and we assume that all variables and instances names are unique, e.g., $f(1) + f(2)$ becomes $f_{\theta_1}(1) + f_{\theta_2}(2)$.
The rule for~$\zlapp{f_{\theta}}{e}$ associates to~$\theta$ the prior distribution of the constant parameters of the body of~$f$: $\zlfprior{f}$ which is defined as a global variable by the compilation pass.
The rule for $\zlreset{e_1}{e_2}$ focuses only on the condition $e_2$ because~$e_1$ can be re-initialized and thus is not constant.
Similarly, the rule for $\zlpresent{e_1}{e_2}{e_3}$ focuses only on the condition~$e_1$.

\paragraph{Equations}
The typing of $\zlwhere{e}{E}$ identifies the set~$D$ of constant variables in~$E$ then types the equations with the judgement~\vdashceq{C}{D}{E}{\apftype} where~$C$ is the set of constant free variables in~$E$.
If a variable~$x$ is introduced by the equation $\zlinit{x}{\zlsample{}{e}}$ and is constant~($x \in D$), then~$x$ is a constant parameter and the result type maps $x$ to the distribution~$e$.

\begin{example*}
On our example, the variable \zl{theta} is identified as a constant parameter of the node \zl{f} and is propagated through the node \zl{tracker} that calls~\zl{f}. 
The final environment is:
$$\apfenv = \{\mathtt{f} \is \{ \mathtt{theta} \is \mathtt{gaussian(...)} \}, \mathtt{tracker} \is \{ \theta \is \zlfprior{\mathtt{f}} \} \}$$
\end{example*}

\subsection{Compilation}
\label{sec:apf:compilation}

\begin{figure}
\begin{small}
\begin{array}[t]{@{}lcl@{}}

\compile{\apfenv}{\zlproba{f}{x}{e}} &=& 
  \begin{array}[t]{@{}l@{\quad}l}
  \zldef{\zlfprior{f}}{\im{\apftype}}
  & \text{with } \apftype = \apfenv(f)\\
  \zlproba{\zlfmodel{f}}{\zlpair{\dom{\apftype}}{x}}{\compile{\apftype}{e}}
  \end{array}
\\[0.3em]

\compile{\apftype}{\zlwhere{e}{E}} &=& \zlwhere{\compile{\apftype}{e}}{\compile{\apftype}{E}}\\[0.5em]

\compile{\apftype}{\zlinit{x}{e}} &=& 
    \left \{
    \begin{array}{@{}ll@{}}
        \emptyset & \text{if $x \in \dom{\apftype}$}\\
        \zlinit{x}{\compile{\apftype}{e}} & \text{otherwise}
    \end{array}
    \right .
\\[1em]
\compile{\apftype}{\zleq{x}{e}} &=& 
    \left \{
    \begin{array}{@{}ll@{}}
        \emptyset & \text{if $x \in \dom{\apftype}$}\\
        \zleq{x}{\compile{\apftype}{e}} & \text{otherwise}
    \end{array}
    \right .
\\[1em]

\compile{\apftype}{\zlapp{f_\theta}{e}} &=& 
  \left \{
  \begin{array}{@{}ll@{}}
    \zlapp{f}{\compile{\apftype}{e}} &\text{if $f$ is deterministic}\\
    \zlapp{\zlfmodel{f}}{\theta, \compile{\apftype}{e}} &\text{if $\theta \in \dom{\apftype$}}\\
    \zlapp{\zlfmodel{f}}{\theta, \compile{\apftype}{e}} 
    \ \kwf{where} &\text{otherwise} \\ 
    \quad 
    \kwf{rec} \  \zlinit{\theta}{\zlsample{}{\zlfprior{f}}} \\
    \quad 
    \kwf{and} \  \zleq{\theta}{\zllast{\theta}} \\
  \end{array}
  \right .\\[3em]

\compile{\apftype}{\zlinfer{\zlapp{f}{e}}} &=& \zlapfinfer{}{}{\zlfmodel{f}}{\zlfprior{f}}{\compile{\apftype}{e}}

\end{array}
\end{small}
\caption{APF compilation (full definition in~\Cref{app:apf:compilation}).}
\label{fig:apf:compilation}
\end{figure}  

To run the APF algorithm, constant parameters must become additional inputs of the model.
The inference runtime can thus execute the model multiple times with different values of the constant parameters to update their distributions.
The compilation function is defined in \Cref{fig:apf:compilation} by induction on the syntax and relies on the result of the static analysis.
The compilation function~$\mathcal{C}$ is thus parameterized by the typing environment~$\apfenv$ for declarations and the type~$\apftype$ for expressions.

A model $\zlproba{f}{x}{e}$ such that $\apfenv(f) = \apftype$~(\textit{i.e.}, $\vdashc{C}{e}{\apftype}$) is compiled into two statements:
\begin{itemize}
\item $\zldef{\zlfprior{f}}{\im{\apftype}}$: the prior distribution of the constant parameters in $e$, and
\item $\zlproba{\zlfmodel{f}}{(\dom{\apftype}, x)}{\compile{\apftype}{e}}$: a new model that takes the constant parameters $\dom{\apftype}$ as additional arguments.
\end{itemize}

The compilation function of an expression~$\mathcal{C}_\apftype$ removes the definitions of the constant parameters $x \in \dom{\apftype}$.
The \zl{where}/\zl{rec} case effectively removes the constant parameters by keeping only the equations defining variables~$x \not\in \dom{\apftype}$.
The main difficulty is to handle constant parameters introduced by a function call $f_\theta(e)$.
\begin{itemize}
\item If the node is deterministic, there is no constant parameter.
\item If the constant parameters of the callee are also constant parameters of the caller, we have $\theta \in \dom{\apftype}$ and we just replace the call to $f_\theta$ with a call to $\zlfmodel{f}$ using the instance name for the constant parameters.
\item Otherwise, the constant parameters of the callee are not constant for the caller because the instance $f_\theta$ is used inside a \zl{reset}/\zl{every} or a \zl{present}/\zl{else}.
In this case, we redefine these parameters locally by sampling their prior distribution~$\zlfprior{f}$.
\end{itemize}

Finally, the call to $\zlinfer{\zlapp{f}{e}}$ is replaced by a call to $\zlapfinfer{}{}{\zlfmodel{f}}{\zlfprior{f}}{e}$.

\subsection{Correctness}

We use the relational semantics to prove the correctness of the APF compilation pass.
First, we prove that any probabilistic expression is equivalent to its compiled version computed in an environment which already contains the definition of the constant parameters.
The main theorem which relates $\zlinfer{\zlapp{f}{e}}$ and $\zlapfinfer{}{}{\zlfmodel{f}}{\zlfprior{f}}{e}$ then corresponds to the case $f_\theta(e)$.

\begin{definition}
\label{def:apf:sem}
For a model $f$ that is compiled into $\zlfprior{f}$ and $\zlfmodel{f}$,
the ideal semantics of \zl{APF.infer} externalizes the definition of the constant parameters.
{
$$
\begin{array}{c}
    \zlapfinfer{}{}{\zlfmodel{f}}{\zlfprior{f}}{e}
    \\\equiv\\
    \zlinfer{
        \zlwhere
          {\zlapp{\zlfmodel{f}}{\theta, e}}
          {\zlisample{\theta}{}{\zlfprior{f}}}}
\end{array}
$$}
\end{definition}

\paragraph{Notations}
In the following, $G^+$ is the context $G$ for global declarations augmented with the definition of $\attr{f}{model}$ and $\zlfprior{f}$ for all probabilistic functions $f$.
To simplify the presentation we use vectorized notations, e.g., $\sample{\vec{\mu}}{{R}_0}$ is a syntactic shortcut for the list $[\sample{\vec{\mu}[0]}{{R[0]}_0},\sample{\vec{\mu}[1]}{{R[1]}_0}, ...]$.

\begin{theorem}
\label{thm:apf}
For all probabilistic functions $f$ such that $\apfenv(f) = \apftype$,
for all expressions $e$ in the body of $f$ such that 
$\vdashc{C}{e}{\apftype_e}$, there is a permutation $R \to [R' : R^p]$ such that
\[
G, H, R \vdash e \Downarrow (s, w) 
\Longleftrightarrow 
G^+, H + H_f, R' \vdash \compile{\apftype}{e} \Downarrow (s, w).
\]
where $\vec{p} = \dom{\apftype_e}$ is the list of constant parameters in $e$, $\vec{\mu} = \im{\apftype_e}$ are the corresponding prior distributions and $H_f$ is the context that already contains the definitions of all the constant parameters in $f$ including $\vec{p}$, i.e., $H_f(\vec{p}) = \sample{\vec{\mu}}{R^p_0}$.
\end{theorem}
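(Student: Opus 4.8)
The plan is to establish the equivalence by structural induction on the expression $e$, guided by the typing derivation $\vdashc{C}{e}{\apftype_e}$, with an outer induction on the call graph of probabilistic functions (legitimate because the language of \Cref{fig:syntax} forbids recursion, so functions are totally ordered by their dependencies, and any sub-call $g_\theta(\cdot)$ inside $e$ refers to a function $g$ for which the theorem already holds). The permutation $R \to [R' : R^p]$ is constructed incrementally alongside the induction: at each node it collects the random streams feeding the constant-parameter samples and moves them to the trailing block $R^p$, leaving $R'$ for every other random stream in the order expected by $\compile{\apftype}{e}$. Both directions of the $\Longleftrightarrow$ follow from a single argument, since the relational rules of \Cref{fig:sem:rel:prob} are syntax-directed: a derivation for $e$ uniquely determines one for $\compile{\apftype}{e}$ and conversely.

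The heart of the proof is the one observation that justifies externalizing a constant parameter. A constant parameter $\theta$ is defined by the pair $\zlinit{\theta}{\zlsample{}{\mu}}$ and $\zleq{\theta}{\zllast{\theta}}$. Unfolding the corresponding rules, the stream $H(\theta)$ must satisfy $H(\theta) = \sample{\mu}{R_0} \cdot H(\theta)$, so it is the constant stream whose value is the single first draw $\sample{\mu}{R_0}$; moreover both equations carry weight $1$ (the \zl{sample} rule emits weight $1$, and the $\zlinit$ rule outputs $w_i \cdot 1 = 1$). This matches exactly the externalized definition $H_f(\theta) = \sample{\mu}{R^p_0}$ built from the first element of a fresh stream placed in $R^p$. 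Hence removing the two equations changes neither the value environment nor the accumulated weight, provided $H_f$ supplies the same constant value and its random stream is relocated into $R^p$.

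With this observation the inductive cases are routine. For atomic and deterministic constructs ($c$, $x$, $\zlpair{\cdot}{\cdot}$, $\zlop{\cdot}$) we have $\apftype_e = \emptyset$, the compilation removes nothing, and the permutation is the identity. For $\zlwhere{e}{E}$ the compilation keeps exactly the equations defining variables not in $\dom{\apftype}$; we apply the induction hypothesis to $e$ and to the retained equations, discharge the removed constant-parameter equations using the observation above, and concatenate the sub-permutations. The delicate case is the function call $f_\theta(e)$, handled by the callee's instance of the theorem: when $\theta \in \dom{\apftype}$ the constant parameters are already externalized in $H_f$ and are threaded as the extra argument of $\zlfmodel{f}$; when $\theta \notin \dom{\apftype}$ (the instance sits under a \zl{reset} or \zl{present}) the compiled code re-samples the prior locally through $\zlinit{\theta}{\zlsample{}{\zlfprior{f}}}$ and $\zleq{\theta}{\zllast{\theta}}$, which by the same observation reproduces the draw performed internally in the original body of $f$.

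I expect the main obstacle to be threading the permutation coherently across function boundaries in the call case. Each sub-call contributes its own block of constant-parameter streams, and these must be aggregated so that the single trailing block $R^p$ collects all of them in the order fixed by $\vec{p} = \dom{\apftype_e}$, while $R'$ preserves the order that $\compile{\apftype}{e}$ consumes. Verifying that the composition of the sub-permutations is again a permutation of the correct shape is where the bookkeeping concentrates; its measure-preservation, needed to later invoke \Cref{prop:equiv}, is then immediate, since every sub-permutation merely reorders independent coordinates of the product uniform measure.
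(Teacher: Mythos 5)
Your proposal is correct and follows essentially the same route as the paper: induction on the expression structure together with the (well-founded, recursion-free) call ordering, a parallel treatment of equation sets, the key observation that the pair $\zlinit{\theta}{\zlsample{}{\mu}}$ / $\zleq{\theta}{\zllast{\theta}}$ forces $H(\theta)$ to be the constant stream $\sample{\mu}{R_0}$ with weight $1$ so the two equations can be erased once $H_f$ supplies that value, and the function-call case split on $\theta \in \dom{\apftype}$ versus the locally re-sampled case. The permutation bookkeeping you flag as the main obstacle is exactly what the paper spells out case by case (e.g.\ $[R_e{:}R_g] \to [R_e'{:}R_g']{:}[R_e^p{:}R_g^p]$ and $[R_e{:}R_g] \to [R_e'{:}R_g'{:}R_g^p]{:}[R_e^p]$).
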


\begin{proof}
The proof is by induction on the expression  and the size of the context, i.e., the number of declarations before the expression.
For each induction case, we give the mapping between $R$ and $R' + R^p$, and show that a semantics derivation for $e$ in a context $G, H, R$ is equivalent to a semantics derivation for $\compile{\apftype}{e}$ in the context $G^+, H + H_f, R'$.
We focus on the most interesting cases, i.e., expressions altered by the compilation function.

To simplify the proof we prove the following lemma for equations in parallel with \Cref{thm:apf}.
\begin{lemma}
For all equations $E$ in the body of $f$ such that 
$\vdashc{C, D}{E}{\apftype_E}$, there is a permutation $R \to [R' : R^p]$ such that
\[
G, H, R \vdash E : W 
\Longleftrightarrow 
G^+, H + H_f, R' \vdash \compile{\apftype}{E} : W.
\]
\end{lemma}

\paragraph{Case 
$\zland
    {\zlisample{x}{}{d}}
    {\zleq{x}{\zllast{x}}}$ where $x \in \dom{\apftype}$.}

The permutation is $[R_x] \to []:[R_x]$ and with $v_x = \sample{\mu}{{R_x}_0}$

\[
\begin{footnotesize}
\begin{prooftree}
\infer0{
    G, H \vdash d \downarrow \mu \cdot s_\mu
}
\infer1{
  G, H + [x \is v_x], [R_x] \vdash \zlisample{x}{}{d} : 1
}
\hypo{
  G, H + [x \is v_x], [] \vdash \zleq{x}{\zllast{x}} : 1
}
\infer2{
    G, H + [x \is v_x], [R_x] 
    \vdash 
        \zland
            {\zlisample{x}{}{d}}
            {\zleq{x}{\zllast{x}}} 
    : 1
}
\end{prooftree}
\end{footnotesize}
\]

On the other hand, because  $x$ is a constant parameter, $\compile{\apftype}{d} = \emptyset$ and  $H_f(x) = v_x$.

\[
\begin{footnotesize}
\begin{prooftree}
\hypo{
    H_f(x) = v_x
}
\infer0{
    G^+, H + H_f, [] 
    \vdash \emptyset
    : 1
} 
\infer2{
    G^+, H + H_f, [] 
    \vdash 
        \compile{\apftype}{
            \zland
            {\zlisample{x}{}{d}}
            {\zleq{x}{\zllast{x}}}} : 1
}
\end{prooftree}
\end{footnotesize}
\]

\noindent
This results can then be generalized to arbitrary sets of equations where the two equations are not necessarily grouped together at the cost of an additional permutation.

\paragraph{Case $\zlapp{g_\theta}{e$}}
By induction we have the two permutations $R_e \to [R_e': R_e^p]$ and $R_g \to [R_g', R_g^p]$.
With $\zlproba{g}{x}{e_g}$ and $\vdashc{C}{e_g}{\apftype_g}$, we can apply the induction hypothesis on $e_g$ because there is no possible recursive call. 
The callee context for $e_g$ is thus strictly smaller than the caller context.
We also have $H_g = [\vec{p_g} \is \vec{{v_p}}]$ with $\vec{p_g} = \dom{\apftype_g}$, $\vec{\mu_g} = \im{\apftype_g}$, and $\vec{{v_p}} = \sample{\vec{\mu_g}}{{R^p_g}_0}$.

\[
\begin{footnotesize}
\begin{prooftree}
\hypo{
    G^+, H + H_f, R'_e
    \vdash \compile{\apftype}{e} 
    \Downarrow (s_e, w_e)
}
\infer1{
    G, H, R_e
    \vdash e 
    \Downarrow (s_e, w_e)
}    
\hypo{
    G^+, [x \is s_e] + [\vec{p_g} \is \vec{v_{p}}], R'_g
    \vdash \compile{\apftype_g}{e_g} 
    \Downarrow (s, w)
}
\infer1{
    G, [x \is s_e], R_g
    \vdash e_g 
    \Downarrow (s, w)
}
\infer2{
    G,  H, [R_e:R_g] \vdash \zlapp{g_\theta}{e} \Downarrow (s, w_e * w)
}
\end{prooftree}
\end{footnotesize}
\]

On the other end, by construction we have $G(\attr{g}{model}) = \zlproba{\attr{g}{model}}{\zlpair{\vec{p_g}}{x}}{\compile{\apftype_g}{e_g}}$ and there are two cases. If $\theta \in \dom{\apftype}$, then the constant parameters are already in the context and $H_f(\theta) = \vec{{v_p}}$. The permutation is $[R_e: R_g] \to [R_e':R_g']:[R_e^p:{R_g^p}]$ and we have:

\[
\begin{footnotesize}
\begin{prooftree}
\infer0{
    G^+, H + H_f, []
    \vdash \theta
    \Downarrow (\vec{v_p}, 1)
}
\hypo{
    G^+, H + H_f, R_e'
    \vdash \compile{\apftype}{e}
    \Downarrow (s_e, w_e)
}
\infer2{
    G^+, H + H_f, R_e'
    \vdash \zlpair{\theta}{\compile{\apftype}{e}}
    \Downarrow ((\vec{v_p}, s_e), w_e)
}
\hypo{
    G^+, [x \is s_e, \vec{p_g} \is \vec{v_p}], R_g' 
    \vdash \compile{\apftype_g}{e_g}
    \Downarrow (s, w)
}
\infer2{
    G^+, H + H_f, [R_e', R_g'] 
    \vdash \zlapp{\zlfmodel{g}}{\theta, \compile{\apftype}{e}}
    \Downarrow (s, w_e * w)
}
\infer1{
    G^+, H + H_f, [R_e', R_g'] 
    \vdash \compile{\apftype}{g_\theta(e)}
    \Downarrow (s, w_e * w)
}
\end{prooftree}
\end{footnotesize}
\]

\noindent
Finally if $\theta \notin \dom{\apftype}$, the constant parameters are not in the context and the compilation adds a defining equation for $\theta$.
The permutation is $[R_e: R_g] \to [R_e':R_g':R_g^p]:[R_e^p]$, and we have:

\[
\begin{footnotesize}
\begin{prooftree}
\hypo{...}
\infer1{
    G^+, H + H_f + [\theta \is \vec{v_p}], [R_e', R_g']
    \vdash \zlapp{\zlfmodel{g}}{\theta, \compile{\apftype}{e}}
    \Downarrow (s, w)
}
\hypo{
    G^+(\zlfprior{g}) = \vec{\mu_g}
}
\infer1{
    G^+, H + H_f, R_g^p
    \vdash \zlsample{}{\zlfprior{g}}
    \Downarrow (\vec{v_p}, 1)
}
\infer1{
    G^+, H + H_f + [\theta \is \vec{v_p}], R_g^p
    \vdash \zlisample{\theta}{}{\zlfprior{g}} : 1
}
\infer2{
    G^+, H + H_f, [R_e', R_g', R_g^p] 
    \vdash 
        \zlwhere
            {\zlapp{\zlfmodel{g}}{\theta, \compile{\apftype}{e}}}
            {\zlisample{\theta}{}{\zlfprior{g}}}
    \Downarrow (s, w)
}
\infer1{
    G^+, H + H_f, [R_e', R_g', R_g^p] 
    \vdash \compile{\apftype}{\zlapp{g_\theta}{e}}
    \Downarrow (s, w)
}
\end{prooftree}
\end{footnotesize}
\]

\end{proof}

We can now state and prove the correctness of the APF compilation pass.

\begin{theorem}[APF compilation]
For all probabilistic nodes $f$,
\[ 
    G, H \vdash \zlinfer{\zlapp{f}{e}} \downarrow d
    \Longleftrightarrow 
    G^+, H 
    \vdash 
    \zlapfinfer{}{}{\zlfmodel{f}}{\zlfprior{f}}{e}
    \downarrow d
\]
\end{theorem}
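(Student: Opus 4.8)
The plan is to reduce the statement to an instance of the probabilistic equivalence of \Cref{prop:equiv}, using \Cref{thm:apf} to hoist the constant parameters out of the body of $f$. First I would unfold the right-hand side with \Cref{def:apf:sem}, which definitionally identifies $\zlapfinfer{}{}{\zlfmodel{f}}{\zlfprior{f}}{e}$ with $\zlinfer{\zlwhere{\zlapp{\zlfmodel{f}}{\theta, e}}{\zlisample{\theta}{}{\zlfprior{f}}}}$, so that the goal becomes
\[
G, H \vdash \zlinfer{\zlapp{f}{e}} \downarrow d
\ \Longleftrightarrow\
G^+, H \vdash \zlinfer{\zlwhere{\zlapp{\zlfmodel{f}}{\theta, e}}{\zlisample{\theta}{}{\zlfprior{f}}}} \downarrow d.
\]
Inspecting the relational rule for \zl{infer} in \Cref{fig:sem:rel:prob}, its output distribution is the normalization of $\integ{p}{\overline{w}}{s}$, which depends only on the stream of pairs $(s,w)$ integrated over all random streams. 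It therefore suffices to show that the two argument expressions $\zlapp{f}{e}$ (in $G$) and $\zlwhere{\zlapp{\zlfmodel{f}}{\theta, e}}{\zlisample{\theta}{}{\zlfprior{f}}}$ (in $G^+$) are equivalent in the sense preceding \Cref{prop:equiv}, i.e., that there is a measure-preserving permutation of their random streams under which the value and weight streams coincide.

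The core of the argument is to establish this equivalence by unfolding both sides with the relational rules. On the original side the call $\zlapp{f}{e}$ splits its random streams as $[R_e:R_f]$, evaluating the argument $e$ into $(s_e, w_e)$ with $R_e$ and the body $e_f$ of $f$ into $(s, w)$ with $R_f$ in the context $[x \is s_e]$, producing $(s, w*w_e)$. On the compiled side the local declaration $\zlisample{\theta}{}{\zlfprior{f}}$ samples the constant parameters once from the prior $\zlfprior{f} = \im{\apftype}$ using streams $R^p$, binding $\theta$ to $\sample{\vec{\mu}}{R^p_0}$; the call $\zlapp{\zlfmodel{f}}{\theta, e}$ then re-evaluates $e$ and runs the compiled body $\compile{\apftype}{e_f}$ in the augmented context. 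I would then apply \Cref{thm:apf} to the body $e_f$ — legitimate because there are no recursive calls, so the callee context is strictly smaller than the caller context, exactly matching the induction on context size used to establish \Cref{thm:apf}. This gives a permutation $R_f \to [R_f':R^p]$ under which the internal-sampling evaluation of $e_f$ equals the evaluation of $\compile{\apftype}{e_f}$ whenever the constant parameters are set to $\sample{\vec{\mu}}{R^p_0}$ in the context $H_f$. But that is precisely the value produced by the externalized $\zlisample{\theta}{}{\zlfprior{f}}$, so the two contexts agree on $\theta$, the value streams $s$ coincide, the weight streams $w$ coincide, and the argument weight $w_e$ is shared. Composing the identity on $R_e$ with this permutation (routing $R^p$ to the external prior sampling and $R_f'$ to the compiled body) yields the required reordering of the whole random-stream array.

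Since this permutation merely reorders independent uniform streams, it preserves the cube measure $\lambda^p_\omega$, so by \Cref{prop:equiv} the two argument expressions induce the same stream of measures $\int \overline{w}_k(R)\,\delta_{s_k(R)}\,\lambda^p_\omega(dR)$; as the \zl{infer} semantics normalizes precisely these measures and the whole argument is symmetric, both sides yield the same $d$, giving the claimed $\Longleftrightarrow$. I expect the main obstacle to be the bookkeeping that aligns the context $H_f$ of \Cref{thm:apf} — in which the constant parameters are already bound to samples of their priors — with the context generated by the externalized declaration $\zlisample{\theta}{}{\zlfprior{f}}$, so that the hoisted $\theta$ feeds the compiled body with exactly the value it carried internally, together with tracking the front-versus-rest split of random streams in the permutation. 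A second, harmless-but-necessary point is that the equivalence is read across the two global contexts $G$ and $G^+$: the original argument never mentions $\zlfmodel{f}$ or $\zlfprior{f}$, and $G^+$ only extends $G$ with these definitions, so the extension does not perturb either side's value or weight streams. Once the contexts are matched, the final change-of-variable and normalization steps are routine, as in the proof of \Cref{prop:equiv}.
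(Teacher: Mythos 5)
Your proposal is correct and follows essentially the same route as the paper: unfold \Cref{def:apf:sem}, reduce to the equivalence of $\zlapp{f_\theta}{e}$ with the externalized call in a context where $\theta$ is bound to $\sample{\vec{\mu}}{R^p_0}$, and conclude by the case $\zlapp{f_\theta}{e}$ with $\theta \in \dom{\apftype}$ of \Cref{thm:apf} together with the measure-preserving permutation of random streams. The paper's own proof is just a terser version of this reduction; your additional bookkeeping (the explicit \texttt{where}-unfolding, the appeal to \Cref{prop:equiv}, and the remark that $G^+$ conservatively extends $G$) fills in steps the paper leaves implicit.
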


\begin{proof}
From \Cref{def:apf:sem}, we need to show that for all random streams $R$:
\[ 
    G, H, R \vdash \zlapp{f_\theta}{e} \Downarrow (s, w)
    \Longleftrightarrow 
    G^+, H + [\theta \is \vec{v_p}], R'
    \vdash  \zlapp{\zlfmodel{f}}{\theta, e} \Downarrow (s, w)
\]
with $G^+(\zlfprior{f}) = \vec{\mu}$ and $\vec{v_p} = \sample{\vec{\mu}}{R^p_0}$.
This corresponds to the case $\zlapp{f_\theta}{e}$ with $\theta \in \dom{\apftype}$.
\end{proof}
\section{Related work}

\paragraph{Probabilistic Semantics}
Measurable functions and kernels have been used to define the semantics of first-order probabilistic programs as probability distribution transformers~\cite{kozen81}. 
Probabilistic Coherent Spaces is a generalization of this idea to higher-order types but for discrete probability~\cite{DanosE11}. 
This setting has been extended to continuous distributions with models based on positive cones~\cite{EhrhardPT18,DahlqvistK20}, a variation on Banach spaces with positive scalars~\cite{Selinger04}. 
To interpret the sampling operation, cones have to be equipped with a measurability structure such that measures and integration can be defined for any types~\cite{EhrhardGeoffroy23}.

Quasi-Borel spaces~(QBS) are another alternative to classic measurable spaces to define the semantics of higher-order probabilistic programs with conditioning~\cite{StatonYWHK16, HeunenKSY17}.
A probabilistic expression is interpreted as a quasi-Borel measure, i.e., an equivalence class of pairs $[\alpha, \mu]$ where $\mu$ is a measure over $\mathbb{R}$, and $\alpha$ is a measurable function from $\mathbb{R}$ to values.
Intuitively, the corresponding distribution is obtained as the pushforward of $\mu$ along $\alpha$.
Recent work extends this formalism to capture lazy data structures and streams in a functional probabilistic language~\cite{DashKPS23}.

Our density-based semantics relies on a similar representation: probabilistic expressions are interpreted by pushingforward a uniform measure over $[0, 1]^n$ along a measurable function.
The main difference is that we focus on a domain specific dataflow synchronous language.
The set of random variables can be computed statically, and integration is entirely deferred to the \zl{infer} operator.
Importantly, we recover the fact that equations sets can be interpreted in any order, a key property for dataflow synchronous languages.

\paragraph{Program Equivalence}

Probabilistic bisimulation has been introduced for testing equivalence of discrete probabilistic systems~\cite{LarsenS89DBisimulation} and generalized to study Labelled Markov Process (continuous systems)~\cite{DesharnaisEP02}. 
Following~\cite{LagoG19} which defines a notion of bisimulation for a (higher-order) probabilistic lambda calculus, we could define a notion of probabilistic bisimulation for the co-iterative semantics. 
One challenge is to adapt the bisimulation to reactive state machines with explicit state. 

Probabilistic coupling, is an alternative classic proof technique for probabilistic programs equivalence~\cite{Hsu2017Coupling}. 
A coupling describes correlated executions by associating pairs of samples. \Cref{prop:equiv} that we use to prove the correctness of the APF compilation pass is an instance of probabilistic coupling where the relation is made explicit through a measurable function.

\paragraph{Static analysis and compilation}
The static analysis presented in \Cref{sec:apf:typing} is inspired by static analyses designed for dataflow synchronous languages, in particular the initialization analysis~\cite{lucy:sttt04} which guarantees that all streams are well defined at the first time step, and the typing of Zelus' static arguments which checks that some value can be statically computed at compile time~\cite{BourkeCCPPP17}.

The compilation pass presented in \Cref{sec:apf:compilation} turns constant parameters into additional arguments of the model.
An alternative, closer to our density-based semantics, would be to externalize all random variables.
This would give a lot of flexibility to the inference runtime to apply different inference strategies to different sets of variables.
This approach is also reminiscent of the compilation of Zelus' hybrid models~\cite{lucy:lctes11}.

\paragraph{Inference}
ProbZelus inference runtime relies on semi-symbolic sequential Monte Carlo samplers~\cite{lunden17, rppl-short, ss-oopsla22}.
The posterior distribution is approximated by a set of independent executions, the particles.
Each particle tries to compute a closed form solution using symbolic computation, and only samples concrete values when symbolic computations fails.
For some models, symbolic computations can be used to compute the distribution of constant parameters, and ProbZelus default inference algorithms thus outperforms APF.
Unfortunately symbolic computations are only possible with conjugate random variables and affine transformations.
APF is thus more robust for arbitrary models.
\section{Conclusion}

In this paper we proposed two semantics for a reactive probabilistic programming languages: a density-based co-iterative semantics and a density-based relational semantics.
Both semantics are schedule agnostic, i.e., sets of of mutually recursive equations can be interpreted in arbitrary order, a key property of synchronous dataflow languages.
The relational semantics directly manipulates streams which can significantly simplify program equivalence reasoning for probabilistic expressions.
We then defined a program transformation required to run an optimized inference algorithm for state-space models with constant parameters and used the relational semantics to prove the correctness of the transformation.

\paragraph{Acknowledgements}
The authors are grateful to the following people for many discussions and helpful feedback: Timothy Bourke, Grégoire Bussone, Thomas Ehrhard, Patrick Hoscheit, Adrien Guatto, Paul Jeanmaire, Basile Pesin, and Marc Pouzet.
This work was supported by the project ReaLiSe, Émergence Ville de Paris 2021-2025.

\bibliography{biblio}


\begin{thebibliography}{44}


\ifx \showCODEN    \undefined \def \showCODEN     #1{\unskip}     \fi
\ifx \showDOI      \undefined \def \showDOI       #1{#1}\fi
\ifx \showISBNx    \undefined \def \showISBNx     #1{\unskip}     \fi
\ifx \showISBNxiii \undefined \def \showISBNxiii  #1{\unskip}     \fi
\ifx \showISSN     \undefined \def \showISSN      #1{\unskip}     \fi
\ifx \showLCCN     \undefined \def \showLCCN      #1{\unskip}     \fi
\ifx \shownote     \undefined \def \shownote      #1{#1}          \fi
\ifx \showarticletitle \undefined \def \showarticletitle #1{#1}   \fi
\ifx \showURL      \undefined \def \showURL       {\relax}        \fi
\providecommand\bibfield[2]{#2}
\providecommand\bibinfo[2]{#2}
\providecommand\natexlab[1]{#1}
\providecommand\showeprint[2][]{arXiv:#2}

\bibitem[Atkinson et~al\mbox{.}(2022)]%
        {ss-oopsla22}
\bibfield{author}{\bibinfo{person}{Eric Atkinson}, \bibinfo{person}{Charles
  Yuan}, \bibinfo{person}{Guillaume Baudart}, \bibinfo{person}{Louis Mandel},
  {and} \bibinfo{person}{M. Carbin}.} \bibinfo{year}{2022}\natexlab{}.
\newblock \showarticletitle{Semi-Symbolic Inference for Efficient Streaming
  Probabilistic Programming}. In \bibinfo{booktitle}{\emph{OOPSLA}}.
\newblock


\bibitem[Baudart et~al\mbox{.}(2020)]%
        {rppl-short}
\bibfield{author}{\bibinfo{person}{Guillaume Baudart}, \bibinfo{person}{Louis
  Mandel}, \bibinfo{person}{Eric Atkinson}, \bibinfo{person}{Benjamin Sherman},
  \bibinfo{person}{Marc Pouzet}, {and} \bibinfo{person}{Michael Carbin}.}
  \bibinfo{year}{2020}\natexlab{}.
\newblock \showarticletitle{Reactive Probabilistic Programming}. In
  \bibinfo{booktitle}{\emph{{PLDI}}}.
\newblock


\bibitem[Baudart et~al\mbox{.}(2022)]%
        {zlax_lctes22}
\bibfield{author}{\bibinfo{person}{Guillaume Baudart}, \bibinfo{person}{Louis
  Mandel}, {and} \bibinfo{person}{Reyyan Tekin}.}
  \bibinfo{year}{2022}\natexlab{}.
\newblock \showarticletitle{JAX Based Parallel Inference for Reactive
  Probabilistic Programming}. In \bibinfo{booktitle}{\emph{LCTES}}.
\newblock


\bibitem[Benveniste et~al\mbox{.}(2011)]%
        {lucy:lctes11}
\bibfield{author}{\bibinfo{person}{Albert Benveniste}, \bibinfo{person}{Timothy
  Bourke}, \bibinfo{person}{Benoit Caillaud}, {and} \bibinfo{person}{Marc
  Pouzet}.} \bibinfo{year}{2011}\natexlab{}.
\newblock \showarticletitle{{Divide and recycle: types and compilation for a
  hybrid synchronous language}}. In \bibinfo{booktitle}{\emph{LCTES}}.
\newblock


\bibitem[Benveniste et~al\mbox{.}(2003)]%
        {synchronous-twelve-years-later}
\bibfield{author}{\bibinfo{person}{Albert Benveniste}, \bibinfo{person}{Paul
  Caspi}, \bibinfo{person}{Stephen~A. Edwards}, \bibinfo{person}{Nicolas
  Halbwachs}, \bibinfo{person}{Paul {Le Guernic}}, {and}
  \bibinfo{person}{Robert de Simone}.} \bibinfo{year}{2003}\natexlab{}.
\newblock \showarticletitle{The synchronous languages 12 years later}.
\newblock \bibinfo{journal}{\emph{Proc. {IEEE}}} \bibinfo{volume}{91},
  \bibinfo{number}{1} (\bibinfo{year}{2003}), \bibinfo{pages}{64--83}.
\newblock


\bibitem[Bingham et~al\mbox{.}(2019)]%
        {BinghamCJOPKSSH19}
\bibfield{author}{\bibinfo{person}{Eli Bingham}, \bibinfo{person}{Jonathan~P.
  Chen}, \bibinfo{person}{Martin Jankowiak}, \bibinfo{person}{Fritz Obermeyer},
  \bibinfo{person}{Neeraj Pradhan}, \bibinfo{person}{Theofanis Karaletsos},
  \bibinfo{person}{Rohit Singh}, \bibinfo{person}{Paul~A. Szerlip},
  \bibinfo{person}{Paul Horsfall}, {and} \bibinfo{person}{Noah~D. Goodman}.}
  \bibinfo{year}{2019}\natexlab{}.
\newblock \showarticletitle{Pyro: Deep Universal Probabilistic Programming}.
\newblock \bibinfo{journal}{\emph{J. Mach. Learn. Res.}}  \bibinfo{volume}{20}
  (\bibinfo{year}{2019}), \bibinfo{pages}{28:1--28:6}.
\newblock


\bibitem[Bourke et~al\mbox{.}(2017a)]%
        {BourkeBDLPR17}
\bibfield{author}{\bibinfo{person}{Timothy Bourke},
  \bibinfo{person}{L{\'{e}}lio Brun}, \bibinfo{person}{Pierre{-}{\'{E}}variste
  Dagand}, \bibinfo{person}{Xavier Leroy}, \bibinfo{person}{Marc Pouzet}, {and}
  \bibinfo{person}{Lionel Rieg}.} \bibinfo{year}{2017}\natexlab{a}.
\newblock \showarticletitle{A formally verified compiler for Lustre}. In
  \bibinfo{booktitle}{\emph{{PLDI}}}.
\newblock


\bibitem[Bourke et~al\mbox{.}(2020)]%
        {BourkeBP20}
\bibfield{author}{\bibinfo{person}{Timothy Bourke},
  \bibinfo{person}{L{\'{e}}lio Brun}, {and} \bibinfo{person}{Marc Pouzet}.}
  \bibinfo{year}{2020}\natexlab{}.
\newblock \showarticletitle{Mechanized semantics and verified compilation for a
  dataflow synchronous language with reset}.
\newblock \bibinfo{journal}{\emph{Proc. {ACM} Program. Lang.}}
  \bibinfo{volume}{4}, \bibinfo{number}{{POPL}} (\bibinfo{year}{2020}),
  \bibinfo{pages}{44:1--44:29}.
\newblock


\bibitem[Bourke et~al\mbox{.}(2017b)]%
        {BourkeCCPPP17}
\bibfield{author}{\bibinfo{person}{Timothy Bourke}, \bibinfo{person}{Francois
  Carcenac}, \bibinfo{person}{Jean{-}Louis Cola{\c{c}}o},
  \bibinfo{person}{Bruno Pagano}, \bibinfo{person}{C{\'{e}}dric Pasteur}, {and}
  \bibinfo{person}{Marc Pouzet}.} \bibinfo{year}{2017}\natexlab{b}.
\newblock \showarticletitle{A Synchronous Look at the Simulink Standard
  Library}.
\newblock \bibinfo{journal}{\emph{{ACM} Trans. Embed. Comput. Syst.}}
  \bibinfo{volume}{16}, \bibinfo{number}{5s} (\bibinfo{year}{2017}),
  \bibinfo{pages}{176:1--176:24}.
\newblock


\bibitem[Bourke et~al\mbox{.}(2021)]%
        {BourkeJPP21}
\bibfield{author}{\bibinfo{person}{Timothy Bourke}, \bibinfo{person}{Paul
  Jeanmaire}, \bibinfo{person}{Basile Pesin}, {and} \bibinfo{person}{Marc
  Pouzet}.} \bibinfo{year}{2021}\natexlab{}.
\newblock \showarticletitle{Verified Lustre Normalization with Node
  Subsampling}.
\newblock \bibinfo{journal}{\emph{{ACM} Trans. Embed. Comput. Syst.}}
  \bibinfo{volume}{20}, \bibinfo{number}{5s} (\bibinfo{year}{2021}),
  \bibinfo{pages}{98:1--98:25}.
\newblock


\bibitem[Bourke and Pouzet(2013)]%
        {lucy:hscc13}
\bibfield{author}{\bibinfo{person}{Timothy Bourke} {and} \bibinfo{person}{Marc
  Pouzet}.} \bibinfo{year}{2013}\natexlab{}.
\newblock \showarticletitle{Z{\'{e}}lus: a synchronous language with {ODEs}}.
  In \bibinfo{booktitle}{\emph{{HSCC}}}.
\newblock


\bibitem[Caspi and Pouzet(1998)]%
        {pouzet-cmcs98}
\bibfield{author}{\bibinfo{person}{Paul Caspi} {and} \bibinfo{person}{Marc
  Pouzet}.} \bibinfo{year}{1998}\natexlab{}.
\newblock \showarticletitle{A Co-iterative Characterization of Synchronous
  Stream Functions}. In \bibinfo{booktitle}{\emph{{CMCS}}}
  \emph{(\bibinfo{series}{Electronic Notes in Theoretical Computer Science},
  Vol.~\bibinfo{volume}{11})}. \bibinfo{publisher}{Elsevier}.
\newblock


\bibitem[Chopin and Papaspiliopoulos(2020)]%
        {chopin2020SMC}
\bibfield{author}{\bibinfo{person}{Nicolas Chopin} {and}
  \bibinfo{person}{Omiros Papaspiliopoulos}.} \bibinfo{year}{2020}\natexlab{}.
\newblock \bibinfo{booktitle}{\emph{An introduction to sequential Monte
  Carlo}}.
\newblock \bibinfo{publisher}{Springer}.
\newblock


\bibitem[Cola\c{c}o and Pouzet(2004)]%
        {lucy:sttt04}
\bibfield{author}{\bibinfo{person}{Jean-Louis Cola\c{c}o} {and}
  \bibinfo{person}{Marc Pouzet}.} \bibinfo{year}{2004}\natexlab{}.
\newblock \showarticletitle{{Type-based Initialization Analysis of a
  Synchronous Data-flow Language}}.
\newblock \bibinfo{journal}{\emph{International Journal on Software Tools for
  Technology Transfer}} \bibinfo{volume}{6}, \bibinfo{number}{3}
  (\bibinfo{year}{2004}), \bibinfo{pages}{245--255}.
\newblock


\bibitem[Cola{\c{c}}o et~al\mbox{.}(2017)]%
        {lucy:tase17}
\bibfield{author}{\bibinfo{person}{Jean{-}Louis Cola{\c{c}}o},
  \bibinfo{person}{Bruno Pagano}, {and} \bibinfo{person}{Marc Pouzet}.}
  \bibinfo{year}{2017}\natexlab{}.
\newblock \showarticletitle{{SCADE} 6: {A} formal language for embedded
  critical software development}. In \bibinfo{booktitle}{\emph{{TASE}}}.
\newblock


\bibitem[Cola{\c{c}}o and Pouzet(2003)]%
        {ColacoP03}
\bibfield{author}{\bibinfo{person}{Jean{-}Louis Cola{\c{c}}o} {and}
  \bibinfo{person}{Marc Pouzet}.} \bibinfo{year}{2003}\natexlab{}.
\newblock \showarticletitle{Clocks as First Class Abstract Types}. In
  \bibinfo{booktitle}{\emph{{EMSOFT}}} \emph{(\bibinfo{series}{Lecture Notes in
  Computer Science}, Vol.~\bibinfo{volume}{2855})}.
  \bibinfo{publisher}{Springer}, \bibinfo{pages}{134--155}.
\newblock


\bibitem[Colaco et~al\mbox{.}(2023)]%
        {emsoft23b}
\bibfield{author}{\bibinfo{person}{Jean-Louis Colaco}, \bibinfo{person}{Michael
  Mendler}, \bibinfo{person}{Baptiste Pauget}, {and} \bibinfo{person}{Marc
  Pouzet}.} \bibinfo{year}{2023}\natexlab{}.
\newblock \showarticletitle{{A Constructive State-based Semantics and
  Interpreter for a Synchronous Data-flow Language with State machines}}. In
  \bibinfo{booktitle}{\emph{{EMSOFT}}}.
\newblock


\bibitem[Cuoq and Pouzet(2001)]%
        {CuoqP01}
\bibfield{author}{\bibinfo{person}{Pascal Cuoq} {and} \bibinfo{person}{Marc
  Pouzet}.} \bibinfo{year}{2001}\natexlab{}.
\newblock \showarticletitle{Modular Causality in a Synchronous Stream
  Language}. In \bibinfo{booktitle}{\emph{{ESOP}}}
  \emph{(\bibinfo{series}{Lecture Notes in Computer Science},
  Vol.~\bibinfo{volume}{2028})}. \bibinfo{publisher}{Springer},
  \bibinfo{pages}{237--251}.
\newblock


\bibitem[Cusumano{-}Towner et~al\mbox{.}(2019)]%
        {gen}
\bibfield{author}{\bibinfo{person}{Marco~F. Cusumano{-}Towner},
  \bibinfo{person}{Feras~A. Saad}, \bibinfo{person}{Alexander~K. Lew}, {and}
  \bibinfo{person}{Vikash~K. Mansinghka}.} \bibinfo{year}{2019}\natexlab{}.
\newblock \showarticletitle{Gen: a general-purpose probabilistic programming
  system with programmable inference}. In \bibinfo{booktitle}{\emph{{PLDI}}}.
\newblock


\bibitem[Dahlqvist and Kozen(2020)]%
        {DahlqvistK20}
\bibfield{author}{\bibinfo{person}{Fredrik Dahlqvist} {and}
  \bibinfo{person}{Dexter Kozen}.} \bibinfo{year}{2020}\natexlab{}.
\newblock \showarticletitle{Semantics of higher-order probabilistic programs
  with conditioning}.
\newblock \bibinfo{journal}{\emph{Proc. {ACM} Program. Lang.}}
  \bibinfo{volume}{4}, \bibinfo{number}{{POPL}} (\bibinfo{year}{2020}),
  \bibinfo{pages}{57:1--57:29}.
\newblock


\bibitem[Danos and Ehrhard(2011)]%
        {DanosE11}
\bibfield{author}{\bibinfo{person}{Vincent Danos} {and} \bibinfo{person}{Thomas
  Ehrhard}.} \bibinfo{year}{2011}\natexlab{}.
\newblock \showarticletitle{Probabilistic coherence spaces as a model of
  higher-order probabilistic computation}.
\newblock \bibinfo{journal}{\emph{Inf. Comput.}} \bibinfo{volume}{209},
  \bibinfo{number}{6} (\bibinfo{year}{2011}), \bibinfo{pages}{966--991}.
\newblock


\bibitem[Dash et~al\mbox{.}(2023)]%
        {DashKPS23}
\bibfield{author}{\bibinfo{person}{Swaraj Dash}, \bibinfo{person}{Younesse
  Kaddar}, \bibinfo{person}{Hugo Paquet}, {and} \bibinfo{person}{Sam Staton}.}
  \bibinfo{year}{2023}\natexlab{}.
\newblock \showarticletitle{Affine Monads and Lazy Structures for Bayesian
  Programming}.
\newblock \bibinfo{journal}{\emph{Proc. {ACM} Program. Lang.}}
  \bibinfo{volume}{7}, \bibinfo{number}{{POPL}} (\bibinfo{year}{2023}),
  \bibinfo{pages}{1338--1368}.
\newblock


\bibitem[Del~Moral et~al\mbox{.}(2006)]%
        {doucet-smc-2006}
\bibfield{author}{\bibinfo{person}{Pierre Del~Moral}, \bibinfo{person}{Arnaud
  Doucet}, {and} \bibinfo{person}{Ajay Jasra}.}
  \bibinfo{year}{2006}\natexlab{}.
\newblock \showarticletitle{Sequential {Monte} {Carlo} samplers}.
\newblock \bibinfo{journal}{\emph{J. Royal Statistical Society: Series B
  (Statistical Methodology)}} \bibinfo{volume}{68}, \bibinfo{number}{3}
  (\bibinfo{year}{2006}), \bibinfo{pages}{411--436}.
\newblock


\bibitem[Desharnais et~al\mbox{.}(2002)]%
        {DesharnaisEP02}
\bibfield{author}{\bibinfo{person}{Jos{\'{e}}e Desharnais},
  \bibinfo{person}{Abbas Edalat}, {and} \bibinfo{person}{Prakash Panangaden}.}
  \bibinfo{year}{2002}\natexlab{}.
\newblock \showarticletitle{Bisimulation for Labelled Markov Processes}.
\newblock \bibinfo{journal}{\emph{Inf. Comput.}} \bibinfo{volume}{179},
  \bibinfo{number}{2} (\bibinfo{year}{2002}), \bibinfo{pages}{163--193}.
\newblock


\bibitem[Devroye(2006)]%
        {devroye2006nonuniform}
\bibfield{author}{\bibinfo{person}{Luc Devroye}.}
  \bibinfo{year}{2006}\natexlab{}.
\newblock \showarticletitle{Nonuniform random variate generation}.
\newblock \bibinfo{journal}{\emph{Handbooks in operations research and
  management science}}  \bibinfo{volume}{13} (\bibinfo{year}{2006}),
  \bibinfo{pages}{83--121}.
\newblock


\bibitem[Ehrhard and Geoffroy(2023)]%
        {EhrhardGeoffroy23}
\bibfield{author}{\bibinfo{person}{Thomas Ehrhard} {and}
  \bibinfo{person}{Guillaume Geoffroy}.} \bibinfo{year}{2023}\natexlab{}.
\newblock \bibinfo{booktitle}{\emph{Integration in cones}}.
\newblock \bibinfo{type}{{T}echnical {R}eport}. \bibinfo{institution}{IRIF}.
\newblock


\bibitem[Ehrhard et~al\mbox{.}(2018)]%
        {EhrhardPT18}
\bibfield{author}{\bibinfo{person}{Thomas Ehrhard}, \bibinfo{person}{Michele
  Pagani}, {and} \bibinfo{person}{Christine Tasson}.}
  \bibinfo{year}{2018}\natexlab{}.
\newblock \showarticletitle{Measurable cones and stable, measurable functions:
  a model for probabilistic higher-order programming}.
\newblock \bibinfo{journal}{\emph{Proc. {ACM} Program. Lang.}}
  \bibinfo{volume}{2}, \bibinfo{number}{{POPL}} (\bibinfo{year}{2018}),
  \bibinfo{pages}{59:1--59:28}.
\newblock


\bibitem[Erol et~al\mbox{.}(2017)]%
        {ErolWLR17}
\bibfield{author}{\bibinfo{person}{Yusuf~Bugra Erol}, \bibinfo{person}{Yi Wu},
  \bibinfo{person}{Lei Li}, {and} \bibinfo{person}{Stuart Russell}.}
  \bibinfo{year}{2017}\natexlab{}.
\newblock \showarticletitle{A Nearly-Black-Box Online Algorithm for Joint
  Parameter and State Estimation in Temporal Models}. In
  \bibinfo{booktitle}{\emph{{AAAI}}}.
\newblock


\bibitem[Ge et~al\mbox{.}(2018)]%
        {pmlr-v84-ge18b}
\bibfield{author}{\bibinfo{person}{Hong Ge}, \bibinfo{person}{Kai Xu}, {and}
  \bibinfo{person}{Zoubin Ghahramani}.} \bibinfo{year}{2018}\natexlab{}.
\newblock \showarticletitle{Turing: A Language for Flexible Probabilistic
  Inference}. In \bibinfo{booktitle}{\emph{Proceedings of Machine Learning
  Research}}.
\newblock


\bibitem[Goodman and Stuhlm{\"u}ller(2014)]%
        {goodman_stuhlmuller_2014}
\bibfield{author}{\bibinfo{person}{Noah~D. Goodman} {and}
  \bibinfo{person}{Andreas Stuhlm{\"u}ller}.} \bibinfo{year}{2014}\natexlab{}.
\newblock \bibinfo{title}{The Design and Implementation of Probabilistic
  Programming Languages}.
\newblock
\newblock
\urldef\tempurl%
\url{http://dippl.org}
\showURL{%
\tempurl}


\bibitem[Gorinova et~al\mbox{.}(2019)]%
        {GorinovaGS19}
\bibfield{author}{\bibinfo{person}{Maria~I. Gorinova},
  \bibinfo{person}{Andrew~D. Gordon}, {and} \bibinfo{person}{Charles Sutton}.}
  \bibinfo{year}{2019}\natexlab{}.
\newblock \showarticletitle{Probabilistic programming with densities in
  SlicStan: efficient, flexible, and deterministic}.
\newblock \bibinfo{journal}{\emph{Proc. {ACM} Program. Lang.}}
  \bibinfo{volume}{3}, \bibinfo{number}{{POPL}} (\bibinfo{year}{2019}),
  \bibinfo{pages}{35:1--35:30}.
\newblock


\bibitem[Halbwachs et~al\mbox{.}(1991)]%
        {lustre:ieee91}
\bibfield{author}{\bibinfo{person}{Nicolas Halbwachs}, \bibinfo{person}{Paul
  Caspi}, \bibinfo{person}{Pascal Raymond}, {and} \bibinfo{person}{Daniel
  Pilaud}.} \bibinfo{year}{1991}\natexlab{}.
\newblock \showarticletitle{The Synchronous Dataflow Programming Language {\sc
  Lustre}}.
\newblock \bibinfo{journal}{\emph{Proc. of the {IEEE}}} \bibinfo{volume}{79},
  \bibinfo{number}{9} (\bibinfo{date}{September} \bibinfo{year}{1991}),
  \bibinfo{pages}{1305--1320}.
\newblock


\bibitem[Heunen et~al\mbox{.}(2017)]%
        {HeunenKSY17}
\bibfield{author}{\bibinfo{person}{Chris Heunen}, \bibinfo{person}{Ohad
  Kammar}, \bibinfo{person}{Sam Staton}, {and} \bibinfo{person}{Hongseok
  Yang}.} \bibinfo{year}{2017}\natexlab{}.
\newblock \showarticletitle{A convenient category for higher-order probability
  theory}. In \bibinfo{booktitle}{\emph{{LICS}}}.
\newblock


\bibitem[Hsu(2017)]%
        {Hsu2017Coupling}
\bibfield{author}{\bibinfo{person}{Justin Hsu}.}
  \bibinfo{year}{2017}\natexlab{}.
\newblock \emph{\bibinfo{title}{Probabilistic Couplings for Probabilistic
  Reasoning}}.
\newblock \bibinfo{thesistype}{Ph.\,D. Dissertation}.
  \bibinfo{school}{University of Pennsylvania}.
\newblock


\bibitem[Kozen(1981)]%
        {kozen81}
\bibfield{author}{\bibinfo{person}{Dexter Kozen}.}
  \bibinfo{year}{1981}\natexlab{}.
\newblock \showarticletitle{Semantics of Probabilistic Programs}.
\newblock \bibinfo{journal}{\emph{J. Comput. Syst. Sci.}} \bibinfo{volume}{22},
  \bibinfo{number}{3} (\bibinfo{year}{1981}), \bibinfo{pages}{328--350}.
\newblock


\bibitem[Lago and Gavazzo(2019)]%
        {LagoG19}
\bibfield{author}{\bibinfo{person}{Ugo~Dal Lago} {and}
  \bibinfo{person}{Francesco Gavazzo}.} \bibinfo{year}{2019}\natexlab{}.
\newblock \showarticletitle{On Bisimilarity in Lambda Calculi with Continuous
  Probabilistic Choice}. In \bibinfo{booktitle}{\emph{{MFPS}}}
  \emph{(\bibinfo{series}{Electronic Notes in Theoretical Computer Science},
  Vol.~\bibinfo{volume}{347})}. \bibinfo{publisher}{Elsevier},
  \bibinfo{pages}{121--141}.
\newblock


\bibitem[Larsen and Skou(1989)]%
        {LarsenS89DBisimulation}
\bibfield{author}{\bibinfo{person}{Kim~Guldstrand Larsen} {and}
  \bibinfo{person}{Arne Skou}.} \bibinfo{year}{1989}\natexlab{}.
\newblock \showarticletitle{Bisimulation Through Probabilistic Testing}. In
  \bibinfo{booktitle}{\emph{{POPL}}}. \bibinfo{publisher}{{ACM} Press},
  \bibinfo{pages}{344--352}.
\newblock


\bibitem[Lund{\'{e}}n(2017)]%
        {lunden17}
\bibfield{author}{\bibinfo{person}{Daniel Lund{\'{e}}n}.}
  \bibinfo{year}{2017}\natexlab{}.
\newblock \emph{\bibinfo{title}{Delayed sampling in the probabilistic
  programming language {Anglican}}}.
\newblock \bibinfo{thesistype}{Master's\ thesis}. \bibinfo{school}{KTH Royal
  Institute of Technology}.
\newblock


\bibitem[Montemerlo et~al\mbox{.}(2002)]%
        {Montemerlo02-fastslam}
\bibfield{author}{\bibinfo{person}{Michael Montemerlo},
  \bibinfo{person}{Sebastian Thrun}, \bibinfo{person}{Daphne Koller}, {and}
  \bibinfo{person}{Ben Wegbreit}.} \bibinfo{year}{2002}\natexlab{}.
\newblock \showarticletitle{{FastSLAM}: {A} Factored Solution to the
  Simultaneous Localization and Mapping Problem}. In
  \bibinfo{booktitle}{\emph{{AAAI}}}.
\newblock


\bibitem[Park(1981)]%
        {Park1981}
\bibfield{author}{\bibinfo{person}{David Michael~Ritchie Park}.}
  \bibinfo{year}{1981}\natexlab{}.
\newblock \showarticletitle{Concurrency and Automata on Infinite Sequences}. In
  \bibinfo{booktitle}{\emph{Theoretical Computer Science}}.
\newblock


\bibitem[Pouzet(2021)]%
        {PouzetSynchron21}
\bibfield{author}{\bibinfo{person}{Marc Pouzet}.}
  \bibinfo{year}{2021}\natexlab{}.
\newblock \bibinfo{title}{An Executable Constructive Semantics for Zelus}.
\newblock \bibinfo{howpublished}{Synchron}.
\newblock


\bibitem[Selinger(2004)]%
        {Selinger04}
\bibfield{author}{\bibinfo{person}{Peter Selinger}.}
  \bibinfo{year}{2004}\natexlab{}.
\newblock \showarticletitle{Towards a semantics for higher-order quantum
  computation.}. In \bibinfo{booktitle}{\emph{{QPL}}}.
\newblock


\bibitem[Staton(2017)]%
        {staton17}
\bibfield{author}{\bibinfo{person}{Sam Staton}.}
  \bibinfo{year}{2017}\natexlab{}.
\newblock \showarticletitle{Commutative Semantics for Probabilistic
  Programming}. In \bibinfo{booktitle}{\emph{{ESOP}}}.
\newblock


\bibitem[Staton et~al\mbox{.}(2016)]%
        {StatonYWHK16}
\bibfield{author}{\bibinfo{person}{Sam Staton}, \bibinfo{person}{Hongseok
  Yang}, \bibinfo{person}{Frank~D. Wood}, \bibinfo{person}{Chris Heunen}, {and}
  \bibinfo{person}{Ohad Kammar}.} \bibinfo{year}{2016}\natexlab{}.
\newblock \showarticletitle{Semantics for probabilistic programming:
  higher-order functions, continuous distributions, and soft constraints}. In
  \bibinfo{booktitle}{\emph{{LICS}}}.
\newblock


\end{thebibliography}

\newpage
\appendix
\section{Semantics}

\subsection{Density-based co-iterative semantics}

\begin{figure}
\begin{small}
\[
\begin{array}{l@{\,\,}l@{\,\,\,}l}
\dsemi{e}_\gamma &=& \semi{e}_\gamma, 0\\
\dsems{e}_\gamma(m, []) &=& 
    \letin{m', v = \sems{e}_\gamma(m)} 
    m', v, 1
    \quad \text{if $e$ is deterministic}
\\\\
\dsemi{\zlsample{}{e}}_{\gamma} &=& \letin{m = \semi{e}_\gamma} m, 1\\
\dsems{\zlsample{}{e}}_{\gamma}(m, [r]) &=&
    \letin{m', \mu  = \sems{e}_\gamma(m)} 
    m', \sample{\mu}{r}, 1
\\\\
\dsemi{\zlfactor{e}}_{\gamma} &=&  \letin{m = \semi{e}_\gamma} m, 0\\
\dsems{\zlfactor{e}}_{\gamma}(m, []) &=& 
    \letin{m', v = \sems{e}_\gamma(m)} 
    m', (), v
\\\\
\dsemi{f(e)}_{\gamma} &=& 
    \begin{stack}
        \letin{m_f, p_f = \gamma(\attr{f}{init})}\\
        \letin{m_e, p_e = \dsemi{e}_\gamma}\\
        (m_f, m_e), p_f + p_e 
    \end{stack}\\
\dsems{f(e)}_{\gamma}((m_f, m_e), [r_f:r_e]) &=&
    \begin{stack}
        \letin{m_e', v_e, w_e = \dsems{e}_\gamma(m_e, r_e)}\\
        \letin{m_f', v, w_f = \gamma(\attr{f}{step})(v_e, m_f, r_f)}\\
        (m_f', m_e'), v, w_e * w_f
    \end{stack}
\\\\
\dsemi{\zlpresent{e}{e_1}{e_2}}_{\gamma} &=&
    \begin{stack}
      \letin{m_1, p_1 = \dsemi{e_1}_\gamma}\\
      \letin{m_2, p_2 = \dsemi{e_2}_\gamma}\\
      (\semi{e}_\gamma, m_1, m_2), p_1 + p_2
    \end{stack}\\
\dsems{\zlpresent{e}{e_1}{e_2}}_{\gamma}((m, m_1, m_2), [r_1:r_2]) &=&
    \begin{stack}
      \letin{m', v = \sems{e}_\gamma(m)}\\
      \ifthen{v}
        {\letin{(m_1', v_1, w) = \dsems{e_1}_\gamma(m_1, r_1)\\}
          (m', m_1', m_2), v_1, w}
        {\letin{(m_2', v_2, w) = \dsems{e_2}_\gamma(m_2, r_2)\\}
          (m', m_1, m_2'), v_2, w}
    \end{stack}
\\\\
\dsemi{\zlreset{e_1}{e_2}}_{\gamma} &=&
    \letin{m_1, p = \dsemi{e_1}_\gamma} (m_1, m_1, \semi{e_2}_\gamma), p\\
\dsems{\zlreset{e_1}{e_2}}_{\gamma}((m_0, m_1, m_2), w, r) &=&
    \begin{stack}
      \letin{m_2', v_2 = \sems{e_2}_\gamma(m_2)}\\
      \letin{m_1', v_1, w = \dsems{e_1}_\gamma
       (\mit{if} v_2 \mit{then} m_0 \mit{else} m_1, r)}\\
     (m_0, m_1', m_2'),v_1, w
    \end{stack}
\end{array}
\]
\end{small}
\caption{Density-based co-iterative semantics for ProbZelus probabilistic expressions.}
\label{sem:coit:expr:full}
\end{figure}

\begin{figure}
  \begin{small}
  \[
  \begin{array}{lll}
\dsemi{\zleq{x}{e}}_{\gamma} &=& \dsemi{e}_\gamma\\
\dsems{\zleq{x}{e}}_{\gamma}(m, r) &=&
    \letin{m',v, w = \dsems{e}_\gamma(m, r)}
     m',  [x \is v], w
\\\\
\dsemi{\zlinit{x}{e}}_{\gamma} &=& 
    \letin{m_0, p = \dsemi{e}_\gamma} 
    (\nil, m_0), p\\
\dsems{\zlinit{x}{e}}_{\gamma}((\nil, m_0), r) &=&
    \letin{m', i, w = \dsems{e}_\gamma(m_0, r)} 
    (\gamma(x), m_0), [\attr{x}{last} \is i], w\\
\dsems{\zlinit{x}{e}}_{\gamma}((v, m_0), r) &=&
    (\gamma(x), m_0), [\attr{x}{last} \is v], 1
\\\\
\dsemi{\zland{E_1}{E_2}}_{\gamma} &=& 
    \begin{stack}
        \letin{M_1, p_1 = \dsemi{E_1}_\gamma}\\
        \letin{M_2, p_2 = \dsemi{E_2}_\gamma}\\
        (M_1, M_2), p_1 + p_2
    \end{stack}\\
\dsems{\zland{E_1}{E_2}}_{\gamma}((M_1, M_2), [r_1: r_2]) &=&
    \begin{stack}
        \letin{M_1', \rho_1, w_1 = \dsems{E_1}_\gamma(M_1, r_1)}\\
        \letin{M_2', \rho_2, w_2 = \dsems{E_2}_{\gamma}(M_2, r_2)}\\
        (M_1', M_2'), \rho_1 + \rho_2, w_1 * w_2
    \end{stack}
\\\\
\dsemi{\zlwhere{e}{E}}_{\gamma} &=& 
    \begin{stack}
        \letin{m, p_e = \dsemi{e}_\gamma}\\
        \letin{M, p_E = \dsemi{E}_\gamma}\\ 
        (m, M), p_e + p_E
    \end{stack}\\
\dsems{\zlwhere{e}{E}}_{\gamma}((m, M), [r_e : r_E]) &=& 
    \begin{stack}
        \letin{F(\rho) = \left(
            \letin{M', \rho, w = \dsem{E}_{\gamma + \rho}(M, r_E)} \rho \right)}\\   
        \letin{\rho = \fix{F}}\\
        \letin{M', \rho, W = \dsems{E}_{\gamma + \rho}(M, r_E)}\\
        \letin{m', v, w = \dsems{e}_{\gamma'}(m, r_e)}\\
        (m', M'), v, w * W
    \end{stack}
\end{array}
\]
\end{small}
\caption{Density-based co-iterative semantics for ProbZelus equations.}
\label{sem:coit:eq:full}
\end{figure}

The density-based co-iterative semantics is presented in \Cref{sec:coit}.
\Cref{sem:coit:expr:full,sem:coit:eq:full} presents the full semantics for expressions and equations. The additional rules are for \zl{present} and \zl{reset}.

The initialization of $\zlpresent{e}{e_1}{e_2}$ allocates memory for~$e$, $e_1$ and~$e_2$ and count the number of random variables in~$e_1$ and~$e_2$~($e$ is deterministic and does not have any random variable).
The step function first executes~$e$ and depending on its value executes~$e_1$ or~$e_2$.
The initialization of~$\zlreset{e_1}{e_2}$ duplicates the memory needed to execute~$e_1$. That way, in the step function, only the second copy is updated by the transition and if~$e_1$ is reset, the execution restarts from the initial memory state. 

\subsection{Density-based relational semantics}

\paragraph{Stream functions}
The density-based relational semantics is presented in \Cref{sec:rel}.
The definition of this semantics relies on a few stream functions.

\[
\begin{array}{rcl}
\tlOp &:& \Stream{A} \to \Stream{A}\\
\tl{(a \cdot as)} &=& as\\
\\
\mapOp &:& (A \to B) \to (\Stream{A} \to \Stream{B})\\
\map{f}{(a \cdot as)} &=& f(a) \cdot (\map{f}{as})\\
\\
\mergeOp &:& \Stream{\mathbb{B}} \to \Stream{A} \to \Stream{A} \to \Stream{A}\\
\merge{(T \cdot cs)}{(a \cdot as)}{bs} &=& a \cdot (\merge{cs}{as}{bs})\\
\merge{(F \cdot cs)}{as}{(b \cdot bs)} &=& b \cdot (\merge{cs}{as}{bs})\\
\\
\whenOp &:& \Stream{\mathbb{A}} \to \Stream{\mathbb{B}} \to \Stream{A} \\
\when{(a \cdot as)}{(T \cdot cs)} &=& a \cdot (\when{as}{cs})\\
\when{(a \cdot as)}{(F \cdot cs)} &=& \when{as}{cs}\\
\\
\slicerOp &:& \Stream{(\Stream{A})} \to \Stream{\mathbb{B}} \to \Stream{A}\\
\slicer{((a \cdot as) \cdot bs \cdot ss)}{(F \cdot cs)} &=&
    a \cdot (\slicer{(as \cdot ss)}{cs})\\
\slicer{(as \cdot (b \cdot bs) \cdot ss)}{(T \cdot cs)} &=&
    b \cdot (\slicer{(bs \cdot ss)}{cs})\\
\end{array}
\]

\noindent
The function $\tlOp$ drops the first element of a stream~($\tlOp^n$ drops the~$n$ first elements).
The function $\map{f}{s}$ applies~$f$ to each element of the stream~$s$.
$\merge{cs}{as}{bs}$ merges the streams~$as$ and~$bs$ according to the condition~$cs$.
$\when{as}{cs}$ keeps the values of~$as$ only when the condition~$cs$ is true.
The function $\slicer{ss}{cs}$ is used to define the semantics of $\zlreset{e_1}{e_2}$. The first argument~$ss$ is a stream of streams where each stream represents~$e_1$ restarted at each time step, and~$cs$ the the reset condition.
When the condition is false, the first value of the first stream of~$ss$ is returned and the second stream of~$ss$ is discarded.
It means that we progress by one step in~$e_1$ and the stream representing~$e_1$ restarted at the current iteration is not useful since the expression was not reset.
When the condition is true, the first stream of~$ss$ which represents the current state of~$e_1$ is discarded and the execution restarts with the first value of the second stream of~$ss$ which represents~$e_1$ restarted at the current step.

\paragraph{Environment}
An \emph{environment} $H$ is a map from variable names to streams of values, for any bound variable $x \in \dom{H}$, $H(x): \Stream{A}$. 
When the context is clear, we write $f \ H$ for $\map{f}{H}$, e.g., for all $x \in \dom{H}$, $(\tl{H})(x) = \tl{(H(x))}$.

\begin{figure}
\begin{small}
\begin{mathpar}
\inferrule
    {}
    {G, H \vdash c \downarrow c}
\and
\inferrule
    {}
    {G, H \vdash x \downarrow H(x)}
\and
\inferrule
    {x \notin H}
    {G, H \vdash x \downarrow G(x)}
\and
\inferrule
    {G, H \vdash e_1 \downarrow s_1\\ 
     G, H \vdash e_2 \downarrow s_2}
    {G, H \vdash \zlpair{e_1}{e_2} \downarrow (s_1, s_2)}
\and
\inferrule
    {G, H \vdash e \downarrow s}
    {G, H \vdash \zlop{e} \downarrow \op(s)}
\and
\inferrule
    {H(\attr{x}{last}) = s \\ }
    {G, H \vdash \zllast{x} \downarrow s}
\and
\inferrule
    {G, H \vdash e \downarrow s_e\\
     G(f) = \zlnode{f}{x}{e_f} \\
     G, [x \is s_e] \vdash e_f \downarrow s}
    {G, H \vdash f(e) \downarrow s}
\and
\inferrule
    {G, H + H_E \vdash E \\ 
        G, H + H_E \vdash e \downarrow s}
    {G, H \vdash \zlwhere{e}{E} \downarrow s}
\and
\inferrule
    {G, H \vdash e \downarrow s_c \\ 
        G, (\when{H}{s_c})\vdash e_1 \downarrow s_1 \\
        G,  (\when{H}{\mathit{not}\ s_c}) \vdash e_2 \downarrow s_2}
    {G, H \vdash \zlpresent{e}{e_1}{e_2} \downarrow \merge{s_c}{s_1}{s_2}}
\and
\inferrule
    {\left[ G, (\tl^n \ H) \vdash e_1 \downarrow s_n \right]_{n \in \mathbb{N}} \\
     G, H \vdash e_2 \downarrow s_c}
    {G, H \vdash \zlreset{e_1}{e_2} \downarrow \slicer{(s_0 \cdot s_0 \cdot s_1 \cdot s_2 \cdot ...)}{s_c}}
\and
\inferrule
    {G, H \vdash e \downarrow H(x)}
    {G, H \vdash \zleq{x}{e}}
\and
\inferrule
    {G, H  \vdash e \downarrow v_i \cdot s_i \\ 
        H(\attr{x}{last}) = v_i \cdot H(x)}
    {G, H  \vdash \zlinit{x}{e}}
\and
\inferrule
    {G, H \vdash E_1 \\
     G, H \vdash E_2}
    {G, H \vdash \zland{E_1}{E_2}}
\end{mathpar}
\end{small}
\caption{Deterministic relational semantics.}
\label{fig:sem:rel:det:full}
\end{figure}

\begin{figure}
  \begin{small}
  \begin{mathpar}
  \inferrule
      {}
      {G, H, [] \vdash c \Downarrow (c, 1)}
  \and
  \inferrule
      {}
      {G, H, [] \vdash x \Downarrow (H(x), 1)}
  \and
  \inferrule
      {x \not\in H}
      {G, H, [] \vdash x \Downarrow (G(x), 1)}
  \and
  \inferrule
      {G, H, R_e \vdash e \downarrow (s_e, w_e)\\
       G(f) = \zlproba{f}{x}{e_f} \\ 
       G, [x \is s_e], R_f \vdash e_f \Downarrow (s, w)}
      {G, H, [R_e:R_f] \vdash \zlapp{f}{e} \Downarrow (s, w * w_e)}
  \and
  \inferrule
      {G, H + H_E, R_E \vdash E : w_E \\ 
          G, H + H_E, R_e \vdash e \Downarrow (s, w)}
      {G, H, [R_e:R_E] \vdash \zlwhere{e}{E} \Downarrow (s, w * w_E)}
  \and
  \inferrule
      {G, H, R_e \vdash e \downarrow s_c \\ 
          G, (\when{H, R_1}{s_c})  \vdash e_1 \Downarrow sw_1 \\
          G, (\when{H, R_2}{\mathit{not}\ s_c}) \vdash e_2 \Downarrow sw_2}
      {G, H, [R_1:R_2] \vdash \zlpresent{e}{e_1}{e_2} \Downarrow \merge{s_c}{sw_1}{sw_2}}
  \and
  \inferrule
      {\left[ G, (\tl^n\ H, R_1) \vdash e_1 \Downarrow sw_n \right]_{n \in \mathbb{N}} \\
       G, H, R_2 \vdash e_2 \downarrow s_c\\
       (s, w) = \slicer{(sw_0 \cdot sw_0 \cdot sw_1 \cdot sw_2 \cdot ...)}{s_c}}
      {G, H, [R_1:R_2] \vdash \zlreset{e_1}{e_2} \Downarrow (s, w)}
  \and
  \inferrule
      {G, H \vdash e \downarrow s_\mu}
      {G, H, [R] \vdash \zlsample{\alpha}{e} \Downarrow (\sample{s_\mu}{R}, 1)}
  \and
  \inferrule
      {G, H \vdash e \downarrow w}
      {G, H, [] \vdash \zlfactor{e} \Downarrow ((), w)}
  \and
  \inferrule
      {G, H, R \vdash e \Downarrow (H(x), w)}
      {G, H, R \vdash \zleq{x}{e} : w}
  \and
  \inferrule
      {G, H, R \vdash e \Downarrow (i \cdot s, w_i \cdot w) \\ 
          H(\attr{x}{last}) = i \cdot H(x)}
      {G, H, R \vdash \zlinit{x}{e} : w_i \cdot 1}
  \and
  \inferrule
      {G, H, R_1 \vdash E_1 : w_1 \\
       G, H, R_2 \vdash E_2 : w_2}
      {G, H, [R_1:R_2] \vdash \zland{E_1}{E_2} : w_1 * w_2}
  \and
  \inferrule
      { p = \RV(e)\\
        \left[G, H, R \vdash e \Downarrow (s, w) \qquad \overline{w} = \Pi\ w \right]_{R \in ([0,1]^\omega)^{p}}}
      {G, H \vdash \zlinfer{e} \downarrow \integ{p}{\overline{w}}{s}}
  \end{mathpar}
  \end{small}
  \caption{Probabilistic relational semantics.}
  \label{fig:sem:rel:prob:full}
  \end{figure}

\paragraph{Relational semantics}
The full deterministic and probabilistic density-based relational semantics including the rules for \zl{present} and \zl{reset} are given in \Cref{fig:sem:rel:det:full,fig:sem:rel:prob:full}.
The semantics of $\zlpresent{e}{e_1}{e_2}$ uses the $\whenOp$ function on the environment~$H$ such that the execution of~$e_1$ and~$e_2$ respectively progress only when the condition is true or false. Then the value of these two streams are merged using the $\mergeOp$ function.
The semantics of $\zlreset{e_1}{e_2}$ is based on the $\slicerOp$ function. $\left[ G, (\tl^n \ H) \vdash e_1 \downarrow s_n \right]_{n \in \mathbb{N}}$ represents the stream of streams where~$s_n$ is the stream of values computed by~$e_1$ restarted at time step~$n$. 
In the slicer, the stream~$s_0$ is duplicated because $\zlreset{e_1}{e_2}$ returns the same value whether or not $e_2$ is true at the initial step.

\section{Assumed Parameters Filtering}
\label{app:apf}

\subsection{Algorithm}
\label{app:apf:algo}

The inference methods proposed by ProbZelus~\cite{rppl-short,zlax_lctes22,ss-oopsla22} belong to the family of SMC algorithms.
These methods rely on a set of independent simulations, called \emph{particles}.
Each particle returns an output value associated with a score.
The score represents the quality of the simulation.
A large number of particles makes it possible to approximate the desired distribution.

More concretely, the \zl{sample(d)} construct randomly draws a value from the \zl{d} distribution, and the \zl{factor(x)} construct multiplies the current score of the particle by~\zl{x}.
At each instant, the \zl{infer} operator accumulates the values calculated by each particle weighted by their scores to approximate the posterior distribution.

If the model calls on the operator \zl{sample} at each instant, for example to estimate the position of the boat in the radar example~(\Cref{sec:example}), the previous method implements a simple random walk for each particle.
As time progresses, it becomes increasingly unlikely that one of the random walks will coincide with the stream of observations.
The score associated with each particle quickly goes down towards~$0$.

\begin{algorithm}[t]
\KwData{probablisitic model $\ttf{model}$, observation $y_t$, and previous result $\mu_{t-1}$.}
\KwResult{$\mu_t$ an approximation of the distribution of $p_t$.}
\BlankLine
  \For{each particle $i = 1$ \KwTo $N$}
  { 
    $p_{t-1}^i = \zlsample{\mu_{t-1}}$\\
    $p_t^i, w_t^i = \ttf{model(} y_t \mid p_{t-1}^i \ttf{)}$\\
  }
  $\mu_t = \mathcal{M}(\{w_t^i, p_t^i\}_{1 \leq i \leq N})$\\
  \Return $\mu_t$\\
\BlankLine
\BlankLine
\caption{Particle Filter.}
\label{algo:pf}
\end{algorithm}

To solve this issue, sequential Monte Carlo methods~(SMC) add a filtering step.
\Cref{algo:pf} describes the execution of one instant for a particle filter, the most basic SMC algorithm.
At each instant $t$, a particle $1 \leq i \leq N$ corresponds to a possible value of the parameters~(\textit{i.e.}, random variables) $p_{t}^i$ of the model.
We begin by sampling a new set of particles in the distribution obtained at the previous step.
The most probable particles are thus duplicated and the less probable ones are eliminated.
This refocuses the inference around the most significant information while maintaining the same number of particles throughout the execution.
Knowing the previous state~$p_{t-1}^i$, each particle then executes a step of the model to obtain a sample of the parameters~$p_t^i$ associated with a score~$w_t^i$.
At the end of the instant, we construct a distribution $\mu_t$ where each particle is associated with its score.
$\mathcal{M}(\{w_t^i, p_t^i\}_{1 \leq i \leq N})$ is a multinomial distribution, where the value $p_t^i$ is associated with the probability~$w^i_t / \sum_{i=1}^N w^i_t$.

Unfortunately, this approach generates a loss of information for the estimation of constant parameters.
On our radar example, at the first instant, each particle draws a random value for the parameter \zl{theta}.
At each instant, the duplicated particles share the same value for \zl{theta}.
The quantity of information useful for estimating \zl{theta} therefore decreases with each new filtering and, after a certain time, only one possible value remains.

Rather than sampling at the start of execution a set of values for the constant parameters that will impoverish with each filtering, in the APF algorithm, each particle computes a symbolic distribution of constant parameters.
At runtime, the inference then alternates between a sampling pass to estimate the state parameters, and an optimization pass which updates the constant parameters.
This avoids impoverishment for the estimation of the constant parameters.

\begin{algorithm}[t]
\KwData{probabilistic model $\ttf{model}$, observation $y_t$, and previous result $\mu_{t-1}$.}
\KwResult{$\mu_t$ an approximation of the distributions of state parameter $x_t$ and constant parameter $\theta$.}
\BlankLine
\For{each particle $i = 1$ \KwTo $N$}
{ 
  $x_{t-1}^i,\Theta_{t-1}^i  = \zlsample{\mu_{t-1}}$\\
  $\theta^i = \zlsample{\Theta_{t-1}^i}$\\
  $x_t^i, w_t^i = \ttf{model(} y_t \mid \theta^i, x_{t-1}^i \ttf{)}$\\
  $\Theta_{t}^i = \mathit{Udpate}(\Theta_{t-1}^i, \fun{\theta} \ttf{model(} y_t \mid \theta, x_{t-1}^i, x_t^i\ttf{)})$\\  
}
$\mu_t = \mathcal{M}(\{w_t^i, (x_t^i, \Theta_{t}^i)\}_{1 \leq i \leq N})$\\
\Return $\mu_t$\\
\BlankLine
\BlankLine
\caption{\emph{Assumed Parameter Filter}~\cite{ErolWLR17}.}
\label{algo:apf}
\end{algorithm}

More formally, \Cref{algo:apf} describes the execution of one step of APF.
At each instant~$t$, a particle $1 \leq i \leq N$ corresponds to a possible value of the state parameters $x_{t}^i$ and a distribution of constant parameters $\Theta_{t}^i$.
As for the particle filter, we begin by sampling a set of particles in the distribution obtained at the previous instant.
We then sample a value $\theta^i$ in $\Theta_{t-1}^i$.
Knowing the value of the constant parameters~$\theta^i$ and the previous state~$x_{t-1}^i$, we can execute a step of the model to obtain a sample of the state parameters $x^i_t$ associated with a score $w^i_t$.
We can then update $\Theta_{t}^i$ by exploring the other possible values for $\theta$ knowing that the particle has chosen the transition $x_{t-1}^i \rightarrow x_t^i$.
At the end of the instant, we construct a distribution $\mu_t$ where each particle is associated with its score. $\mathcal{M}(\{w_t^i, (x_t^i, \Theta_{t}^i)\}_{1 \leq i \leq N})$ is a multinomial distribution where the pair of values~$(x_t^i, \Theta_{t}^i)$ is associated to the probability~$w^i_t / \sum_{i=1}^N w^i_t$.

\subsection{Static Analysis}
\label{app:apf:typing}

\begin{figure}[t]
\begin{small}
\begin{mathpar}
\inferrule
    {\vdashl{C}{e}}
    {\vdashc{\apfenv, C}{\zldef{x}{e}}{\apfenv, C + \{x\}}}

\inferrule
    { }
    {\vdashc{\apfenv, C}{\zlnode{f}{x}{e}}{\apfenv + \{f \is \emptyset\}, C}}

\inferrule
    {\vdashc{C}{e}{\apftype}}
    {\vdashc{\apfenv, C}{\zlproba{f}{x}{e}}{\apfenv + \{f \is \apftype\}, C}}

\inferrule
    {\vdashc{\apfenv, C}{d_1}{\apfenv_1, C_1} \and \vdashc{\apfenv_1,C_1}{d_2}{\apfenv', C'}}
    {\vdashc{\apfenv, C}{d_1\ d_2}{\apfenv', C'}}

\\

\inferrule
	{ }
	{\vdashc{C}{c}{\emptyset}}

\inferrule
	{ }
	{\vdashc{C}{x}{\emptyset}}

\inferrule
	{\vdashc{C}{e_1}{\apftype_1}
     \and
     \vdashc{C}{e_2}{\apftype_2}}
	{\vdashc{C}{\zlpair{e_1}{e_2}}{\apftype_1 + \apftype_2}}

\inferrule
	{\vdashc{C}{e}{\apftype}}
	{\vdashc{C}{\zlop{e}}{\apftype}}

\inferrule
	{\vdashc{C}{e}{\apftype}}
	{\vdashc{C}{\zlsample{}{e}}{\apftype}}

\inferrule
	{\vdashc{C}{e}{\apftype}}
	{\vdashc{C}{\zlfactor{e}}{\apftype}}

\inferrule
	{ }
	{\vdashc{C}{\zllast{x}}{\emptyset}}

\inferrule
	{\vdashc{C}{e_1}{\apftype}}
	{\vdashc{C}{\zlpresent{e_1}{e_2}{e_3}}{\apftype}}

\inferrule
	{\vdashc{C}{e_2}{\apftype}}
	{\vdashc{C}{\zlreset{e_1}{e_2}}{\apftype}}

\inferrule
	{ }
	{\vdashc{C}{\zlapp{f_\theta}{e}}{\{\theta \is \zlfprior{f}\}}}

\inferrule
    {\vdashc{C}{e}{\apftype_e}\\
     \vdashleq{C}{E}{D}\\
     \vdashceq{C}{D}{E}{\apftype_E}}
    {\vdashc{C}{\zlwhere{e}{E}}{\apftype_e + \apftype_E}}

\inferrule
    {x \in D\\
     \vdashl{C}{e}}
    {\vdashceq{C}{D}{\zlisample{x}{}{e}}{\{x \is e\}}}

\inferrule
    {\vdashc{C}{e}{\apftype}}
    {\vdashceq{C}{D}{\zlinit{x}{e}}{\apftype}}

\inferrule
    {\vdashc{C}{e}{\apftype}}
    {\vdashceq{C}{D}{\zleq{x}{e}}{\apftype}}

\inferrule
    {\vdashceq{C}{D}{E_1}{\apftype_1}\\
     \vdashceq{C}{D}{E_2}{\apftype_2}}
    {\vdashceq{C}{D}{\zland{E_1}{E_2}}{\apftype_1 + \apftype_2}}
\end{mathpar}
\end{small}
\caption{Extract constant parameters and associated prior distributions.}
\label{fig:apf_sa-full}
\end{figure}

\begin{figure}[t]
\begin{small}
\begin{mathpar}
\inferrule
    { }
    {\vdashl{C}{c}}

\inferrule
    {x \in C}
    {\vdashl{C}{x}}

\inferrule
    {\vdashl{C}{e_1} \\ \vdashl{C}{e_2}}
    {\vdashl{C}{\zlpair{e_1}{e_2}}}

\inferrule
    {\vdashl{C}{e}}
    {\vdashl{C}{\zlop{e}}}

\inferrule
    {\vdashl{C + \dom{E}}{e}\\
     \vdashleq{C}{E}{\dom{E}}}
    {\vdashl{C}{\zlwhere{e}{E}}}

\inferrule
    { }
    {\vdashleq{C}{\zlinit{x}{e}}{\emptyset}}

\inferrule
    { }
    {\vdashleq{C}{\zleq{x}{\zllast{x}}}{\{x\}}}

\inferrule
    {\vdashl{C}{e}}
    {\vdashleq{C}{\zleq{x}{e}}{\{x\}}}

\inferrule
    {\notvdashl{C}{e}}
    {\vdashleq{C}{\zleq{x}{e}}{\emptyset}}

\inferrule
    {\vdashleq{C + C_2}{E_1}{C_1}\\
     \vdashleq{C + C_1}{E_2}{C_2}}
    {\vdashleq{C}{\zland{E_1}{E_2}}{C_1 + C_2}}
\end{mathpar}
\end{small}
\caption{Constant expressions and equations.}
\label{fig:apf_const-full}
\end{figure}

\Cref{fig:apf_const-full} presents the auxiliary type system identifying the constant expressions and extracting the names of the constant variables.
An expression is constant if it is a constant value~$c$, a variable referring to a constant stream~($x \in C$), or an expression whose all sub-expressions are constants.
The judgement $\vdashleq{C}{E}{C'}$ extracts all the name of all the streams of the set of equations~$E$ that are constant. So if $\vdashleq{C}{E}{\dom{E}}$, all the streams of~$E$ are constants.
An equation defines a constant streams if it is defined by a constant expression or by the equation~$\zleq{x}{\zllast{x}}$.
The full type system presented in \Cref{sec:apf:typing} is given in \Cref{fig:apf_sa-full}.

\subsection{Compilation}
\label{app:apf:compilation}

\begin{figure}
\begin{small}
\begin{array}[t]{@{}lcl@{}}
\compile{\apftype}{c} &=& c\\[0.2em]
\compile{\apftype}{x} &=& x\\[0.2em]
\compile{\apftype}{\zlpair{e_1}{e_2}} &=& \zlpair{\compile{\apftype}{e_1}}{\compile{\apftype}{e_2}}\\[0.2em]
\compile{\apftype}{\zlop{e}} &=& \zlop{\compile{\apftype}{e}}\\[0.2em]
\compile{\apftype}{\zllast{x}} &=& \zllast{x} \\[0.2em]
\compile{\apftype}{\zlpresent{e_1}{e_2}{e_3}} &=& \zlpresent{\compile{\apftype}{e_1}}{\compile{\apftype}{e_2}}{\compile{\apftype}{e_3}}\\[0.2em]
\compile{\apftype}{\zlreset{e_1}{e_2}} &=& \zlreset{\compile{\apftype}{e_1}}{\compile{\apftype}{e_2}}\\[0.2em]
\compile{\apftype}{\zlsample{}{e}} &=& \zlsample{}{\compile{\apftype}{e}}\\[0.2em]
\compile{\apftype}{\zlfactor{e}} &=& \zlfactor{\compile{\apftype}{e}}\\[0.2em]
\compile{\apftype}{\zlwhere{e}{E}} &=& \zlwhere{\compile{\apftype}{e}}{\compile{\apftype}{E}}\\[0.5em]

\compile{\apftype}{\zlinit{x}{e}} &=& 
    \left \{
    \begin{array}{@{}ll@{}}
        \emptyset & \text{if $x \in \dom{\apftype}$}\\
        \zlinit{x}{\compile{\apftype}{e}} & \text{otherwise}
    \end{array}
    \right .
\\[1em]
\compile{\apftype}{\zleq{x}{e}} &=& 
    \left \{
    \begin{array}{@{}ll@{}}
        \emptyset & \text{if $x \in \dom{\apftype}$}\\
        \zleq{x}{\compile{\apftype}{e}} & \text{otherwise}
    \end{array}
    \right .
\\[1em]

\compile{\apftype}{\zlapp{f_\theta}{e}} &=& 
  \left \{
  \begin{array}{@{}ll@{}}
    \zlapp{f}{\compile{\apftype}{e}} &\text{if $f$ is deterministic}\\
    \zlapp{\zlfmodel{f}}{\theta, \compile{\apftype}{e}} &\text{if $\theta \in \dom{\apftype$}}\\
    \zlapp{\zlfmodel{f}}{\theta, \compile{\apftype}{e}} 
    \ \kwf{where} &\text{otherwise} \\ 
    \quad 
    \kwf{rec} \  \zlinit{\theta}{\zlsample{}{\zlfprior{f}}} \\
    \quad 
    \kwf{and} \  \zleq{\theta}{\zllast{\theta}} \\
  \end{array}
  \right .\\[3em]

\compile{\apftype}{\zlinfer{\zlapp{f}{e}}} &=& \zlapfinfer{}{}{\zlfmodel{f}}{\zlfprior{f}}{\compile{\apftype}{e}}\\[1em]

\compile{\apftype}{\zland{E_1}{E_2}} &=& \zland{\compile{\apftype}{E_1}}{\compile{\apftype}{E_2}}\\[1em]

\compile{\apfenv}{\zldef{x}{e}} &=& \zldef{x}{e}\\[0.2em]
\compile{\apfenv}{\zlnode{f}{x}{e}} &=& \zlnode{f}{x}{\compile{\emptyset}{e}}\\[0.2em]
\compile{\apfenv}{\zlproba{f}{x}{e}} &=& 
  \begin{array}[t]{@{}l@{\quad}l}
  \zldef{\zlfprior{f}}{\im{\apftype}}
  & \text{with } \apftype = \apfenv(f)\\
  \zlproba{\zlfmodel{f}}{\zlpair{\dom{\apftype}}{x}}{\compile{\apftype}{e}}
  \end{array}
\end{array}
\end{small}
\caption{APF Compilation.}
\label{fig:apf:compilation-full}
\end{figure}  

The entire compilation function to transformation a ProbZelus model into a model compatible with \zl{APF.infer} is given in \Cref{fig:apf:compilation-full}.
Most cases simply call the compilation functions on all sub-expressions.
The interesting cases are presented in \Cref{sec:apf:compilation}.

\end{document}